\begin{document}

\newtheorem{theorem}{Theorem}[section]
\newtheorem{lemma}[theorem]{Lemma}
\newtheorem{proposition}[theorem]{Proposition}
\newtheorem{corollary}[theorem]{Corollary}
\newtheorem{definition}[theorem]{Definition}

\newenvironment{proof}[1][Proof]{\begin{trivlist}
\item[\hskip \labelsep {\bfseries #1}]}{\end{trivlist}}
\newenvironment{example}[1][Example]{\begin{trivlist}
\item[\hskip \labelsep {\bfseries #1}]}{\end{trivlist}}
\newenvironment{remark}[1][Remark]{\begin{trivlist}
\item[\hskip \labelsep {\bfseries #1}]}{\end{trivlist}}

\newcommand{\qed}{\nobreak \ifvmode \relax \else
      \ifdim\lastskip<1.5em \hskip-\lastskip
      \hskip1.5em plus0em minus0.5em \fi \nobreak
      \vrule height0.75em width0.5em depth0.25em\fi}


\title{Classical simulation of measurement-based quantum computation on higher-genus surface-code states}


\author{Leonard Goff}
\author{Robert Raussendorf}
\affiliation{Department of Physics and Astronomy, University of British Columbia, Vancouver, British Columbia V6T 1Z1, Canada}


\date{\today}

\begin{abstract}
We consider the efficiency of classically simulating measurement-based quantum computation on surface-code states. We devise a method for calculating the elements of the probability distribution for the classical output of the quantum computation. The operational cost of this method is polynomial in the size of the surface-code state, but in the worst case scales as $2^{2g}$ in the genus $g$ of the surface embedding the code. However, there are states in the code space for which the simulation becomes efficient. In general, the simulation cost is exponential in the entanglement contained in a certain effective state, capturing the encoded state, the encoding and the local post-measurement states. The same efficiencies hold, with additional assumptions on the temporal order of measurements and on the tessellations of the code surfaces, for the harder task of sampling from the distribution of the computational output.
\end{abstract}

\pacs{}

\maketitle


\section{Introduction}

A major open problem in quantum computation is to
determine the physical properties of quantum systems
that account for the quantum speedup
over classical computation. This would aid in the development
of useful quantum computational systems, and
constitute a significant leap forward in our understanding
of quantum physics.

One approach to studying this problem is to find instances of quantum computational processes that can be simulated efficiently on a classical computer, and identify which quantum mechanical properties they lack. There are three known examples in this category; namely quantum circuits composed of Clifford gates \cite{gottesman}, matchgate circuits \cite{matchgates}, \cite{valiant} (which can be mapped to non-interacting fermions \cite{terhald}), and quantum evolutions in which the entanglement--as quantified by an appropriate monotone--always remains small \cite{vidalslight}, \cite{JozLi}. 

Specifically, it was shown in \cite{vidalslight} that any circuit model quantum computation can be classically simulated with a number of steps that grows polynomially in the number of qubits, but exponentially in an entanglement measure $\chi$. Therein, $\chi$ is the log of the maximum value of the Schmidt rank across any bipartition of the set of qubits, at any point of the computation. This result has counterparts in measurement-based quantum computation (MBQC) \cite{Shi}, \cite{vdnetal}. However, such results relating the amount of entanglement present in a quantum system to the hardness of its classical simulation need to be taken with a grain of salt: they do not hold for all entanglement measures. Specifically, they do not hold for sufficiently continuous entanglement measures \cite{vdN12}. Also note that quantum states can be too entangled to be useful for MBQC \cite{tooentangled}, \cite{Winter}.

In this paper we describe a classical simulation method for quantum systems that combines the fermionic or matchgate method with that for slightly entangled quantum systems. To this end, we consider the classical simulation of MBQC where the initial resource state is a state in the code space of the surface code. The originally intended application for surface codes is fault-tolerant quantum computation in two-dimensional local architectures with constrained interaction range \cite{kitaev}, \cite{kitaevqm}. Regarding the potential use of surface-code states as resources in MBQC, it was previously shown that for such states with a planar topology the resulting quantum computation can be efficiently classically simulated \cite{vdnspinmodels},\cite{rausstoric}. 

Here, we extend this investigation to surface-codes embedded in surfaces of higher genus. This problem is related to, but not the same as matchgate contraction \cite{bravyitensor} and computing the Ising model partition function \cite{nonplanarl} on higher genus graphs. We focus initially on the computation of the probability of obtaining any single sequence of MBQC measurement outcomes, starting from a surface-code state. Our results are that: (1) In the worst case this can be done with a cost that scales polynomially in the size of the resource, but exponentially in the genus. (2) For any genus $g$ the code space has a basis such that for each basis state the computation is efficient, and (3) There exists an effective state $|\Phi\rangle$ constructed out of the code, the encoded state and the post-measurement unentangled state such that the cost of classically simulating MBQC is exponential in the entanglement of $|\Phi\rangle$. By specializing to a specific family of higher genus graphs and ordering of measurements, we are able to extend these efficiencies to the harder task of sampling from the probability distribution over MBQC outcomes.

The remainder of this paper is organized as follows. In Section II, we define the surface code on tessellations of surfaces of genus $g$. In Section III, we introduce the notions of classical simulation to be used in this paper. In Section IV, we present a method for pointwise evaluating the output distribution of MBQC. In Section V we discuss the efficiency of evaluating partial measurement probabilities, in order to efficiently sample from the output distribution. We conclude in Section VI.

\section{The Surface Code} \label{surfcodesec}
\subsection{Definition}

To define the surface code, we first introduce the notion of a graph embedded on a surface. See \cite{graphsonsurfaces} for a detailed introduction.  In this paper, we consider closed, orientable surfaces $S$ of genus $g$. Given a graph $G=(V,E)$, we say that $G$ is \textit{embedded on} $S$ when $G$ is drawn on $S$ with no edge crossings.  The surface $S$ (minus the image of the embedding) is partitioned by the graph into disjoint regions called \textit{faces}, which are separated from one another by the curves representing edges of $G$. The set of faces is denoted as $F$, and for any $f\in F$, $\partial f$ denotes the \textit{boundary} of $f$, which is the set of edges which separate $f$ from other faces. For any vertex $v\in V$, we let $\delta v$ denote the set of edges that are incident upon $v$ in $G$.  We consider here so-called \textit{cellular embeddings}, which have the property that each face is homeomorphic to an open disk.  For a graph $G$ cellularly embedded on a closed, orientable surface of genus $g$, Euler's formula holds: $|E|-|V|-|F|=2-2g$. When using the term \textit{graph}, we allow for self-loops and redundant edges (what some authors call a \textit{multigraph}), unless explicitly stated otherwise.

Consider a graph $G$ cellularly embedded on an orientable surface $S$: $G=(V,E,F)$, where $G$ is connected. We associate a qubit with each edge $e\in E$.  The surface code is a stabilizer code with stabilizer generators\footnote{Here we use X Pauli operators for the faces and Z for the vertices (as in \cite{rausstoric}), rather than Z operators for the faces and X for the vertices as in most treatments of the surface code. This choice simplifies our discussion. The two code spaces are equivalent up to a global Hadamard transformation.}:
\begin{eqnarray*}
A_v := \prod_{e\in \delta v}Z_e \hspace{.2 in} \forall v \in V,\\
B_f := \prod_{e\in \partial f}X_e \hspace{.2 in} \forall f \in F.
\end{eqnarray*}
\begin{figure}
\includegraphics[width=1.5in]{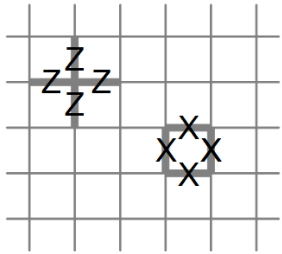}
\caption{\label{fig:stabilizers}The stabilizer group of the surface code is generated by an X-type operator around the boundary of every face, and a Z-type operator on the set of edges incident on every vertex.}
\end{figure}
The code space $\mathcal{CS}$ is defined as the joint +1 eigenspace of all of the stabilizer generators
$$ \mathcal{CS} := \{|\psi\rangle: A_v|\psi\rangle = B_f|\psi\rangle = |\psi\rangle \hspace{.2 in} \forall v \in V, f \in F\}.$$

The stabilizers all commute, because for any $v\in V$ and $f\in F$, $\delta v$ and $\partial f$ always have an even number of edges in common. If $G$ contains any self-loops, then the corresponding edge qubit will be disentangled from the rest for any state $|\psi\rangle \in \mathcal{CS}$, and in the $+1$ X eigenstate. We neglect any such qubit and assume that $G$ contains no self-loops.

For each of the two types of stabilizer generator, any single one can be written as a product of all of the others.  Thus, there are $|V|+|F|-2$ independent, commuting stabilizer generators.  It follows from Euler's formula and the theory of stabilizer codes \cite{nielsenchuang} that the dimensionality of $\mathcal{CS}$ is $2^{2g}$, so the surface code allows for the encoding of $2g$ logical qubits.

\subsection{Encoded Pauli operators}

We now seek $2g$ encoded Pauli X operators $\bar{X}_j$ and encoded Pauli Z operators $\bar{Z}_j$ for $j=1...2g$.  To do so, we shall introduce a few more notions from topological graph theory.  A \textit{cycle} $C$ is a set of edges such that every vertex has an even number of edges incident upon it from $C$ \footnote{Note that some authors require a cycle to be non-null and connected, or contain a maximum of two edges incident on any vertex.  Our definition of cycle also called a \textit{Eulerian subgraph}}. The symmetric difference of any two cycles $C_1$ and $C_2$ is also a cycle, which we shall refer to as the \textit{sum} of $C_1$ and $C_2$. A cycle is called \textit{trivial} if it can be obtained as the sum of the boundaries of some set of faces.   Two cycles are called \textit{homologous} on $G$ if their sum is a trivial cycle. This equivalence relation divides the set of all cycles on $G$ into \textit{homology classes} of mutually homologous cycles. The set of homology classes forms a group under addition, called the first homology group. Each handle in a surface $S$ contributes two independent generators to the first homology group, which is isomorphic to $\mathbb{Z}^{2g}$.  Intuitively, the two generators can be thought of as the cycles that go around the handle, and the cycles that go through it.

An operator of the form $\bar{X} = \prod_{e\in C} X_e$ for any cycle $C$ will commute with all of the stabilizer generators of the surface code.  If $C$ is a trivial cycle, then $\bar{X}$ is equal to a product of some set of $B_f$ operators, and thus acts trivially on the code space.  With this in mind, we define the encoded X operators as $\bar{X}_j = \prod_{e\in C_j} X_e$, where $\{C_j\}$ is a set of $2g$ nontrivial cycles, which are \textit{homologically independent}.  By homologically independent, we mean that no non-trivial linear combination of the cycles $\{C_j\}$ is homologically trivial. This ensures that the $\bar{X}_j$ all act independently on $\mathcal{CS}$ while commuting with the stabilizer generators.
\begin{figure}
\includegraphics[width=3.5in]{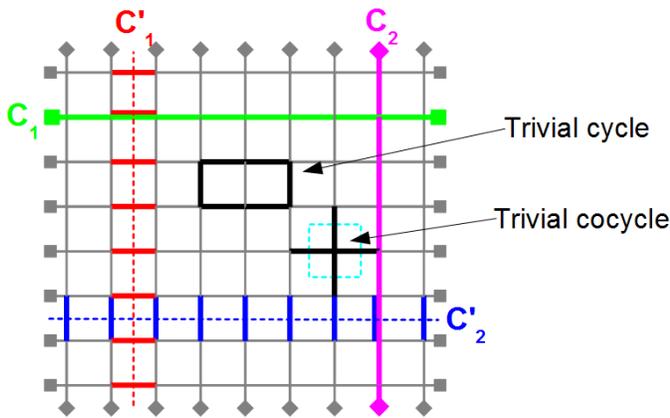}%
\caption{\label{fig:encodedtorus}(Color online) A square toroidal graph ($g=1$), depicted on a plane. The torus is reconstructed by identifying the points marked with diamonds, as well as the points marked with squares. See text for definitions of the cycles and cocycles shown.}
\end{figure}

To define the encoded Pauli Z operators, we use the same construction, but on the dual graph.  For an embedded graph $G=(V,E,F)$, its \textit{dual graph} $\widetilde{G}$ swaps the roles of vertices and faces.  That is, for each face in the original graph we associate a vertex of the dual graph.  Two vertices in $\widetilde{G}$ are then connected by an edge iff the associated faces of $G$ share an edge.  If an edge $e\in E$ is contained entirely within a single face of $G$, rather than separating two distinct faces, then we draw a self-loop in $\widetilde{G}$ for $e$. A cycle $C'$ on $\widetilde{G}$ is called a \textit{cocycle} on $G$, and has the property that $C' \subseteq E: |C' \cap \partial f|=0 \textrm{ (mod 2) } \forall f\in F$. The dual of an embedded graph has a natural embedding on the same surface $S$ as the original graph, where we place each vertex of $\widetilde{G}$ in the center of the associated face of $G$ \cite{graphsonsurfaces}.  Thus, there are also $2^{2g}$ distinct homology classes of cocycles on $G$, where homology is defined with respect to the dual graph embedding. A cocycle on $G$ is trivial if it can be written as $\bigoplus_{v\in \bar{V}}\delta_v$ for some set of vertices $\bar{V} \subseteq V$. We define the encoded Pauli Z operators as $\bar{Z}_j = \prod_{e\in C'_j} Z_e$, where $\{C'_j\}$ is a set of $2g$ homologically independent nontrivial cocycles. To ensure that each encoded X operator anticommutes with the encoded Z operator for the same logical qubit, but commutes with the Z operator for other logical qubits, we must choose the $C_j$ and $C'_j$ such that $|C_j\cap C'_k|=\delta_{jk}$ (mod 2). Figure \ref{fig:encodedtorus} depicts such a choice of ``encoding cycles'' and cocycles for a square toroidal graph.

An algorithm to find a suitable set of cycles $C_k$ and cocycles $C'_k$ satisfying the above criteria - as well as a guarantee of their existence - is provided by the notion of a \textit{tree-cotree decomposition} for an embedded graph, introduced by Eppstein \cite{eppstein}.  For any connected graph $G$, there exists at least one \textit{spanning tree} of $G$, which is defined as a subset of $E$ that forms a tree (is connected and contains no non-null cycles) and visits every vertex in $V$. A spanning tree $T\subseteq E$ contains $|V|-1$ edges.  For any spanning tree $T$, there exists at least one set of edges $C$ within the complement $E\backslash T$ of $T$ in $E$ such that $C$ is a spanning \textit{cotree} of $G$: that is, a spanning tree of the dual graph $\widetilde{G}$. A spanning cotree contains $|F|-1$ edges. For a cellularly embedded graph $G$, Euler's formula implies that the set of leftover edges $X=E\backslash(T\cup C)$ has a cardinality of $2g$. For each edge $e\in X$, the subgraph with edges $T\cup e$ contains exactly one cycle, which we will denote as $T(e)$. Similarly, $C \cup e$ contains exactly one cocycle $C(e)$. If we label the edges in $X$ arbitrarily as $X=\{e_1..e_{2g}\}$ and define $C_j:=T(e_j)$ and $C'_j:=C(e_j)$, then $C_j\cap C'_j = \{e_j\}$ and $C_j \cap C'_k=\emptyset$ for $k\ne j$. The cycles $T(X)$ are also homologically independent (and by corollary likewise for the cocycles $C(X)$) \cite{mohar}.  Thus, a tree-cotree decomposition always provides a suitable definition for the encoded operators of the surface code.

\subsection{The surface-code space}
Now that we have defined encoded qubit operators, we can explicitly construct their eigenstates from the eigenstates of the physical Pauli Z operators.  Let $|x\rangle = |x_1...x_{|E|}\rangle$ for any $|E|$ component bitstring $x$ be an eigenstate of the physical Z operators, with eigenvalue $(-1)^{x_e}$ for the operator $Z_e$. The unique mutual +1 eigenstate of the $2g$ encoded Pauli X operators is
\begin{equation} \label{plusstate}
 |\bar{+}\rangle = |K(G)\rangle := {1\over\sqrt{|E_0(G)|}}\sum_{x\in E_0(G)} |x\rangle,
\end{equation}

\noindent where $E_0(G)$ is the set of bitstrings corresponding to all cycles on G.  We associate bitstrings over $E$ and subsets of $E$ in the natural way: $x_e=1$ iff $e$ is in the subset.  That the state $|\bar{+}\rangle$ is stabilized by all of the $A_v$ operators follows from the fact that since $x$ is a cycle, $A_v|x\rangle = (-1)^{|x\cap \delta v|}|x\rangle = |x\rangle$.  $|\bar{+}\rangle$ is stabilized by all of the $B_f$ operators, because $B_f|x\rangle = |x\oplus \partial f\rangle$, where $\oplus$ denotes mod 2 addition of bitstrings (or equivalently, the symmetric difference of the associated sets).  Since $x\oplus \partial f$ is also a cycle and bitwise addition is invertible, operating on $|\bar{+}\rangle$ by $B_f$ merely permutes the order of the symmetric summation over $E_0(G)$ in Equation \ref{plusstate}. For this same reason, $\bar{X}_j|\bar{+}\rangle=|\bar{+}\rangle$ for all $j=1...2g$.

From the state $|\bar{+}\rangle$, we can construct the rest of the encoded X eigenbasis for $\mathcal{CS}$ by selective application of encoded Z operators. Letting $\alpha$ be any $2g$ component bit string $\alpha_1...\alpha_{2g}$, the state
\begin{equation} \label{xstate}
  |\bar{X}_{\alpha}\rangle := \left(\prod_{j=1}^{2g} \bar{Z}_j^{\alpha_j}\right)|\bar{+}\rangle
\end{equation}

\noindent is the encoded Pauli X eigenstate with eigenvalue $(-1)^{\alpha_j}$ for $\bar{X}_j$. The states $|\bar{X}_{\alpha}\rangle$ provide an orthonormal basis for $\mathcal{CS}$, because
\begin{eqnarray*}
\langle \bar{X}_{\gamma}|\bar{X}_{\alpha}\rangle &=& \langle \bar{+}|\left(\prod_{j=1}^{2g}(\bar{Z}_{j})^{\alpha_j\oplus\gamma_j}\right)|\bar{+}\rangle.
\end{eqnarray*}

If $\gamma_j \ne \alpha_j$ for any $j$, then one can prove that $\langle\bar{X}_{\gamma}|\bar{X}_{\alpha}\rangle=-\langle\bar{X}_{\gamma}|\bar{X}_{\alpha}\rangle=0$ by inserting an $\bar{X}_j$ operator into the above expression and anticommuting it past $\bar{Z}_j$.  If on the other hand $\gamma_j=\alpha_j$ for all $j$, then $\langle\bar{X}_{\gamma}|\bar{X}_{\alpha}\rangle=\langle\bar{+}|\bar{+}\rangle=1$ as expected.

The set $E_0(G)$ appearing in Equation \ref{plusstate} is the so-called \textit{cycle space} of G. From the definition of a cycle and Euler's formula, one can determine the size of the cycle space to be $|E_0(G)|=2^{|E|-|V|+1}=2^{2g+|F|-1}$, assuming that $G$ is connected (see Section \ref{genconsid} for proof). The cycle space of $G$ is a vector space over the binary field $\mathbb{Z}_2$ with a basis composed of all of the face boundaries except one, as well as any set of $2g$ homologically independent nontrivial cycles (such as the $C_j$). 

\section{Classical Simulation of MBQC on Surface-Code States} \label{classicalsim}

In this section, we define our notions of classical simulation of MBQC. A run of MBQC begins with putting in place a resource state $|{\cal{R}}\rangle$, which in the context of the present paper is a state in the code space of a surface code. Subsequently, all qubits in the support of $|{\cal{R}}\rangle$ are measured, where measurement bases are possibly adapted depending on the outcomes of earlier measurements. Finally, the classical output bits, collectively denoted by the vector $\textbf{o}$, are computed as certain parities among measurement outcomes. The probability distribution  for the various values of the output vector $\textbf{o}$ is denoted as $P$, with $P(\textbf{o})$ the probability for the computational outcome $\textbf{o}$.

In this paper, we consider two notions of classically simulating MBQCs, namely
\begin{enumerate}
  \item{Computing the elements $P(\textbf{o})$ of $P$ exactly, for arbitrary output values $\textbf{o}$.}
  \item{Sampling from the probability distribution $P$.}
\end{enumerate}
Consider the scenario where either a measurement-based quantum computer or a classical device simulating it is hidden behind a wall, and one is supposed to figure out the identity of the device merely by looking at its output. 

It is possible to distinguish the real quantum computer from  a classical device efficiently simulating MBQC according to the first notion, e.g. by setting up a problem where $P(\textbf{o})=\delta(\textbf{o},\textbf{m})$, for some $\textbf{m}$; i.e., a needle in a haystack. If the classical device could only compute $P(\textbf{o})$ efficiently for each $\textbf{o}$, it would still generally require exponential time in the length of $\textbf{o}$ to find the correct output $\textbf{m}$. 

However, it is not possible to distinguish a quantum computer from a device efficiently simulating MBQC according to the second criterion, since the probability distribution $P$ fully characterizes the output of the computation. Indeed, the quantum computer itself samples from $P$ \footnote{In  \cite{vdnsim} a distinction is made between `strong' simulations in which certain quantities are computed exactly, and `weak' simulations in which approximations to those quantities are obtained through sampling. In this terminology, the first of the above simulations is a special case of a `strong' simulation and the second simulation is `weak', which may seem counterintuitive after the above. While the second notion of simulation is a weaker in terms of accuracy, it can at least sufficiently closely approximate a wider variety of quantities of interest.}.\medskip

The probability of obtaining a particular sequence of measurement outcomes on all of the $|E|$ qubits is $|\langle \mathcal{R}|\phi\rangle|^2$, where $|\phi\rangle$ is a tensor product of single qubit outcome states. In general, the ability to compute $\langle \mathcal{R}|\phi\rangle$ is sufficient for classical simulation of the first type, since the $P(\textbf{o})$ are all expressible in the form $|\langle \mathcal{R}|\phi\rangle|^2$. Yet, the ability to compute a single such inner product efficiently is not sufficient for efficient classical simulation via sampling from $P$, as the above example illustrates. It is possible however to efficiently sample from $P$ if partial measurement probabilities
$$p\left(|\phi_{\widetilde{E}}\rangle\right)=\textrm{tr}_{\hat{E}}\left(\langle \phi_{\widetilde{E}}|\mathcal{R}\rangle\langle\mathcal{R}|\phi_{\widetilde{E}}\rangle\right)$$ can be computed efficiently. Therein,  $\widetilde{E},\hat {E}$ is a bipartition of the qubits $E$ into a set of measured qubits $\widetilde{E}$ and unmeasured qubits $\hat{E}$, and $|\phi_{\widetilde{E}}\rangle := \otimes_{e\in \widetilde{E}}|\phi_e\rangle$ is a post-measurement state on the measured qubits, representing the outcomes obtained. Consider a step of MBQC where the next qubit to be measured is some $e\in \hat{E}$.  If one now computes $p\left(|\phi_{\widetilde{E}}\rangle\otimes|\phi_e\rangle\right)$, then Bayes' formula yields the probability of obtaining $|\phi_e\rangle$ for qubit $e$, conditioned on the past measurement results:
$$p\left(|\phi_e\rangle\hspace{.05cm}\Bigm \vert \hspace{.05cm}|\phi_{\widetilde{E}}\rangle\right)={{p\left(|\phi_{\widetilde{E}}\rangle\otimes|\phi_e\rangle\right)}\over{p\left(|\phi_{\widetilde{E}}\rangle\right)}}.$$
In this way, one can simulate MBQC by only sampling from distributions over two outcomes, one time for each qubit $e\in E$.  If $p\left(|\phi_{\widetilde{E}}\rangle\right)$ can be computed in a number of steps that scales polynomially in $|E|$, at each step $\widetilde{E}$ of the computation, then the whole simulation can be performed in $poly(\widetilde{E})$ time.

In our context, we will focus initially on the computation of complete local state inner products $\langle \bar{\psi}|\phi\rangle$, where $|\bar{\psi}\rangle$ is a surface-code state, and $|\phi\rangle$ is a product state. We will then find in Section \ref{pmp} that for a certain family of arbitrary genus graphs and a natural ordering of measurements, the task of computing partial measurement probabilities $p\left(|\phi_{\widetilde{E}}\rangle\right)$ reduces to a special case of evaluating $\langle \bar{\psi}|\phi\rangle$.

\section{Product state overlaps and entanglement} \label{overlapentsect}

\subsection{Product state overlaps and the Ising model} \label{overlapsect}

We will begin by showing that the inner product between any surface-code state and an arbitrary product state can be written as a sum of classical Ising model partition functions.  Consider any product state in the physical Hilbert space of the $|E|$ qubits:
$$|\phi\rangle = \bigotimes_{e\in E} \left(a_e|0\rangle_e+b_e|1\rangle_e\right).$$

The inner product between $|\phi\rangle$ and the encoded X eigenstate $|\bar{+}\rangle$ of Equation \ref{plusstate} can be written as a summation over the set $E_0(G)$:
\begin{eqnarray}
\langle \bar{+}|\phi\rangle &=& {1\over\sqrt{|E_0(G)|}}\sum_{x\in E_0(G)} \langle x| \left(\bigotimes_{e\in E} a_e|0\rangle_e+b_e|1\rangle_e\right)\nonumber\\
&=& {1\over\sqrt{|E_0(G)|}}\left(\prod_{e\in E}a_e\right)\sum_{x\in E_0(G)} \prod_{e\in E}\left({b_e\over a_e}\right)^{x_e} \label{overlape0},
\end{eqnarray}
where if $a_e=0$ for any edge $e$ we take a limit as $a_e\rightarrow 0$ and use the continuity of $\langle \bar{+}|\phi\rangle$ as function of the $a_e$ and $b_e$. 

The state overlap in Equation \ref{overlape0} is proportional to the partition function of a classical Ising model with classical spins $\sigma_v \in \{0,1\}$ on the vertices of $G$, and possibly complex couplings $J_e = \tanh^{-1}({b_e\over a_e})$ associated with each edge. It is well known (see \cite{kastgraphtheory} and \cite{fisherdimer}) that the partition function of an Ising model defined on a graph $G$ with couplings $J_e$ can be written as a generating function of cycles on $G$:
\begin{eqnarray}
Z(G,J)&=&2^{|V|}\left(\prod_{e\in E}\cosh(J_e)\right)\mathrm{Cy}(G,\tanh(J))\label{isinge0},
\end{eqnarray}
where 
$$ \mathrm{Cy}(G,w) := \sum_{x\in E_0(G)} \prod_{e\in E} \left(w_e\right)^{x_e}$$
is the generating function of cycles on $G$, where a weight $w_e$ is associated with each edge $e$. Comparing Equations \ref{overlape0} and \ref{isinge0}, we see that if we define the Ising couplings as $J_e := \tanh^{-1}({b_e\over a_e})$, then
\begin{equation} \label{overlapising}
\langle \bar{+}|\phi\rangle = {{\prod_{e\in E}\sqrt{a_e^2-b_e^2}}\over{2^{|V|}\sqrt{|E_0(G)|}}}Z\left(G,J\right).
\end{equation}

Now consider any state $|\bar{\psi}\rangle$ in the surface-code space, with expansion coefficients $c_{\gamma}$ in the encoded X eigenbasis: $|\bar{\psi}\rangle = \sum_{\gamma\in\{0,1\}^{\otimes 2g}} c_{\gamma} |\bar{X}_{\gamma}\rangle$. Expanding the inner product in this basis
\begin{eqnarray} \label{sumoverisings1}
\langle \bar{\psi}|\phi\rangle &=&\sum_{\gamma\in\{0,1\}^{\otimes 2g}} c^*_{\gamma} \langle \bar{+}|\left(\prod_{j=1}^{2g} \bar{Z}_{j}^{\gamma_j}\right)|\phi\rangle.
\end{eqnarray}

Recall that the encoded Pauli Z operators are tensor products of Pauli Z operators acting on the physical qubits. If we take them as operating to the right rather than the left in Equation \ref{sumoverisings1}, then we see that each term is proportional to an inner product between $|\bar{+}\rangle$ and a modified product state $|\phi^\gamma\rangle := \left(\prod_{j=1}^{2g} \bar{Z}_{j}^{\gamma_j}\right)|\phi\rangle$. So we could write Equation \ref{sumoverisings1} as a summation over $2^{2g}$ Ising partition functions, each with different Ising couplings defined from the coefficients of $|\phi^\gamma\rangle$. However, we will find it useful to keep each term in the form of Equation \ref{overlape0}:
\begin{eqnarray} \label{overlapisinggen}
\langle \bar{\psi}|\phi\rangle &=& \mathcal{N}\sum_{\gamma\in\{0,1\}^{\otimes 2g}} c^*_{\gamma}\sum_{x\in E_0(G)} \prod_{e\in E}\left({b^\gamma_e\over a_e}\right)^{x_e}\nonumber \\
&=& \mathcal{N}\sum_{\gamma\in\{0,1\}^{\otimes 2g}} c^*_{\gamma}\mathrm{Cy}(G,w^{\gamma}),
\end{eqnarray}
where $\mathcal{N}:={{\prod_{e\in E}a_e}\over\sqrt{|E_0(G)|}}$ and $b^\gamma_e$ is obtained from $b_e$ by letting $b_e\rightarrow -b_e$ each time the edge $e$ belongs to a cocycle $C'_j$ such that $\gamma_j=1$. The weights $w^\gamma$ are defined as $w^\gamma_e:=b^\gamma_e/ a_e$.

\subsection{Evaluation of product state overlaps}

From Equation \ref{overlapisinggen}, we see that in order to compute an inner product of the form $\langle \bar{\psi}|\phi\rangle$, it is sufficient to be able to evaluate a generating function of cycles on $G$.  Note that the generating function of cycles of a graph $G$ depends only on its vertex and edge sets $V$ and $E$, and makes no reference to an embedding of $G$ on any surface. However, it turns out that embedding $G$ on an orientable surface of genus $g$ allows one to compute $\mathrm{Cy}(G,w)$ in a number of steps that grows exponentially in $g$, while only polynomially in the size of the graph. 

In Appendix \ref{cycles}, we show that for a graph $G$ embedded on an orientable surface of genus $g$, the generating function of cycles on $G$ can be written as
\begin{equation} \label{IsingPFgenform}
\mathrm{Cy}(G,w) = {1\over 2^g}\sum_{\alpha,\beta \in \{0,1\}^{\otimes g}} (-1)^{\alpha\cdot\beta}\textrm{Pf}\left(\mathcal{A}'(w^{\alpha,\beta})\right),
\end{equation}
where $\alpha\cdot \beta$ is the bitwise inner product of the $g$ component bitstrings $\alpha$ and $\beta$, and $\textrm{Pf}\left(\mathcal{A}'(w)\right)$ is the Pfaffian of the weighted adjacency matrix of a modified graph $G'$, which is obtained from the graph $G$ with edge weights $w$. Here, $w^{\alpha,\beta}$ indicates the set of edge weights of $G$ adjusted in a certain way that depends on the bitstrings $\alpha$ and $\beta$. The Pfaffian of a matrix is related to the determinant and is computable in a number of steps that grows polynomially in the size of the matrix. The number of edges of $G'$ is a polynomial in the number of edges of $G$, so $\textrm{Pf}\left(\mathcal{A}'(w^{\alpha,\beta})\right)$ can be computed efficiently in both the number of edges and the genus $g$. Equation \ref{IsingPFgenform} allows for an evaluation of $\mathrm{Cy}(G,w)$ in	$poly(|E|,g)2^{2g}$ steps.

The construction of the adjusted edge weights $w^{\alpha,\beta}$ will be crucial in the following considerations.  In Appendix \ref{cycles}, we define a \textit{canonical encoding scheme}, which is a possible choice of encoding cocycles $C'_k$ defined by cutting and then unfolding the surface $S$ into a topological disk.  In a canonical encoding scheme, the numbering of cocycles $C'_1...C'_{2g}$ is important; in particular, each odd numbered cocycle $C'_{2j-1}$ is paired with an even numbered cocycle $C'_{2j}$.  Given a canonical encoding scheme $C'_1...C'_{2g}$, $w^{\alpha,\beta}_e$ is defined from $w_e$ by multiplying $w_e$ by $-1$ each time $e$ belongs to an odd numbered cocycle $C'_{2j-1}$ such that $\alpha_j=1$, and each time $e$ belongs to an even numbered cocycle $C'_{2j}$ such that $\beta_j=1$.

Consider now the coefficients $c_{\gamma,\rho}$ of an encoded state with respect to a canonical encoding scheme $C'_k$, where $\gamma, \rho \in \{0,1\}^{\otimes g}$, $\gamma_j$ corresponds to the odd numbered cocycle $C'_{2j-1}$, and $\rho_j$ to the even numbered cocycle $C'_{2j}$. Then we may re-write Equation \ref{overlapisinggen} as
\begin{eqnarray}
\langle \bar{\psi}|\phi\rangle &=& \mathcal{N}\sum_{\gamma, \rho\in\{0,1\}^{\otimes g}} c^*_{\gamma, \rho}\mathrm{Cy}(G',w^{\gamma, \rho}) \nonumber.
\end{eqnarray}

The bitstrings $\gamma, \rho$ modify the edge weights $w_e$ here in exactly the same way as the bitstrings $\alpha,\beta$ do in Equation \ref{IsingPFgenform}. So substituting in Equation \ref{IsingPFgenform}:

\begin{eqnarray}
\langle \bar{\psi}|\phi\rangle &=&{\mathcal{N}\over2^g}\sum_{\substack{\alpha,\beta,\gamma,\rho\\ \in\{0,1\}^{\otimes g}}}  c^*_{\gamma,\rho}(-1)^{\alpha\cdot\beta}\textrm{Pf}\left(\mathcal{A}'(w^{\alpha\oplus\gamma,\beta\oplus\rho})\right)\nonumber,
\end{eqnarray}
where $\oplus$ indicates here the binary sum of two bitstrings.  By re-labelling the summation over the dummy indices $\alpha,\beta$, we can rewrite
\begin{eqnarray}
\langle \bar{\psi}|\phi\rangle &=&{\mathcal{N}\over{2^g}}\sum_{\substack{\alpha,\beta,\gamma,\rho\\ \in\{0,1\}^{\otimes g}}} c^*_{\gamma,\rho}(-1)^{(\alpha\oplus\gamma)\cdot(\beta\oplus\rho)}\textrm{Pf}\left(\mathcal{A}'(w^{\alpha,\beta})\right),\nonumber\\
\label{summationover4g}
\end{eqnarray}
where $\mathcal{N}$ is as defined in Section \ref{classicalsim}. Equation \ref{summationover4g} provides a means of computing $\langle \bar{\psi}|\phi\rangle$ in a number of steps that scales as $poly(|E|,g)2^{4g}$.

There exists a family of states in the code space of a surface-code for which the two summations in Equation \ref{summationover4g} cancel each other out, and the complexity of evaluating product state overlaps loses its exponential dependence on $g$.  Consider a state $|\bar{C}^{\delta,\epsilon}\rangle$ parameterized by two g-component bitstrings $\delta, \epsilon$:
\begin{equation} \label{cstatedef} |\bar{C}^{\delta,\epsilon}\rangle := {1\over 2^g}\sum_{\gamma, \rho\in\{0,1\}^{\otimes g}}(-1)^{\delta\cdot\rho+\epsilon\cdot\gamma+\gamma\cdot\rho}|\bar{X}_{\gamma, \rho}\rangle,\end{equation}
where $|\bar{X}_{\gamma, \rho}\rangle$ is the encoded X basis defined by some fixed canonical encoding scheme. It can be verified directly that 
\begin{eqnarray*}
{1\over 2^g}\sum_{\gamma,\rho\in\{0,1\}^{\otimes g}}  (-1)^{\delta\cdot\rho+\epsilon\cdot\gamma+\gamma\cdot\rho}(-1)^{(\alpha\oplus\gamma)\cdot(\beta\oplus\rho)}\\
={1\over 2^g}(-1)^{\alpha\cdot\beta}\sum_{\gamma,\rho\in\{0,1\}^{\otimes g}}  (-1)^{\gamma\cdot(\epsilon\oplus \beta)+\rho\cdot(\delta\oplus\alpha)}
\end{eqnarray*}
equals zero unless $\alpha=\delta$ and $\beta=\epsilon$ component by component, in which case it equals $(-1)^{\delta\cdot\epsilon}2^g$. So, using Equation \ref{summationover4g}:
\begin{eqnarray} \label{specialstateeq}
\langle \bar{C}^{\delta,\epsilon}|\phi\rangle &=&\mathcal{N} (-1)^{\delta\cdot\epsilon}\textrm{Pf}\left(\mathcal{A}'(w^{\delta,\epsilon})\right),
\end{eqnarray}
which can be computed in $poly(|E|,g)$ time. The states $|\bar{C}^{\delta\epsilon}\rangle$ are the encodings of a state that is locally equivalent to a graph state of tensor product form, with one factor per handle. Each handle of the surface $S$ encodes two qubits, and the corresponding graph state is local equivalent to a Bell state; see Figure \ref{GS}. The state $|\bar{C}^{\delta\epsilon}\rangle$ has stabilizers $(-1)^{\delta_j}\bar{X}_{2j-1}\bar{Z}_{2j}$ and $(-1)^{\epsilon_j}\bar{Z}_{2j-1}\bar{X}_{2j}$, for each $j=1...g$. 

\begin{figure}
  \begin{center}
    \includegraphics[width=8cm]{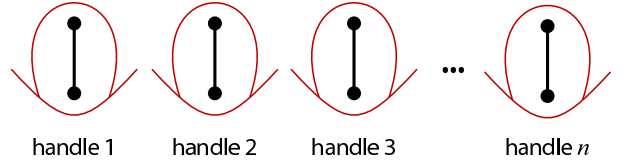}
    \caption{\label{GS} (Color online) The states in the code space for which MBQC remains efficiently simulatable are encodings of graph states. The graph has multiple components,  one per handle. Each handle gives rise to two encoded qubits, and the graph state on each handle is locally equivalent to a Bell state among these two qubits.}
  \end{center}
\end{figure}

The $2^{2g}$ states $|\bar{C}^{\delta,\epsilon}\rangle$ form an orthonormal basis for the code space of the surface code, which can be proven using the orthonormality of the encoded X eigenstates. If the coefficients $\psi_{\delta,\epsilon}$ expanding an arbitrary surface-code state $|\bar{\psi}\rangle$ in the $|\bar{C}^{\delta,\epsilon}\rangle$ basis are known:
$$|\bar{\psi}\rangle=\sum_{\delta,\epsilon\in\{0,1\}^{\otimes g}}\psi_{\delta,\epsilon}|\bar{C}^{\delta,\epsilon}\rangle,$$
then we can improve upon Equation \ref{summationover4g} to compute $\langle \bar{\psi}|\phi\rangle$ in a number of steps that scales as $poly(|E|,g)2^{2g}$:
\begin{eqnarray}
\langle \bar{\psi}|\phi\rangle &=&\mathcal{N}\sum_{\alpha,\beta\in\{0,1\}^{\otimes g}} (-1)^{\alpha\cdot \beta} \psi^{*}_{\alpha,\beta} \textrm{Pf}\left(\mathcal{A}'(w^{\alpha,\beta})\right)\nonumber.\\
\label{summationover2g}
\end{eqnarray}

This observation leads us to the following
\begin{theorem}
\label{Sim1}
Consider an MBQC with generalized flow on a resource surface-code state $|\bar{\psi}\rangle = \sum_{\alpha,\beta \in {\mathbb{Z}_2}^g} \psi_{\alpha,\beta}|\bar{C}^{\alpha,\beta}\rangle$ of $|E|$ qubits, where $g$ is the genus, and the coefficients $\psi_{\alpha,\beta}$ are known. Then, each element $P(\textbf{o})$ of the output probability distribution can be computed exactly in $2^{2g}\mbox{Poly}(|E|,g)$ steps.
\end{theorem}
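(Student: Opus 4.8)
The plan is to reduce the computation of a single output probability $P(\mathbf{o})$ to the evaluation of one product-state overlap of the form $\langle\bar\psi|\phi\rangle$, each of which can be computed in $2^{2g}\,\mathrm{poly}(|E|,g)$ steps via Equation \ref{summationover2g}; the generalized-flow assumption is precisely what makes this reduction possible. First I would recall that the probability of any single complete measurement record $\mathbf{s}$ (one binary outcome per edge, in the adapted measurement bases) equals $|\langle\bar\psi|\phi_{\mathbf{s}}\rangle|^2$, where $|\phi_{\mathbf{s}}\rangle=\bigotimes_{e\in E}|\phi_e\rangle$ is the corresponding product of single-qubit outcome states. Since the computational output $\mathbf{o}$ is a fixed set of parities of the entries of $\mathbf{s}$, the desired probability is the sum
$$P(\mathbf{o})=\sum_{\mathbf{s}\,:\,\mathrm{output}(\mathbf{s})=\mathbf{o}}\bigl|\langle\bar\psi|\phi_{\mathbf{s}}\rangle\bigr|^2,$$
and the set of records consistent with a given $\mathbf{o}$ is an affine subspace of $\mathbb{Z}_2^{|E|}$ of some dimension $k$ computable in $\mathrm{poly}(|E|)$ time.

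The crucial step is to argue that this exponentially large sum collapses. Here I would invoke the determinism of MBQC with generalized flow: the adaptive corrections prescribed by the gflow guarantee that the two branches of every non-output measurement occur with equal probability and are related by byproduct Pauli operators that leave the modulus of the overlap unchanged. Consequently every record consistent with a fixed $\mathbf{o}$ carries the same weight $|\langle\bar\psi|\phi_{\mathbf{s}}\rangle|^2$, so one may fix a single canonical representative $\mathbf{s}^{*}(\mathbf{o})$ — for instance the branch in which all correction-relevant outcomes are set to zero and the measurement bases are taken in their unadapted form — to obtain
$$P(\mathbf{o})=2^{k}\,\bigl|\langle\bar\psi|\phi_{\mathbf{s}^{*}(\mathbf{o})}\rangle\bigr|^2,$$
with $2^{k}$ a known power-of-two multiplicity factor.

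It then remains to evaluate the single overlap $\langle\bar\psi|\phi_{\mathbf{s}^{*}(\mathbf{o})}\rangle$. Since $|\phi_{\mathbf{s}^{*}(\mathbf{o})}\rangle$ is a product state and the coefficients $\psi_{\alpha,\beta}$ of $|\bar\psi\rangle$ in the $|\bar{C}^{\alpha,\beta}\rangle$ basis are given, Equation \ref{summationover2g} computes this overlap in $2^{2g}\,\mathrm{poly}(|E|,g)$ steps. Squaring the result and multiplying by $2^{k}$ costs only $\mathrm{poly}(|E|,g)$ further operations, so the total cost of obtaining $P(\mathbf{o})$ is $2^{2g}\,\mathrm{poly}(|E|,g)$, as claimed.

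The main obstacle I anticipate is the collapsing step: one must verify that generalized flow really does render all records consistent with a given $\mathbf{o}$ equiprobable, and that the representative branch yields a genuine product state whose single-qubit amplitudes $a_e,b_e$ — and hence the adjusted weights $w^{\alpha,\beta}$ feeding into the Pfaffians — can be written down explicitly. The evaluation of the overlap itself is already supplied by Equation \ref{summationover2g}; everything else amounts to bookkeeping of the flow-induced corrections and of the power-of-two multiplicity factor.
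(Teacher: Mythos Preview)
Your proposal is correct and follows essentially the same route as the paper: use the strong determinism implied by generalized flow (Theorem~2 of \cite{gflow}) to conclude that all measurement records consistent with a given output $\mathbf{o}$ are equiprobable, fix the canonical representative with all non-output outcomes set to zero (so that no Pauli corrections are applied and the post-measurement state is a genuine product state), and then evaluate the single resulting overlap via Equation~\ref{summationover2g}. The paper makes the multiplicity factor explicit as $2^{|O^c|}$ by partitioning the qubits into an output set $O$ (measured last, giving the output bits directly) and its complement $O^c$, which is exactly your $2^k$; your anticipated ``main obstacle'' is discharged by the cited gflow result.
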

{\em{Remark:}} A generalized flow consists of a partial ordering among the individual measurement events and a rule for working out which measurement basis depends on which measurement outcome obtained earlier. For a precise definition, see \cite{gflow}. The extra condition of the MBQC possessing a generalized flow does not seem very constraining, since it is the only known condition that guarantees deterministically runnable MBQC.

\begin{proof}
By Theorem 2 of \cite{gflow}, the property of a generalized flow implies strong determinism of the MBQC in question, meaning that each branch of the MBQC is equally likely. We may now split the set $\Omega$ of qubits into two disjoint subsets $O$ and $O^c:=\Omega\backslash O$, where $O^c$ is the set of qubits which condition a correction operation and $O$ the set of qubits which do not. The latter are the output qubits, and can be measured last.

The standard procedure of MBQC with all qubits being measured and the output bits obtained as parities of measurement outcomes is equivalent to the following procedure \cite{onewayqc}: (1) Putting in place the resource state. (2) Performing the local measurements on all qubits $a \in O^c$. (3) Applying Pauli operators on the remaining qubits $b\in O$, conditioned upon the measurement outcomes obtained on the qubits $a \in O^c$. The resulting state of the unmeasured qubits is $|\mbox{out}\rangle_O$. (4) Measuring all qubits $b \in O$. Each measurement outcome yields one bit $o_b$ of output, for all $b \in O$.

By Theorem 2 of \cite{gflow}, the state $|\mbox{out}\rangle_O$, outputted in step 3 of the above procedure, is independent of the measurement outcomes $\textbf{s}|_{O^c}$ of qubits in $O^c$, and all combinations $\textbf{s}|_{O^c}$ of local measurement outcomes  are equally likely. Therefore, it is not necessary to compute each of these probabilities separately. Instead, one may set $\textbf{s}|_{O^c} = \textbf{0}|_{O^c}$. In this case, there are no Pauli corrections on the qubits in $O$. Furthermore,
\begin{equation}
  \label{ovl}
  P(\textbf{o}) = 2^{|O^c|}|\langle \bar{\psi}|\textbf{0}\rangle_{O^c}|\textbf{o}\rangle_O|^2.
\end{equation}  
Therein, $|\textbf{0}\rangle_{O^c}$ is the post-measurement state on the qubits in $O^c$, with every measurement outcome being $s_a=0$ (eigenvalue +1), for all $a \in O^c$. $|\textbf{o}\rangle_O$ is the post-measurement state of the qubits in $O$, with $s_b = o_b$, for all $b \in O$. (In both cases, the basis of the measurement is specified through the algorithm. It is in general not the computational basis.) 

Now, by Eq.~(\ref{summationover2g}), the probability $P(\textbf{o})$ can be computed as a sum over $2^{2g}$ terms. In each term, $\textrm{Pf}\left(\mathcal{A}'(w^{\alpha,\beta})\right)$ can be computed in $\mbox{Poly}(|E|,g)$ steps. $\Box$
\end{proof}

\subsection{Quantum circuit interpretation}
\begin{figure}
  \begin{center}
    \includegraphics[width=8cm]{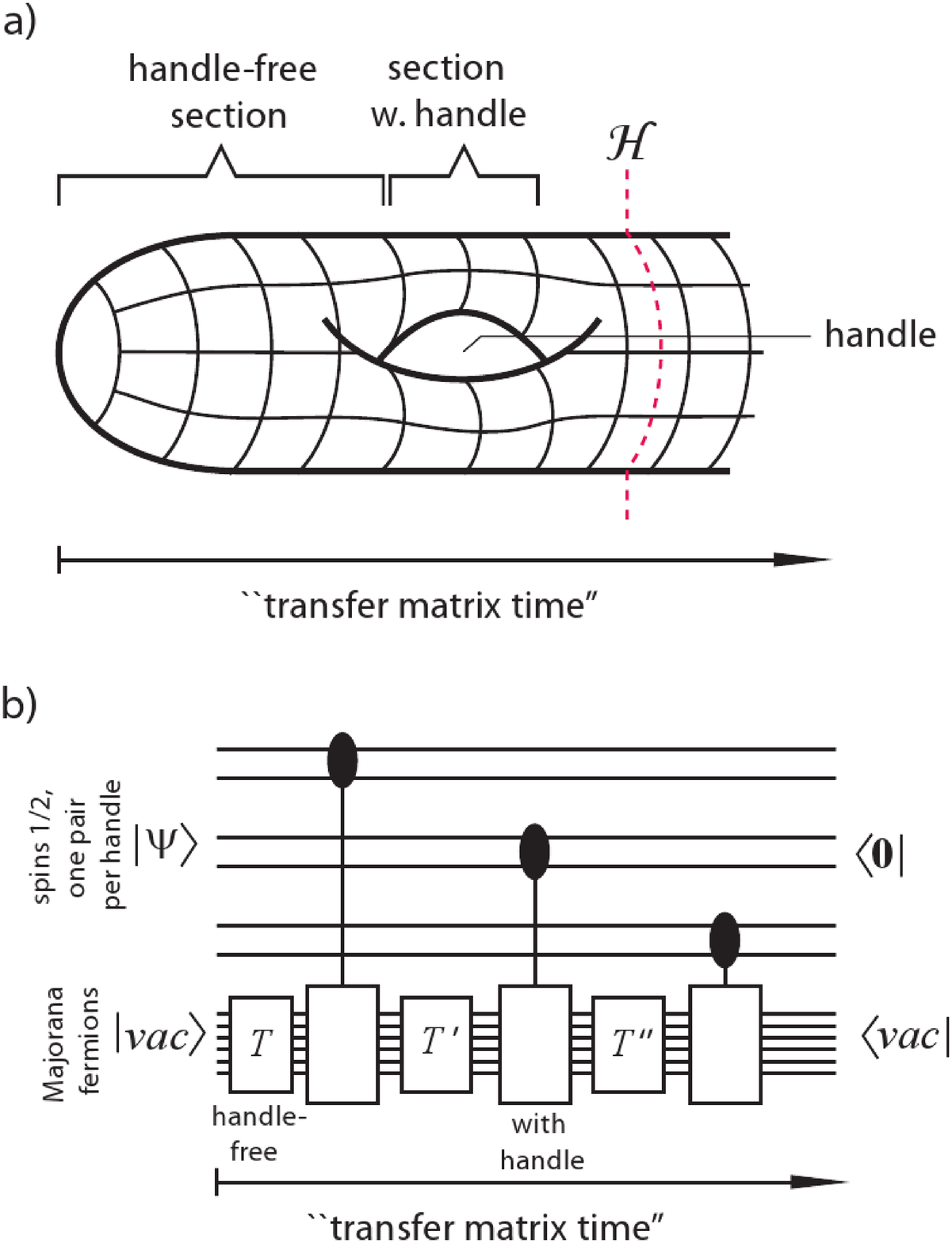}
    \caption{\label{QC} (Color online) Quantum circuit representation of the computation of a surface-code inner-product $\langle \bar{\psi}|\phi\rangle$ for appropriate graphs. In figure b), the encoded state $|\psi\rangle$ is loaded into a register of $2g$ qubits, where it is subsequently entangled with a set of $N$ non-interacting fermions that evolve conditional on the state of the qubits. The interaction is diagonal in the $|C^{\delta,\epsilon}\rangle$ basis of the qubits.}
  \end{center}
\end{figure}

Another perspective on the perhaps surprising efficiency of the states $|\bar{C}^{\delta,\epsilon}\rangle$ comes from thinking of the evaluation of $\langle \bar{\psi}|\phi\rangle$ as equivalent to computing a matrix element of a quantum circuit that entangles a set of $N$ non-interacting fermions to $2g$ qubits (see Figure \ref{QC}). This interpretation is possible in certain situations where the graph $G$ corresponds to a higher genus analog of a rectangular lattice with $N$ rows, such as the punctured cylinder graphs to be introduced in Section \ref{puncturedcylindergraphs}. In this case, the Ising model partition function can be evaluated by a simple generalization of the transfer matrix method.

For an $N\times M$ rectangular lattice, the transfer matrix method
\cite{noneqstatmech} allows the Ising model partition function to be written as the vacuum expectation value of a non-interacting fermion operator on $N$ fermion modes:
$$Z=2^N\langle vac|V^1H^1...V^{M-1}H^{M-1}V^M|vac\rangle$$

where $V^1...V^M$ are known as vertical transfer matrices, expressed as non-interacting fermion operators with parameters that depend on the vertical Ising couplings along a given column of the lattice, while the $H_j$ are non-interacting fermion operators that depend on the horizontal couplings down a given column. This leads to an interpretation of the partition function in terms of a 1D quantum system, where the horizontal dimension acts as time. For certain suitably ``rectangular" non-planar graphs \cite{lgmscthesis}, this formula can be generalized as 

\begin{equation} \label{Zbits}
Z={1\over{2^g}}\sum_{\alpha,\beta\in\{0,1\}^{\otimes g}} (-1)^{\alpha\cdot\beta}\langle vac|\Gamma_{\alpha,\beta}|vac\rangle
\end{equation}

where $\Gamma_{\alpha,\beta}$ is again a product of non-interacting fermion operators, which depend on the bitstrings $\alpha$ and $\beta$ as the virtual time evolution crosses handles in the surface from left to right (see Figure \ref{QC}).

The $2g$ bits $\alpha,\beta$ arise because non-planar vertical boundary conditions (such as those depicted in Figure \ref{QC}) alter the normal mapping from transfer matrices to non-interacting fermion operators via the Jordan-Wigner transformation. To express the product of transfer matrices in terms of non-interacting fermion operators, it is necessary to sum over various parity subspaces of the fermion Fock space, which leads to the summation in Equation \ref{Zbits}. The vertical transfer matrices corresponding to edge qubits directly above a handle take a form $e^{-iJ (-1)^{\hat{n}_k} c_{2k}c_1}=P_k^-e^{iJ c_{2k}c_1}+P_k^+e^{-iJ c_{2k}c_1}$ where $\hat{n}_k$ counts the occupation of the subset of fermion modes $1$-$k$, and $P^{\pm}_{k}$ is the projector into the positive(negative) parity eigenspace of $\hat{n}_k$. The $c_1...c_{2N}$ are Majorana fermion operators and $J$ is a scalar Ising coupling. The parity projectors themselves can each be expanded as $P^{\pm}_{k}={1\over 2}\left(I\pm (-1)^{\hat{n}_k}\right)$, where the action of the operator $(-1)^{\hat{n}_k}$ in the second term turns out to be equivalent to multiplying by $-1$ the horizontal Ising couplings for edges immediately to the left of the handle. Thus in term $\Gamma_{\alpha,\beta}$ of Equation \ref{Zbits}, both $\alpha_j$ and $\beta_j$ are associated with the signs of certain Ising couplings around the $j^{th}$ handle.

When this method for computing the Ising partition function is used for the computation of a surface-code inner-product, we get that
\begin{equation} \label{overlapcircuit} \langle \phi |\bar{\psi}\rangle \propto \langle vac \otimes \mathbf{0}|C\Gamma|vac \otimes \psi\rangle. \end{equation}

where $C\Gamma$ is a ``controlled" fermion operator:
$$C\Gamma:=\sum_{\alpha,\beta\in\{0,1\}^{\otimes g}} |C^{\alpha,\beta}\rangle\langle C^{\alpha,\beta}|\otimes \Gamma_{\alpha,\beta}.$$

Therein, $|\psi\rangle$ is the $2g$-qubit state being encoded into the surface code, and $|\mathbf{0}\rangle$ is the computational basis state on the qubits. Non-interacting fermion operators can be efficiently classically simulated (even when they are non-unitary), so Equation \ref{overlapcircuit} can be evaluated in a number of steps that depends on the number of terms in an expansion of the state $|\psi\rangle$ in the $|C^{\alpha,\beta}\rangle$ basis. In particular, if $|\bar{\psi}\rangle = |\bar{C}^{\delta, \rho}\rangle$ for some $\delta,\rho$, then only one term must be computed and the evaluation of Equation \ref{overlapcircuit} is efficient in all parameters.  For more details on this approach, see \cite{lgmscthesis}.

\subsection{Entanglement in the effective output state}
In the following we will prove tighter bounds on the classical simulation cost on MBQC with surface-code states, in which the exponential factor $2^{2g}$ in Theorem \ref{Sim1} is replaced by smaller exponentials. Specifically, we have
$$
2^{E_{Sch}(|\Phi\rangle)} \leq 2^{n(|\Phi\rangle)} \leq 2^{2g}, 
$$
where $|\Phi\rangle$ is an effective state containing all relevant information about the encoded state $|\psi\rangle$, the encoding and the local bases in which $|\bar{\psi}\rangle$ is measured. Furthermore, $E_{Sch}$ is the Schmidt measure of entanglement and $n$ is the log of the number of terms in a special fixed basis expansion. We have already seen that $n$ can be much smaller than $2g$, namely $n=0$ for the graph states in Fig. \ref{GS}. Our tightest bound involves the Schmidt entanglement measure, and is stated in Theorem \ref{olbtheorem}. A complication arises due to the fact that computing the optimal basis for the Schmidt decomposition in general is a hard problem in itself. In this regard, we show that $E_{Sch}(|\Phi\rangle = n(|\Phi\rangle)$ under mild assumptions; See Theorem \ref{ge2suffthm}.

Recall that the states $|\bar{C}^{\alpha,\beta}\rangle$ can be written in terms of the encoded X-eigenstates of a canonical encoding scheme:
\begin{equation*}
|\bar{C}^{\alpha,\beta}\rangle : = {1\over{2^g}}\sum_{\gamma, \rho \in \{0,1\}^{\otimes g}} (-1)^{\alpha\cdot\rho+\beta\cdot\gamma+\gamma\cdot\rho}|\bar{X}_{\gamma,\rho}\rangle. 
\end{equation*}

It is straightforward to prove that these states are all related to one another by encoded Pauli Z operators for a canonical encoding scheme.  In particular
\begin{eqnarray*}
|\bar{C}^{\alpha,\beta}\rangle&=&(-1)^{\alpha\cdot\beta}\left(\displaystyle{\prod_{j=1}^g}\left(\bar{Z}_{2j-1}\right)^{\alpha_j}\left(\bar{Z}_{2j}\right)^{\beta_j}\right)|\bar{C}^{0,0}\rangle,
\end{eqnarray*}
where $|\bar{C}^{0,0}\rangle$ indicates the state labeled by the g-component zero bitstring for both $\alpha$ and $\beta$, and $\bar{Z}_{k} = \prod_{e\in C'_k}Z_e$.  To simplify notation, define
$$\Psi_{\alpha,\beta}:=(-1)^{\alpha\cdot \beta} \psi^{*}_{\alpha,\beta}.$$

Then we can write any state in the surface-code space as
$$|\bar{\psi}\rangle = \sum_{\alpha,\beta} \Psi_{\alpha,\beta}^* \left(\prod_{k=1}^{g} \bar{Z}_{2k-1}^{\alpha_k}\bar{Z}_{2k}^{\beta_k}\right)|\bar{C}^{00}\rangle.$$
\noindent(Note that we have suppressed the $\in \{0,1\}^{\otimes g}$ under the summation sign to clean up the expression.)

Now consider the quantity $\langle \bar{\psi}|\phi\rangle$.  If we let the Pauli Z operators operate to the right rather than the left we see that
$$\langle \bar{\psi}|\phi\rangle=\langle \bar{C}^{00}|\Phi\rangle,$$
where
\begin{eqnarray}
|\Phi\rangle&:=&\sum_{\alpha,\beta} \Psi_{\alpha,\beta} \left(\prod_{k=1}^{g} \bar{Z}_{2k-1}^{\alpha_k}\bar{Z}_{2k}^{\beta_k}\right)|\phi\rangle  \nonumber \\
&=&\sum_{\alpha,\beta} \Psi_{\alpha,\beta} \left(\prod_{k=1}^{g} \prod_{e\in C'_{2k-1}} Z_e^{\alpha_k} \prod_{e\in C'_{2k}} Z_e^{\beta_k}\right) \bigotimes_{e\in E}|\phi_e\rangle\nonumber.\\  \label{defeff}
\end{eqnarray}

Thus, evaluating the overlap between an arbitrary surface-code state and a product state is equivalent to evaluating the overlap of one of the ``easy'' states $|\bar{C}^{00}\rangle$ with an effective state $|\Phi\rangle$ which is generally \textit{not} a product state of the physical qubits. In a sense, the state $|\Phi\rangle$ reflects an encoding of the $2g$ qubit state $|\psi\rangle$ into the $|E|$ physical qubits of the state $|\phi\rangle$. From Equation \ref{defeff}, it is clear that $|\Phi\rangle$ is a function of: i) the state $|\psi\rangle$ being encoded into the surface code; ii) the chosen encoding scheme $C'_1...C'_{2g}$; and iii) the product state $|\phi\rangle$. In terms of simulating MBQC, the state $|\Phi\rangle$ combines both the specification of the resource state and the particular sequence of measurement outcomes one is computing the probability of (see Section \ref{classicalsim}).

If $|\Phi\rangle$ were to be expanded as a sum over product states, we could evaluate $\langle \bar{C}^{00}|\Phi\rangle$ in a number of steps that grows linearly with the number of terms in the expansion.  The base-2 logarithm of the minimal number of product states that are required to expand a multipartite quantum state is an entanglement monotone known as the \textit{Schmidt measure} \cite{breigelschmidt}. That is, for an N qubit pure state $|\psi\rangle$, the Schmidt measure $E_{Sch}(|\psi\rangle)$ is the minimum number such that
$$ |\psi\rangle = \sum_{j=1}^{2^{E_{Sch}(|\psi\rangle)}} |\chi_1^j\rangle|\chi_2^{j}\rangle...|\chi_N^{j}\rangle$$
for some set of local states $|\chi_k^j\rangle$ for all $j=1...2^{E_{Sch}(|\psi\rangle)}$, $k=1...N$.  We will call the $|\chi_k^j\rangle$ in such an expansion (with $2^{E_{Sch}(|\psi\rangle)}$ terms) an \textit{optimal local basis} for $|\psi\rangle$.  Applying the Schmidt measure to our situation, we immediately have the following result.

\begin{theorem} \label{olbtheorem}
 If an optimal local basis for the effective state $|\Phi\rangle$ is known, then $\langle \bar{\psi}|\phi\rangle$ can be computed in a number of steps that scales as $poly(|E|,g)2^{E_{Sch}(|\Phi\rangle)}$.
\end{theorem}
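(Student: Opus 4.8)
The plan is to read the statement as a direct corollary of the identity $\langle \bar{\psi}|\phi\rangle = \langle \bar{C}^{00}|\Phi\rangle$ established just above the theorem, combined with the polynomial-time single-product-state evaluation of Equation \ref{specialstateeq}. The key observation is that $|\bar{C}^{00}\rangle$ is one of the ``easy'' code states, so an overlap of $|\bar{C}^{00}\rangle$ with \emph{any} product state over the $|E|$ physical qubits is cheap; the only cost beyond that comes from the number of product-state terms needed to express $|\Phi\rangle$, which is exactly $2^{E_{Sch}(|\Phi\rangle)}$ once an optimal local basis is supplied.

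Concretely, I would proceed in four steps. First, invoke the hypothesis: since an optimal local basis for $|\Phi\rangle$ is known, write $|\Phi\rangle = \sum_{j=1}^{2^{E_{Sch}(|\Phi\rangle)}} |\chi_1^j\rangle|\chi_2^j\rangle\cdots|\chi_{|E|}^j\rangle$, a sum of exactly $2^{E_{Sch}(|\Phi\rangle)}$ genuine product states on the $|E|$ edge qubits. Second, use linearity of the inner product to decompose $\langle \bar{C}^{00}|\Phi\rangle = \sum_{j} \langle \bar{C}^{00}|\chi^j\rangle$, where each $|\chi^j\rangle := \bigotimes_{e\in E}|\chi^j_e\rangle$ is a product state. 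Third, observe that each term $\langle \bar{C}^{00}|\chi^j\rangle$ is precisely the quantity evaluated in Equation \ref{specialstateeq} specialized to $\delta = \epsilon = \mathbf{0}$, namely $\langle \bar{C}^{00}|\chi^j\rangle = \mathcal{N}^{(j)}\,\textrm{Pf}\bigl(\mathcal{A}'(w^{(j)})\bigr)$, with the normalization $\mathcal{N}^{(j)}$ and edge weights $w^{(j)}$ read off from the local amplitudes of $|\chi^j\rangle$. Since the Pfaffian is computable in $poly(|E|)$ steps and the data defining $\mathcal{A}'$ and the weight adjustments involve only the genus-$g$ encoding cocycles, each term costs $poly(|E|,g)$. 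Fourth, sum the per-term cost over the $2^{E_{Sch}(|\Phi\rangle)}$ terms to obtain the claimed total of $poly(|E|,g)\,2^{E_{Sch}(|\Phi\rangle)}$.

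There is essentially no analytic obstacle inside this computation; it is a one-line consequence of linearity applied to the already-established polynomial-time single-product-state evaluation, and the only care needed is to confirm that the terms in the optimal decomposition are product states across all $|E|$ physical qubits so that each falls under the regime of Equation \ref{specialstateeq}. The genuine difficulty is displaced entirely into the hypothesis that an optimal local basis is \emph{known}: determining the Schmidt measure $E_{Sch}(|\Phi\rangle)$ and producing a minimal product-state expansion is itself hard in general. I would therefore flag that the theorem is conditional in exactly this sense, and point forward to Theorem \ref{ge2suffthm}, where mild assumptions let one identify $E_{Sch}(|\Phi\rangle)$ with the log number of terms $n(|\Phi\rangle)$ in the fixed canonical-basis expansion of Equation \ref{defeff}, thereby making the bound effective.
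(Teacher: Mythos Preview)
Your proposal is correct and follows exactly the reasoning the paper uses: the theorem is stated as an immediate consequence of the identity $\langle \bar{\psi}|\phi\rangle = \langle \bar{C}^{00}|\Phi\rangle$ together with the fact that each product-state overlap $\langle \bar{C}^{00}|\chi^j\rangle$ costs $poly(|E|,g)$ by Equation~\ref{specialstateeq}, so summing over the $2^{E_{Sch}(|\Phi\rangle)}$ terms of an optimal decomposition gives the claimed scaling. The paper does not give a separate proof beyond the sentence ``Applying the Schmidt measure to our situation, we immediately have the following result,'' and your write-up simply makes this explicit.
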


Computation of $E_{Sch}(|\psi\rangle)$ for a generic multiparty state - no less finding an optimal local basis for it - is generally a very hard problem. Yet an efficient means of computing an optimal local basis is necessary to give Theorem \ref{olbtheorem} much practical significance. In our case, the task of evaluating the Schmidt measure is simplified considerably by the definition of  $|\Phi\rangle$. Since each term Equation \ref{defeff} is a product state, we know that $E_{Sch}(|\Phi\rangle)$ must be less than or equal to $2g$, even though $|\Phi\rangle$ is a state on generally many more than $2g$ qubits.  Furthermore, we can show that under fairly general conditions, Equation in fact \ref{defeff} already provides an optimal local basis for $|\Phi\rangle$.

To state these conditions, we briefly introduce some notation. Let $G-Z$ denote the subgraph of $G$ composed of all edges $e$ such that $|\phi_{e}\rangle$ is not a Pauli Z eigenstate. For any set of edges $\textbf{A}$, let $M_{\textbf{A}}$ be an $|\textbf{A}|\times 2g$ matrix such that $M_{e,k}=1$ if $e\in C'_k$ and $M_{e,k}=0$ if $e\notin C'_k$, for all $e\in\mathbf{A}$. Note that there must exist some edge set $\mathbf{A}=\{e_k\}_{k=1...2g}$ such that $\textrm{rank}\left(M_{\mathbf{A}}\right)=2g$ over the binary field, since the cocycles $C'_k$ are mutually independent as edge sets. For the theorem, we will need to assume a slightly stronger condition:

\begin{theorem} \label{ge2suffthm}
Consider the case where $G-Z$ contains two disjoint sets of $2g$ edges $\mathbf{A}=\{e_k\}_{k=1...2g}$ and $\mathbf{B}=\{e'_k\}_{k=1...2g}$ such that $\textrm{rank}\left(M_{\mathbf{A}}\right)=\textrm{rank}\left(M_{\mathbf{B}}\right)=2g$. Then the expansion in Equation \ref{defeff} yields an optimal local basis for $|\Phi\rangle$ and $E_{Sch}(|\Phi\rangle)=log_2(D)$, where D is the number of nonzero coefficients $\Psi_{\alpha}$.
\end{theorem}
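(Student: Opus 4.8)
The plan is to prove the two inequalities $E_{Sch}(|\Phi\rangle)\le\log_2 D$ and $E_{Sch}(|\Phi\rangle)\ge\log_2 D$ separately. The upper bound is immediate: Equation \ref{defeff} already displays $|\Phi\rangle$ as a sum of $D$ product states (the terms with $\Psi_{\alpha,\beta}\neq 0$), so by the very definition of the Schmidt measure $E_{Sch}(|\Phi\rangle)\le\log_2 D$; and that expansion is optimal the moment the matching lower bound is in hand. All the real work is therefore in the lower bound. First I would streamline notation by merging $\alpha,\beta$ into a single index $s\in\{0,1\}^{2g}$ (recording, cocycle by cocycle, which $\bar Z_k$ are applied), and write $M_{e,k}=1$ iff $e\in C'_k$ for \emph{every} edge $e$, so that with $|\phi_e^0\rangle:=|\phi_e\rangle$, $|\phi_e^1\rangle:=Z_e|\phi_e\rangle$ the effective state becomes
\[ |\Phi\rangle=\sum_{s\in\{0,1\}^{2g}}\Psi_s\bigotimes_{e\in E}|\phi_e^{(Ms)_e}\rangle, \]
where $(Ms)_e=\sum_k M_{e,k}s_k \bmod 2$ is the total $Z$-parity landing on edge $e$. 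The one crucial single-qubit fact is that for $e\in G-Z$ the vector $|\phi_e\rangle$ is not a $Z$-eigenstate, so $|\phi_e^0\rangle$ and $|\phi_e^1\rangle$ are linearly independent and form a basis of that qubit's space; on a $Z$-eigenstate edge they would differ only by a phase and carry no information about $s$, which is exactly why the sets must sit inside $G-Z$.

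The key idea is that although $E_{Sch}$ is a multipartite quantity, it is bounded below by $\log_2$ of the Schmidt rank across any coarse bipartition (every $N$-partite product decomposition is also a product decomposition for the grouped cut). It therefore suffices to exhibit a single bipartition on which $|\Phi\rangle$ has Schmidt rank exactly $D$, and the natural choice is $\mathbf{A}\,|\,(E\setminus\mathbf{A})$. On the $\mathbf{A}$ side the factor $|\mathbf{A}_s\rangle=\bigotimes_{e\in\mathbf{A}}|\phi_e^{(Ms)_e}\rangle$ depends on $s$ only through $M_{\mathbf{A}}s$; since $\mathrm{rank}(M_{\mathbf{A}})=2g$ this map is a bijection on $\{0,1\}^{2g}$, so distinct $s$ give distinct parity tuples, whence the $|\mathbf{A}_s\rangle$ are distinct elements of the product basis built from the pairs $\{|\phi_e^0\rangle,|\phi_e^1\rangle\}_{e\in\mathbf{A}}$ and are linearly independent. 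The second set $\mathbf{B}$, disjoint from $\mathbf{A}$ and hence lying on the other side of the cut, plays the symmetric role: $\mathrm{rank}(M_{\mathbf{B}})=2g$ makes the factors $|\mathbf{B}_s\rangle$ linearly independent across $\mathrm{supp}(\Psi)$, and a short dual-basis argument (apply to the $\mathbf{B}$-register a functional dual to $|\mathbf{B}_s\rangle$) lifts this to linear independence of the full complementary factors $|\mathbf{B}_s\rangle\otimes|\mathbf{C}_s\rangle$, with $\mathbf{C}=E\setminus(\mathbf{A}\cup\mathbf{B})$. Writing $|\Phi\rangle=\sum_s\Psi_s|\mathbf{A}_s\rangle\otimes|\mathrm{rest}_s\rangle$ with linearly independent factors on both sides and passing to orthonormal coordinates yields a coefficient matrix of the form $U\,\mathrm{diag}(\Psi_s)\,V^{T}$, where $U$ and $V$ have full column rank $D$; such a matrix has rank exactly $D$, so the Schmidt rank across the cut is $D$ and $E_{Sch}(|\Phi\rangle)\ge\log_2 D$, completing the argument and certifying Equation \ref{defeff} as optimal.

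I expect the main obstacle to be precisely the role of the \emph{two} disjoint full-rank sets. A single such set guarantees independence of the factors on only one side of a cut; the complementary factors could be parallel and collapse the Schmidt rank below $D$. The content of the hypothesis is that $\mathbf{A}$ resolves all $D$ values of $s$ on one side while $\mathbf{B}$ resolves them on the other, so that no cancellation can occur across this one bipartition. The two supporting technical points I would verify carefully are (i) the product-basis claim, that independence of each local pair $\{|\phi_e^0\rangle,|\phi_e^1\rangle\}$ for $e\in\mathbf{A}$ forces the $2g$-fold tensor products to be independent, and (ii) the tensor-lifting lemma, that independence of the $\mathbf{B}$-factors implies independence of the joint $\mathbf{B}\mathbf{C}$-factors; both are elementary, but they are exactly where the disjointness and the two rank conditions are consumed.
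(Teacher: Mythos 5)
Your proposal is correct and follows essentially the same route as the paper's Appendix B: there too, dual vectors $|\phi^{\gamma,\perp}_{e}\rangle$ supported on $\mathbf{A}$ are used to resolve the $D$ terms of Equation \ref{defeff} across the cut $\mathbf{A}\mid(E\backslash\mathbf{A})$, and the full rank of $M_{\mathbf{B}}$ certifies linear independence of the complementary factors $|\phi_{rest}^{\gamma}\rangle\otimes|\phi_{\mathbf{B}}^{\gamma}\rangle$, forcing $s\ge D$ in any product decomposition. Your repackaging of this as a bipartite Schmidt-rank lower bound is only a cosmetic difference.
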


\begin{proof} See Appendix \ref{appproof}.\end{proof}

\begin{figure}
\includegraphics[width=2in]{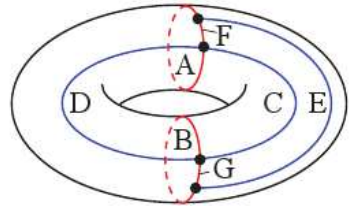}
\caption{\label{fig:torusSM} An embedded graph for which the condition of Theorem \ref{ge2suffthm} does not hold, if one uses cocycles $C'_1=\{C,E\}$, $C'_2=\{A,B\}$, and for at least one edge $e\in \{A,B,C,E\}$ the state $|\phi_e\rangle$ is a Z-eigenstate.}
\end{figure}

The condition assumed for Theorem \ref{ge2suffthm} seems very weak in practice, but in principle may not hold.  Figure \ref{fig:torusSM} shows a simple embedded graph with cocycles $C'_k$ that would violate the condition if any of the cocycle edges were measured in the Z-eigenbasis.

The state $|\Phi\rangle$ and the coefficients $\Psi_{\alpha,\beta}$ can be efficiently computed from the coefficients $\psi_{\alpha,\beta}$ and the definition of the surface-code cocycles $C'_k$.  The number of nonzero $\Psi_{\alpha,\beta}$ is exactly equal to the number of nonzero $\psi_{\alpha,\beta}$. It follows from Theorem \ref{ge2suffthm} then that if the $D$ nonzero coefficients $\psi_{\alpha,\beta}$ are known, and the assumption of the theorem is satisfied, then the quantity $\langle \bar{\psi}|\phi\rangle$ can be evaluated in a number of steps that is polynomial in the size and genus of the embedded graph but increases exponentially with the entanglement in $|\Phi\rangle$, as measured by the Schmidt number. We remark that the assumption of Theorem \ref{ge2suffthm} is satisfied whenever the restriction of each $C'_k$ to $G-Z$ contains two edges that are not shared with any of the other $C'_l$ for $l\ne k$, which would be expected of cocycles on any but the smallest graphs.

\section{Partial Measurement Probabilities} \label{pmp}

In this section we turn to classical simulation in the strong sense of notion 2 in Section \ref{classicalsim}. As a starting point, we recall the result from \cite{rausstoric}, in which it was shown that MBQC on surface-code states can be efficiently simulated when  the underlying graph is planar, and the set of measured qubits $\widetilde{E}$ and its complement $\hat{E}$ are connected at all stages of computation. This result is demonstrated by showing that the probability of obtaining a particular sequence of measurement outcomes on $\widetilde{E}$ is proportional to the inner product between a planar code state on a modified graph $G_{\widetilde{E}} \cup G^*_{\widetilde{E}}$ and a product state. The graph $G_{\widetilde{E}} \cup G^*_{\widetilde{E}}$ is obtained by taking two copies of the subgraph $G_{\widetilde{E}}$ and gluing them together at the boundary of $\widetilde{E}$ and $\hat{E}$.

We will obtain a similar result for general surface-code states, but in the present context the relation is considerably complicated due to the nontrivial topology of $G_{\widetilde{E}} \cup G^*_{\widetilde{E}}$. To handle this new setting, we find it necessary to specialize to cases where underlying graph is what we will call a \textit{punctured cylinder graph} of genus $g$. In doing so, we find that MBQC on a punctured cylinder graph surface code with a natural ordering of single qubit measurements can be simulated (in the strong sense) efficiently in the size of the graph, but inefficiently in $g$. For the states $|\bar{C}^{\alpha,\beta}\rangle$ in this code space, the simulation is completely efficient.

While we expect the result to extend to more general surface-code states and measurement orders, we were unable to prove such as result, and leave it as an open question. Punctured cylinder graphs represent a very simple generalization of the square lattice to higher genus. Furthermore, as a family they contain all higher genus graphs as a graph minor. In principle this makes most of our analysis applicable to arbitrary graphs (see the footnote and discussion of graph minor operations in Section \ref{crossingholessect}), but it is unclear what efficiencies hold in general.

\subsection{General considerations} \label{genconsid}

We consider simulating MBQC by computation of the partial measurement probabilities introduced in Section \ref{classicalsim}:
$$p\left(|\phi_{\widetilde{E}}\rangle\right)=\textrm{tr}_{\hat{E}}\left(\langle \phi_{\widetilde{E}}|\bar{\psi}\rangle\langle\bar{\psi}|\phi_{\widetilde{E}}\rangle\right).$$ 

We will begin by proving some results regarding $p\left(|\phi_{\widetilde{E}}\rangle\right)$ that hold for all connected graphs $G$ with no self loops. In what follows, we will assume as in \cite{rausstoric} that at each stage of the computation, the set of measured edges $\widetilde{E}$ is connected, as is the set of unmeasured edges $\hat{E}$. Our first step will be to construct a Schmidt decomposition for the state $|\bar{+}\rangle$.  This will follow from a few definitions and lemmata.

Let $G(\widetilde{E})$ denote the subgraph of G that contains only the edges $\widetilde{E}$, as well as all vertices $\widetilde{V}$ which have at least one edge incident on them from the set $\widetilde{E}$.  Define $\hat{V}$ and $G(\hat{E})$ similarly, and let $\partial\widetilde{E} \subseteq V := \widetilde{V} \cap \hat{V}$ be the set of vertices containing at least one edge incident upon it from both of the sets $\widetilde{E}$ and $\hat{E}$. We can think of $\partial\widetilde{E}$ as the boundary between the sets $\widetilde{E}$ and $\hat{E}$.

Let $E_0(\widetilde{E})$ denote the set of cycles on the graph $G(\hat{E})$, and define $E_0(\hat{E})$ analogously.  Under the assumption that $\widetilde{E}$ is connected, we have:
\begin{lemma} \label{pml1}
 $|E_0(\widetilde{E})|=2^{|\widetilde{E}|-|\widetilde{V}|+1}$.
\end{lemma}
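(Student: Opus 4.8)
The plan is to recognize that $E_0(\widetilde{E})$ is precisely the cycle space of the subgraph $G(\widetilde{E})$, a vector space over $\mathbb{Z}_2$ under symmetric difference, and to count its dimension. This is the same statement already quoted for the full graph (namely $|E_0(G)| = 2^{|E|-|V|+1}$), now specialized to the connected subgraph $G(\widetilde{E})$ on vertex set $\widetilde{V}$ and edge set $\widetilde{E}$. Since $|E_0(\widetilde{E})| = 2^{\dim E_0(\widetilde{E})}$, it suffices to show that $\dim E_0(\widetilde{E}) = |\widetilde{E}| - |\widetilde{V}| + 1$.

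First I would exhibit an explicit basis via a spanning tree. Because $\widetilde{E}$ is connected by assumption, $G(\widetilde{E})$ is a connected graph, so it admits a spanning tree $T \subseteq \widetilde{E}$ with exactly $|\widetilde{V}| - 1$ edges. The remaining $|\widetilde{E}| - |\widetilde{V}| + 1$ edges (the chords) each determine a unique fundamental cycle: for a chord $e$, the subgraph $T \cup \{e\}$ contains a single cycle $T(e)$, which is the same fact already used in the tree-cotree construction earlier in the paper. The claim is then that the collection $\{T(e) : e \text{ a chord}\}$ is a basis for $E_0(\widetilde{E})$.

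The two things to verify are linear independence and spanning. Linear independence is immediate, since each fundamental cycle $T(e)$ contains its own chord $e$, which lies in no other fundamental cycle; hence no nontrivial $\mathbb{Z}_2$-combination of them can vanish. For spanning, given any cycle $x \in E_0(\widetilde{E})$, let $S$ be the set of chords it contains and form $x' := x \oplus \bigoplus_{e \in S} T(e)$. Because each $T(e)$ contains exactly the chord $e$ among all chords, $x'$ contains no chords, i.e.\ $x' \subseteq T$; but $x'$ is a sum of cycles and hence itself a cycle, and a tree contains no nonempty cycle, forcing $x' = \emptyset$ and therefore $x = \bigoplus_{e\in S} T(e)$.

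The main obstacle---really the only substantive point---is the spanning argument, which is exactly where connectivity of $\widetilde{E}$ enters (a disconnected graph would instead have cycle rank $|\widetilde{E}| - |\widetilde{V}| + c$, with $c$ the number of components). An equivalent route would be to identify $E_0(\widetilde{E})$ with the kernel of the $\mathbb{Z}_2$ incidence map $\partial : \mathbb{Z}_2^{\widetilde{E}} \to \mathbb{Z}_2^{\widetilde{V}}$ sending each edge to the sum of its two endpoints, so that $\partial x = 0$ encodes the even-degree condition defining a cycle, and then to apply rank-nullity together with the fact that a connected graph has incidence matrix of rank $|\widetilde{V}| - 1$; this again reduces to connectivity. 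Counting the chords then yields $\dim E_0(\widetilde{E}) = |\widetilde{E}| - |\widetilde{V}| + 1$, which gives the stated cardinality.
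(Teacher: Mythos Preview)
Your proof is correct. The paper takes the alternative route you mention at the end: it views $E_0(\widetilde{E})$ as the solution set of the $|\widetilde{V}|$ linear equations $\sum_{e\in\delta s} x_e = 0$ over $\mathbb{Z}_2$, observes that summing all of them yields the trivial relation $0=0$ (since every edge contributes to exactly two vertex equations), and asserts that no other linear dependencies exist, giving $|\widetilde{V}|-1$ independent constraints and hence a kernel of dimension $|\widetilde{E}|-|\widetilde{V}|+1$. Your spanning-tree argument is more explicit about where connectivity enters---namely in the existence of a spanning tree---whereas the paper's line ``the equations are otherwise bitwise linearly independent'' is precisely the place connectivity is needed but is stated without proof (for a disconnected graph, summing over the vertices of any single component would give a further dependency). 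The rank-nullity version is shorter and aligns better with the paper's subsequent lemmas, which continue to count constraints; your fundamental-cycle basis is more self-contained and makes the role of the connectivity hypothesis transparent.
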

\begin{proof}
 $E_0(\widetilde{E})$ is a set of $|\widetilde{E}|$ binary variables $\{x_e\}$ satisfying the $|\widetilde{V}|$ binary equations: $\sum_{e\in \delta s} x_e=0$ for all $s\in\widetilde{V}$. If we add together the equation $\sum_{e\in \delta s} x_e=0$ over all $s \in \widetilde{V}$, then each binary variable $x_e$ appears either twice or not at all, and we obtain $0=0$. The equations are otherwise bitwise linearly independent so the total number of independent binary equations is $|\widetilde{V}|-1$. $\Box$
\end{proof}

Now let $E_0(\widetilde{E},\partial\widetilde{E})$ denote the set of binary strings over the edges such that the cycle condition holds everywhere except possibly on the vertices on the boundary $\partial\widetilde{E}$.

\begin{lemma} \label{pml2}
 $|E_0(\widetilde{E},\partial\widetilde{E})|=2^{|\widetilde{E}|-|\widetilde{V}|+|\partial\widetilde{E}|}$.
\end{lemma}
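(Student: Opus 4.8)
The plan is to mimic the counting argument in the proof of Lemma \ref{pml1}, but now imposing the cycle condition only at the interior vertices $\widetilde{V}\setminus\partial\widetilde{E}$ and carefully tracking the rank of the resulting linear system over $\mathbb{Z}_2$. By definition, $E_0(\widetilde{E},\partial\widetilde{E})$ is the solution space of the $|\widetilde{V}|-|\partial\widetilde{E}|$ homogeneous binary equations $\sum_{e\in\delta s}x_e=0$, one for each interior vertex $s\in\widetilde{V}\setminus\partial\widetilde{E}$, in the $|\widetilde{E}|$ variables $\{x_e\}$. If these equations are linearly independent, then the solution space has dimension $|\widetilde{E}|-(|\widetilde{V}|-|\partial\widetilde{E}|)=|\widetilde{E}|-|\widetilde{V}|+|\partial\widetilde{E}|$, which yields the claimed cardinality. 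So the entire lemma reduces to showing that the interior-vertex equations carry no linear dependency.

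First I would characterize dependencies combinatorially, exactly as in Lemma \ref{pml1}. A linear dependency is a nonempty subset $S$ of interior vertices for which $\sum_{s\in S}\sum_{e\in\delta s}x_e$ is the zero linear form, i.e.\ every edge $e\in\widetilde{E}$ is incident on an even number of vertices of $S$. Since $G$ has no self-loops, each edge has exactly two endpoints, so this condition says that each edge has both endpoints in $S$ or neither, i.e.\ no edge of $G(\widetilde{E})$ crosses the boundary of $S$. Hence $S$ must be a union of connected components of $G(\widetilde{E})$.

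Next I would invoke connectivity. Under the standing assumption that $\widetilde{E}$ is connected, $G(\widetilde{E})$ is connected, so its only vertex subsets with empty edge-boundary are $\emptyset$ and all of $\widetilde{V}$. Because $S$ is constrained to lie in $\widetilde{V}\setminus\partial\widetilde{E}$, and the boundary $\partial\widetilde{E}$ is nonempty at any intermediate stage where both the measured set $\widetilde{E}$ and the unmeasured set $\hat{E}$ are nonempty in a connected $G$, the subset $S$ cannot equal $\widetilde{V}$; hence $S=\emptyset$. This is precisely the statement that the interior-vertex equations are independent, which completes the count. It is the same redundancy phenomenon as in Lemma \ref{pml1}: there the single dependency arose from summing the equation over all of $\widetilde{V}$, and here we have simply dropped the boundary rows, which destroys that lone dependency without introducing any new one.

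The main obstacle is this rank computation, specifically ruling out any accidental dependency among the restricted set of equations; the rest is bookkeeping. Everything hinges on the observation that, for a graph without self-loops, a dependency forces a nonempty vertex subset with no crossing edges, and connectivity then pins this subset down to the trivial options. I would take care to record explicitly that $\partial\widetilde{E}\neq\emptyset$ in the regime of interest, since otherwise the formula would collapse to the Lemma \ref{pml1} count, with its extra factor of two coming from the surviving all-vertices dependency.
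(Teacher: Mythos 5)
Your proof is correct and follows essentially the same route as the paper: both count solutions of the $\mathbb{Z}_2$-linear system given by imposing the cycle condition only at the interior vertices $\widetilde{V}\setminus\partial\widetilde{E}$, and the cardinality follows once those equations are shown to be linearly independent. The paper merely asserts that excluding the boundary vertices removes the dependency, whereas you supply the missing justification (a dependency would force a nonempty vertex subset of $\widetilde{V}\setminus\partial\widetilde{E}$ with no crossing edges, which is impossible since $G(\widetilde{E})$ is connected and $\partial\widetilde{E}\neq\emptyset$), a welcome bit of added rigor rather than a different method.
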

\begin{proof}
 $E_0(\widetilde{E})$ is a set of $|\widetilde{E}|$ binary variables $\{x_e\}$ satisfying the $|\widetilde{V}|-|\partial\widetilde{E}|$ binary equations: $\sum_{e\in \delta s} x_e=0$ for all $s\in\widetilde{V}\backslash \partial\widetilde{E}$. The exclusion of the vertices in $\partial\widetilde{E}$ removes any linearly dependence among these equations. $\Box$
\end{proof}

We now turn to the structure of the set $E_0(\widetilde{E},\partial\widetilde{E})$. For any $\widetilde{x}\in E_0(\widetilde{E},\partial\widetilde{E})$, let $\Delta\widetilde{x}$ be a bitstring encoding the parity of edges from $\widetilde{x}$ incident on the vertices $s\in\partial\widetilde{E}$, i.e. $\Delta\widetilde{x}_s=\sum_{e\in \delta s}\widetilde{x}_e$ for each $s\in\partial\widetilde{E}$. Following Ref \cite{rausstoric}, we call $\Delta\widetilde{x}$ the \textit{syndrome} of $\widetilde{x}$. Then define $E_0(\widetilde{E},u)\subset E_0(\widetilde{E},\partial\widetilde{E})$ to be the set $E_0(\widetilde{E},u):= \{\widetilde{x}\in E_0(\widetilde{E},\partial\widetilde{E}): (\Delta\widetilde{x})_s=u_s \hspace{.1in} \forall s\in \partial\widetilde{E}\}$, where $u = u_1...u_{|\partial\widetilde{E}|}$ is a given syndrome. 

\begin{lemma} \label{pml3}
 $E_0(\widetilde{E},\partial\widetilde{E})=\bigcup_{u\in S} E_0(\widetilde{E},u)$ where $S$ is the set of all bitstrings over the vertices in $\partial\widetilde{E}$ that have an even number of 1's.
\end{lemma}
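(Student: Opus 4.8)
The plan is to establish the claimed set equality by proving the two inclusions separately. The reverse inclusion $\bigcup_{u\in S} E_0(\widetilde{E},u)\subseteq E_0(\widetilde{E},\partial\widetilde{E})$ is immediate: by the very definition of $E_0(\widetilde{E},u)$, each such set is a subset of $E_0(\widetilde{E},\partial\widetilde{E})$. The substantive content is therefore the forward inclusion, which amounts to showing that every $\widetilde{x}\in E_0(\widetilde{E},\partial\widetilde{E})$ has a syndrome $\Delta\widetilde{x}$ of even Hamming weight, so that $\Delta\widetilde{x}\in S$ and hence $\widetilde{x}\in E_0(\widetilde{E},\Delta\widetilde{x})$.

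To prove the forward inclusion I would argue by a handshake/double-counting parity argument at the vertices. Given $\widetilde{x}\in E_0(\widetilde{E},\partial\widetilde{E})$, I would consider the total parity of its syndrome, $\sum_{s\in\partial\widetilde{E}}(\Delta\widetilde{x})_s=\sum_{s\in\partial\widetilde{E}}\sum_{e\in\delta s}\widetilde{x}_e$, and show it vanishes modulo $2$. The first key step is to extend this sum from the boundary vertices to all of $\widetilde{V}$: because $\widetilde{x}$ satisfies the cycle condition $\sum_{e\in\delta s}\widetilde{x}_e=0$ at every interior vertex $s\in\widetilde{V}\setminus\partial\widetilde{E}$ (this is precisely the defining property of $E_0(\widetilde{E},\partial\widetilde{E})$), adjoining those vanishing terms leaves the sum unchanged, so that $\sum_{s\in\partial\widetilde{E}}(\Delta\widetilde{x})_s\equiv\sum_{s\in\widetilde{V}}\sum_{e\in\delta s}\widetilde{x}_e \pmod 2$.

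The second key step is to recognize the right-hand double sum as counting each edge by its endpoints. Since $G(\widetilde{E})$ contains both endpoints of every edge in $\widetilde{E}$, and self-loops have been excluded throughout, each $\widetilde{x}_e$ is counted exactly twice, once for each incident vertex. Thus $\sum_{s\in\widetilde{V}}\sum_{e\in\delta s}\widetilde{x}_e=\sum_{e\in\widetilde{E}}2\widetilde{x}_e\equiv 0\pmod 2$, and the syndrome weight is even, giving $\Delta\widetilde{x}\in S$. This completes the forward inclusion and hence the equality.

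I do not anticipate a genuine obstacle here; the argument is a standard edge-endpoint parity count. The only point requiring care is the bookkeeping in the double sum: one must verify that every edge of $\widetilde{E}$ has both of its endpoints among the summation vertices $\widetilde{V}$ (which holds by the construction of $G(\widetilde{E})$) and that the no-self-loop assumption is in force, so that each edge contributes an even multiple of $\widetilde{x}_e$. I would also remark in passing that the union is in fact \emph{disjoint}, since the syndrome $\Delta\widetilde{x}$ is a single-valued function of $\widetilde{x}$, although only the set equality is needed for the statement as given.
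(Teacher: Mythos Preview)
Your proof is correct and uses essentially the same handshake/parity argument as the paper: summing the vertex constraints over all of $\widetilde{V}$ forces the total syndrome weight to be even, which gives the forward inclusion, while the reverse inclusion is immediate. The paper's proof additionally verifies that $E_0(\widetilde{E},u)\neq\emptyset$ for every $u\in S$ (by noting there are no further linear dependencies among the defining equations), which is not needed for the set equality itself but is used in the subsequent Corollary~\ref{pmc1}.
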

\begin{proof}
 Since every $\widetilde{x}\in E_0(\widetilde{E},\partial\widetilde{E})$ has some parity $\partial\widetilde{x}$ on the vertices in $\partial\widetilde{E}$, it is immediate that $E_0(\widetilde{E},\partial\widetilde{E})=\bigcup_{u\in \textbf{u}} E_0(\widetilde{E},u)$ for some set $\textbf{u}$ of bitstrings over the vertices in $\partial\widetilde{E}$. We only need to show that $\textbf{u}=S$. Indeed, $E_0(\widetilde{E},u)$ is defined by the $|\widetilde{V}|$ equations: $\sum_{e\in \delta s} \widetilde{x}_e=0$ for all $s\in\widetilde{V}\backslash \partial\widetilde{E}$, and $\sum_{e\in \delta s} \widetilde{x}_e=u_s$ for all $s\in \partial\widetilde{E}$. If we add together these equations for all $s\in \widetilde{V}$, we obtain: $0=\sum_{s\in \partial\widetilde{E}} u_s$. Thus the equations defining $E_0(\widetilde{E},u)$ are inconsistent if $u\notin S$.  On the other hand, there are no further linear dependencies among the equations, so $E_0(\widetilde{E},u)\ne \emptyset$ if $u\in S(\widetilde{E})$. Since $E_0(\widetilde{E},u) \cap E_0(\widetilde{E},u')=\emptyset$ for any $u,u' \in S$ such that $u\ne u'$, it follows that $\textbf{u}$ cannot be a proper subset of $S$. $\Box$
\end{proof}
\begin{corollary} \label{pmc1}
  $|E_0(\widetilde{E},u)|=2^{|\widetilde{E}|-|\widetilde{V}|+1}$ for all $u\in S(\widetilde{E})$ (and similarly for $\hat{E}$).
\end{corollary}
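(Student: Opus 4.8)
The plan is to recognize $E_0(\widetilde{E},u)$ as the solution set of an inhomogeneous system of $\mathbb{Z}_2$-linear equations whose associated homogeneous system is precisely the one defining the cycle space $E_0(\widetilde{E})$, and then to invoke the standard fact that a \emph{consistent} inhomogeneous linear system has a solution set that is a coset of the kernel of the homogeneous part, hence of the same cardinality. Since $E_0(\widetilde{E})$ is the $u=\mathbf{0}$ case and its size is already given by Lemma \ref{pml1}, this will immediately yield the claim for every admissible syndrome.

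First I would note that $E_0(\widetilde{E},u)$ consists of those bitstrings $\widetilde{x}$ obeying $\sum_{e\in\delta s}\widetilde{x}_e = 0$ for all $s\in\widetilde{V}\backslash\partial\widetilde{E}$ together with $\sum_{e\in\delta s}\widetilde{x}_e = u_s$ for all $s\in\partial\widetilde{E}$. Setting $u=\mathbf{0}$ recovers exactly the defining equations of $E_0(\widetilde{E})$, so the homogeneous version of this system has solution set $E_0(\widetilde{E})$, which is a vector space over $\mathbb{Z}_2$ of cardinality $2^{|\widetilde{E}|-|\widetilde{V}|+1}$ by Lemma \ref{pml1}.

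Next, for any $u\in S(\widetilde{E})$, Lemma \ref{pml3} guarantees $E_0(\widetilde{E},u)\neq\emptyset$, so I may fix a single element $\widetilde{x}^{(u)}\in E_0(\widetilde{E},u)$. I would then verify that the map $\widetilde{x}\mapsto \widetilde{x}\oplus\widetilde{x}^{(u)}$ is a bijection from $E_0(\widetilde{E},u)$ onto $E_0(\widetilde{E})$: if $\widetilde{x}$ has syndrome $u$, then $\widetilde{x}\oplus\widetilde{x}^{(u)}$ has syndrome $u\oplus u=\mathbf{0}$ and hence lies in $E_0(\widetilde{E})$, while the inverse $y\mapsto y\oplus\widetilde{x}^{(u)}$ sends the zero syndrome back to $u$. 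Because bitwise addition of a fixed string is invertible, this is indeed a bijection, so $|E_0(\widetilde{E},u)|=|E_0(\widetilde{E})|=2^{|\widetilde{E}|-|\widetilde{V}|+1}$; the identical argument applies verbatim to $\hat{E}$.

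There is essentially no hard part here: the one point to check with care is that the membership condition $u\in S(\widetilde{E})$ is exactly the consistency condition for the inhomogeneous system, and this is precisely what the proof of Lemma \ref{pml3} establishes (summing all vertex equations forces $\sum_{s\in\partial\widetilde{E}}u_s=0$). One could alternatively recast the whole statement as the rank count $|\widetilde{E}|-(|\widetilde{V}|-1)$ already implicit in the proof of Lemma \ref{pml2}, but the explicit coset bijection is the cleanest route and makes the uniformity of $|E_0(\widetilde{E},u)|$ across all admissible syndromes manifest.
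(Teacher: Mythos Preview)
Your proof is correct. You establish the coset bijection directly (essentially proving what the paper records separately as Corollary~\ref{pmc2}) and then read off the cardinality from Lemma~\ref{pml1}. The paper takes a different route: it first argues that all pieces $E_0(\widetilde{E},u)$ for $u\in S$ have equal size (same linear system, same rank, independent of the right-hand side), then uses the disjoint partition of Lemma~\ref{pml3} to write the common size as $|E_0(\widetilde{E},\partial\widetilde{E})|/|S|$, and finally evaluates this quotient via Lemma~\ref{pml2} and $|S|=2^{|\partial\widetilde{E}|-1}$. Your approach is arguably cleaner, since it bypasses Lemma~\ref{pml2} entirely and makes the affine-coset structure manifest; the paper's counting argument has the minor advantage that it does not need to single out a basepoint $\widetilde{x}^{(u)}$. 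In effect you have reversed the logical order of Corollaries~\ref{pmc1} and~\ref{pmc2} relative to the paper.
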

\begin{proof}
The above considerations imply that $E_0(\widetilde{E},u)$ has the same size for each $u\in S(\widetilde{E})$ and so $|E_0(\widetilde{E},\partial\widetilde{E})|=|S(\widetilde{E})||E_0(\widetilde{E},u)|$. From its definition, $|S(\widetilde{E})|=2^{|\partial\widetilde{E}|-|\widetilde{n}|}$, while $|E_0(\widetilde{E},\partial\widetilde{E})|$ is given by Lemma \ref{pml2}. $\Box$
\end{proof}
\begin{corollary} \label{pmc2}
 For any $u\in S(\widetilde{E})$, $E_0(\widetilde{E},u)=\widetilde{z}(u)\oplus E_0(\widetilde{E})$ where $\widetilde{z}(u)$ is any fixed member of the set $E_0(\widetilde{E},u)$.
\end{corollary}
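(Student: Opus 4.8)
The plan is to recognize Corollary \ref{pmc2} as the standard statement that the solution set of an inhomogeneous $\mathbb{Z}_2$-linear system, when nonempty, is an affine coset of the solution set of the associated homogeneous system. Here the relevant linear map is the syndrome map $\widetilde{x}\mapsto\Delta\widetilde{x}$ sending a bitstring over $\widetilde{E}$ to its vector of vertex parities $\left(\sum_{e\in\delta s}\widetilde{x}_e\right)_{s\in\partial\widetilde{E}}$ on the boundary. The set $E_0(\widetilde{E})$ is precisely the kernel of this map restricted to strings already satisfying the cycle condition on the interior vertices, while $E_0(\widetilde{E},u)$ is the preimage of the fixed syndrome $u$.

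First I would record that the syndrome map is $\mathbb{Z}_2$-linear: since each vertex parity $\sum_{e\in\delta s}\widetilde{x}_e$ is a linear functional of $\widetilde{x}$, we have $\Delta(\widetilde{x}\oplus\widetilde{y})=\Delta\widetilde{x}\oplus\Delta\widetilde{y}$, and likewise the interior cycle conditions are linear constraints. Then, taking any fixed $\widetilde{z}(u)\in E_0(\widetilde{E},u)$ (which exists by Lemma \ref{pml3}, since $u\in S(\widetilde{E})$), I would establish the two inclusions. For the forward inclusion, given $\widetilde{x}\in E_0(\widetilde{E},u)$, linearity yields $\Delta(\widetilde{x}\oplus\widetilde{z}(u))=u\oplus u=0$ and the interior parities of $\widetilde{x}\oplus\widetilde{z}(u)$ all vanish, so $\widetilde{x}\oplus\widetilde{z}(u)\in E_0(\widetilde{E})$, whence $\widetilde{x}\in\widetilde{z}(u)\oplus E_0(\widetilde{E})$. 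For the reverse inclusion, any $\widetilde{z}(u)\oplus c$ with $c\in E_0(\widetilde{E})$ has syndrome $u\oplus 0=u$ and vanishing interior parities, so it lies in $E_0(\widetilde{E},u)$.

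An even shorter route, which I would likely prefer in order to keep the argument crisp, is to prove only the easy reverse inclusion $\widetilde{z}(u)\oplus E_0(\widetilde{E})\subseteq E_0(\widetilde{E},u)$ by the same linearity computation, and then invoke the cardinality count already in hand: by Lemma \ref{pml1} and Corollary \ref{pmc1} both sides have size $2^{|\widetilde{E}|-|\widetilde{V}|+1}$ (the map $c\mapsto\widetilde{z}(u)\oplus c$ is a bijection, so the coset inherits the cardinality of $E_0(\widetilde{E})$), forcing equality. There is essentially no obstacle here; the only point requiring care is confirming that $E_0(\widetilde{E},u)$ is genuinely nonempty so that a representative $\widetilde{z}(u)$ can be chosen, which is exactly what $u\in S(\widetilde{E})$ guarantees via Lemma \ref{pml3}.
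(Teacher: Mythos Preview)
Your proposal is correct, and the ``shorter route'' you say you would prefer is precisely the paper's argument: establish the inclusion $\widetilde{z}(u)\oplus E_0(\widetilde{E})\subseteq E_0(\widetilde{E},u)$ via linearity of the syndrome map, then conclude equality from the cardinality match given by Lemma~\ref{pml1} and Corollary~\ref{pmc1}. Your alternative two-inclusion argument is also fine and avoids invoking the counting results, but otherwise there is no meaningful difference.
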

\begin{proof}
 For any $\widetilde{x} \in E_0(\widetilde{E})$ and $\widetilde{z}(u) \in E_0(\widetilde{E},u)$, $\widetilde{x} \oplus \widetilde{z}(u) \in E_0(\widetilde{E})$, since $\Delta(\widetilde{x} \oplus \widetilde{z}(u))=\Delta\widetilde{x}+\Delta\widetilde{z}(u) = u$. Thus, $\widetilde{z}(u)\oplus E_0(\widetilde{E}) \subseteq E_0(\widetilde{E},u)$.  Furthermore, $|\widetilde{z}(u) \oplus E_0(\widetilde{E})|=|E_0(\widetilde{E},u)|$, so $E_0(\widetilde{E},u)=\widetilde{z}(u)\oplus E_0(\widetilde{E})$. $\Box$
\end{proof}

Note that all of the above considerations apply to the edge set $\hat{E}$ as well. We are now in a position to construct a Schmidt decomposition of the state $|\bar{+}\rangle$ with respect to the $(\widetilde{E}, \hat{E})$ bipartition of qubits.  

\begin{theorem}
A Schmidt decomposition of $|\bar{+}\rangle$ is
\begin{equation} \label{schmidt}
|\bar{+}\rangle = {1\over{\sqrt{2^{|\partial\widetilde{E}|-1}}}} \sum_{u\in S} |K_{\widetilde{E}}(u)\rangle\otimes|K_{\hat{E}}(u)\rangle, 
\end{equation}
where
$$|K_{\widetilde{E}}(u)\rangle:={1\over{\sqrt{|E_0(\widetilde{E},u)|}}} \sum_{\widetilde{x}\in E_0(\widetilde{E},u)}|\widetilde{x}\rangle$$
and $|K_{\hat{E}}(u)\rangle$ is defined analogously.
\end{theorem}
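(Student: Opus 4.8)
The plan is to expand the state $|\bar{+}\rangle$ directly from its definition in Equation \ref{plusstate} and reorganize the sum over the cycle space $E_0(G)$ according to how each global cycle restricts to the two sides of the bipartition. First I would observe that any $x \in E_0(G)$ splits into a part $\widetilde{x}$ supported on $\widetilde{E}$ and a part $\hat{x}$ supported on $\hat{E}$. Since $x$ is a genuine cycle on the full graph, the parity condition $\sum_{e\in\delta v}x_e = 0$ holds at every vertex; for vertices interior to either side this forces $\widetilde{x}$ and $\hat{x}$ each to be cycles there, but at the boundary vertices $s \in \partial\widetilde{E}$ the edges of $x$ incident on $s$ come partly from $\widetilde{E}$ and partly from $\hat{E}$, so only the combined parity vanishes. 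This means $\widetilde{x} \in E_0(\widetilde{E},\partial\widetilde{E})$ and $\hat{x} \in E_0(\hat{E},\partial\widetilde{E})$ with matching syndromes, i.e. $\Delta\widetilde{x} = \Delta\hat{x} = u$ for some $u \in S$.

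Next I would use this to rewrite the sum as
\begin{eqnarray*}
|\bar{+}\rangle = {1\over\sqrt{|E_0(G)|}} \sum_{u\in S}\ \sum_{\substack{\widetilde{x}\in E_0(\widetilde{E},u)\\ \hat{x}\in E_0(\hat{E},u)}} |\widetilde{x}\rangle \otimes |\hat{x}\rangle,
\end{eqnarray*}
where the key point, supplied by Lemma \ref{pml3}, is that the allowed syndromes are exactly the even-weight bitstrings $S$, and that every pairing of a $\widetilde{x}$ with syndrome $u$ and a $\hat{x}$ with the same syndrome $u$ glues back to a legitimate global cycle. I would verify that this correspondence between $E_0(G)$ and the disjoint union $\bigcup_{u\in S} E_0(\widetilde{E},u)\times E_0(\hat{E},u)$ is a bijection, so that no terms are double counted and none are missed. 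The inner double sum then factors as $\big(\sum_{\widetilde{x}\in E_0(\widetilde{E},u)}|\widetilde{x}\rangle\big)\otimes\big(\sum_{\hat{x}\in E_0(\hat{E},u)}|\hat{x}\rangle\big)$, which up to normalization is exactly $|K_{\widetilde{E}}(u)\rangle\otimes|K_{\hat{E}}(u)\rangle$.

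It then remains to fix the prefactor and confirm orthonormality. Using Corollary \ref{pmc1}, each block $E_0(\widetilde{E},u)$ has size $2^{|\widetilde{E}|-|\widetilde{V}|+1}$ independent of $u$, and similarly for $\hat{E}$; combining these counts with $|E_0(G)| = 2^{2g+|F|-1}$ and $|S| = 2^{|\partial\widetilde{E}|-1}$ should reproduce the stated coefficient $1/\sqrt{2^{|\partial\widetilde{E}|-1}}$ after the per-block normalizations of $|K_{\widetilde{E}}(u)\rangle$ and $|K_{\hat{E}}(u)\rangle$ are pulled out. Finally I would check that the decomposition is genuinely Schmidt: the vectors $\{|K_{\widetilde{E}}(u)\rangle\}_{u\in S}$ are mutually orthogonal because distinct syndromes force disjoint supports in the computational basis (two bitstrings $\widetilde{x},\widetilde{x}'$ with $\Delta\widetilde{x}\neq\Delta\widetilde{x}'$ cannot coincide), and likewise on the $\hat{E}$ side, and each is manifestly normalized.

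The main obstacle is the bookkeeping in the bijection step: I must argue carefully that the syndrome-matching condition is both necessary (every global cycle induces equal syndromes on the two sides) and sufficient (any two same-syndrome partial cycles glue to a global cycle), and that this holds precisely because the shared boundary vertices $\partial\widetilde{E}$ are the only place where the cycle condition couples the two sides. The even-weight restriction $u\in S$, rather than all bitstrings, is the subtle point, and it is exactly what Lemma \ref{pml3} delivers; the counting argument for the prefactor is then routine but must be done consistently across both sides of the bipartition.
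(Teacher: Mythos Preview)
Your proposal is correct and follows essentially the same route as the paper: both arguments identify $E_0(G)$ with pairs $(\widetilde{x},\hat{x})\in E_0(\widetilde{E},\partial\widetilde{E})\times E_0(\hat{E},\partial\widetilde{E})$ having matching syndrome, invoke Lemma~\ref{pml3} to restrict to $u\in S$, use Corollary~\ref{pmc1} for the normalization, and then check orthonormality of the $|K_{\widetilde{E}}(u)\rangle$. The only cosmetic difference is that for the prefactor it is slightly more direct to use $|E_0(G)|=2^{|E|-|V|+1}$ together with $|\widetilde{V}|+|\hat{V}|=|V|+|\partial\widetilde{E}|$, rather than the genus/face formula you mention.
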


\begin{proof}
Note first that
$$E_0(G)=\{(\widetilde{x},\hat{x})\in E_0(\widetilde{E},\partial\widetilde{E})\otimes E_0(\hat{E},\partial\widetilde{E}): \partial\widetilde{x} = \partial\hat{x}\}.$$

Then, by Lemma \ref{pml3}:
\begin{equation} \label{schmidtsecondstep}
|\bar{+}\rangle = {1\over{\sqrt{|E_0(G)|}}} \sum_{\widetilde{u},\hat{u}\in S}\sum_{\widetilde{x}\in E_0(\widetilde{E},\widetilde{u})} \sum_{\hat{x}\in E_0(\hat{E},\hat{u})} \delta_{\widetilde{u},\hat{u}}|\widetilde{x}\rangle|\hat{x}\rangle.
\end{equation}
Equation \ref{schmidt} now follows by the definition of $S$ and working out the normalizations using Corollary \ref{pmc1}. It is easy to see that $\langle K_{\widetilde{E}}(u')|K_{\widetilde{E}}(u)\rangle=\delta_{u,u'}$, and similarly for $\hat{E}$.$\Box$
\end{proof}

The reduced density matrix on the subsystem of qubits corresponding to the edges in $\widetilde{E}$ is then, by Equation \ref{schmidt}:
\begin{equation} \label{reddensitymatrix}
\rho_{\widetilde{E}}=\textrm{tr}_{e\in\hat{E}}\left(|\bar{+}\rangle\langle \bar{+}|\right) = {1\over{2^{|\partial\widetilde{E}|-1}}} \sum_{u\in S} |K_{\widetilde{E}}(u)\rangle \langle K_{\widetilde{E}}(u)|.
\end{equation}

We note that it is evident from the normalization in Equation \ref{reddensitymatrix} that $|\bar{+}\rangle$ obeys the so-called \textit{entanglement area law}: the entropy of entanglement of a block of spins grows linearly with the size of its perimeter.

For an arbitrary surface-code state $|\bar{\psi}\rangle = \sum_{\gamma \in \{0,1\}^{\otimes 2g}} c_{\gamma} |\bar{X}_{\gamma}\rangle$, define $\rho_{\widetilde{E}}(|\bar{\psi}\rangle):=\textrm{tr}_{e\in\hat{E}}\left(|\bar{\psi}\rangle\langle \bar{\psi}|\right)$. Then using Equation \ref{schmidt} we have
\begin{eqnarray}
\rho_{\widetilde{E}}(|\bar{\psi}\rangle)&=&{1\over{2^{|\partial\widetilde{E}|-1}}} \sum_{u,u'\in S} \sum_{\gamma, \delta} c_{\gamma}c^*_{\delta}\bar{Z}^{\gamma}_{\widetilde{E}}|K_{\widetilde{E}}(u)\rangle \langle K_{\widetilde{E}}(u')|\bar{Z}^{\delta}_{\widetilde{E}}\nonumber\\
&&\langle K_{\hat{E}}(u')|\bar{Z}^{\gamma\oplus\delta}_{\hat{E}}|K_{\hat{E}}(u)\rangle\label{rho1},
\end{eqnarray}
where $\bar{Z}^{\gamma}_{\widetilde{E}}:=\prod_{j=1}^{2g}(\prod_{e\in C'_j \cap \widetilde{E}}Z_e)^{\gamma_j}$ and analogously for $\hat{E}$. We can evaluate the matrix product using the definition of $|K_{\widetilde{E}}(u)\rangle$:
\begin{eqnarray}
&\langle K_{\hat{E}}(u')|\bar{Z}^{\gamma\oplus\delta}_{\hat{E}}|&K_{\hat{E}}(u)\rangle={1\over{|E_0(\hat{E},u)|}} \sum_{\hat{x},\hat{y}\in E_0(\hat{E},u)}\langle \hat{y}|\bar{Z}^{\gamma\oplus\delta}_{\hat{E}}|\hat{x}\rangle \nonumber\\
&=&{\delta_{u,u'}\over{|E_0(\hat{E},u)|}}\prod_{j=1}^{2g}(-1)^{(\gamma\oplus\delta)_j|\hat{z}(u)\cap C'_j|}\nonumber\\
&&\sum_{\hat{x}\in E_0(\hat{E})}\prod_{j=1}^{2g}(-1)^{(\gamma\oplus\delta)_j|\hat{x}\cap C'_j|} \label{matrixprod},
\end{eqnarray}
where $\hat{z}(u)$ is any fixed member of the set $E_0(\hat{E},u)$ and we have used Corollary \ref{pmc2} in the last step. Consider any value of $j$ such that $(\gamma\oplus\delta)_j=1$.  If there exists any $\hat{y}\in E_0(\hat{E})$ such that $|\hat{y}\cap C'_j|=1$ (mod 2), then the above summation over $\hat{x}\in E_0(\hat{E})$ vanishes.  That is because for each $\hat{x}\in E_0(\hat{E})$, the bitstring $\hat{x}\oplus \hat{y}$ term will have the opposite sign as the $\hat{x}$ term and the two will cancel, since $|\hat{x}\oplus\hat{y}\cap C'_j|=1+|\hat{x}\cap C'_j|$. Let $A$ denote the set of $j\in \{1...2g\}$ such that there exists a $\hat{y}\in E_0(\hat{E})$ satisfying $|\hat{y}\cap C'_j|=1$ (mod 2). Let $B$ denote the set of $j$ that are not in $A$, but for which $C'_j \cap \hat{E}\ne \emptyset$. So we can rewrite the RHS of Equation \ref{matrixprod} as
\begin{eqnarray*}
{\delta_{u,u'}}\prod_{j\in A}\delta_{\gamma_j,\delta_j}\prod_{j\in B}(-1)^{(\gamma\oplus\delta)_j|\hat{z}(u)\cap C'_j|}
\end{eqnarray*}
because if $\gamma_j=\delta_j$ for all $j\in B$ then each term in the summation over $\hat{x}\in E_0(\hat{E})$ is positive, cancelling the overall factor of ${|E_0(\hat{E},u)|}^{-1}$. Using this and Equation \ref{rho1}, we can now consider a partial measurement probability for the qubits in $\widetilde{E}$:
\begin{eqnarray}
p\left(|\phi_{\widetilde{E}}\rangle\right) &=& \langle\phi_{\widetilde{E}}|\rho_{\widetilde{E}}(|\bar{\psi}\rangle)|\phi_{\widetilde{E}}\rangle \nonumber\\
&=& {1\over{2^{|\partial\widetilde{E}|-1}}}  \sum_{\gamma, \delta} c_{\gamma}c^*_{\delta}\prod_{j\in A}\delta_{\gamma_j,\delta_j}\langle\phi_{\widetilde{E}}\otimes \phi^*_{\widetilde{E}}|\bar{Z}^{\gamma}_{\widetilde{E}_1}\bar{Z}^{\delta}_{\widetilde{E}_2}\nonumber\\
&&\sum_{u\in S}(-1)^{\sum_{j\in B}(\gamma\oplus\delta)_j|\hat{z}(u)\cap C'_j|}|K_{\widetilde{E}}(u)\otimes K_{\widetilde{E}}(u)\rangle\nonumber.\\ \label{prob}
\end{eqnarray}

In this notation, we have replaced the product of two matrix elements $\langle\phi_{\widetilde{E}}|\bar{Z}^{\gamma}_{\widetilde{E}}|K_{\widetilde{E}}(u)\rangle \langle K_{\widetilde{E}}(u')|\bar{Z}^{\delta}_{\widetilde{E}}|\phi_{\widetilde{E}}\rangle$ in the Hilbert space of $|\widetilde{E}|$ qubits with a single matrix element $\langle\phi_{\widetilde{E}}\otimes \phi^*_{\widetilde{E}}|\bar{Z}^{\gamma}_{\widetilde{E}_1}\bar{Z}^{\delta}_{\widetilde{E}_2}|K_{\widetilde{E}}(u)\otimes K_{\widetilde{E}}(u')\rangle$ in the Hilbert space of $2|\widetilde{E}|$ qubits. Here $|\phi^*_{\widetilde{E}}\rangle$ is a product state obtained from $|\phi_{\widetilde{E}}\rangle$ by complex conjugating $a_e$ and $b_e$ for each $e\in \widetilde{E}$.  $\bar{Z}^{\gamma}_{\widetilde{E}_1}$ is the operator $\bar{Z}^{\gamma}_{\widetilde{E}}$ applied to the first copy of $\widetilde{E}$, denoted as $\widetilde{E}_1$ (and likewise for $\widetilde{E}_2$).

Equation \ref{matrixprod} relates $p\left(|\phi_{\widetilde{E}}\rangle\right)$ to a summation over states in the Hilbert space of a surface code on the graph $G(\widetilde{E}_1)\cup G(\widetilde{E}_2)$, defined by taking two copies of $G(\widetilde{E})$ and gluing them together at the vertices in the boundary $\partial \widetilde{E}$ (as in \cite{rausstoric}). When the set $B$ is empty for example, the ket in Equation \ref{matrixprod} becomes $\sum_{u\in S}|K_{\widetilde{E}}(u)\otimes K_{\widetilde{E}}(u)\rangle$, which is the logical +1 X eigenstate $|\bar{+}\rangle$ of the surface code on $G(\widetilde{E}_1)\cup G(\widetilde{E}_2)$.  This is because the set of cycles $E_0(G(\widetilde{E}_1)\cup G(\widetilde{E}_2))$ on this graph has the structure: $E_0(G(\widetilde{E}_1)\cup G(\widetilde{E}_2))=\{(\widetilde{x},\widetilde{y})\in E_0(\widetilde{E},\partial\widetilde{E})\otimes E_0(\widetilde{E},\partial\widetilde{E}): \partial\widetilde{x} = \partial\widetilde{y}\}$. In the notation of Equation \ref{plusstate}:
\begin{eqnarray*}
|K(G(\widetilde{E}_1)\cup G(\widetilde{E}_2))\rangle={1\over\sqrt{2^{|\partial\widetilde{E}|-1}}}\sum_{u\in S}|K_{\widetilde{E}}(u)\otimes K_{\widetilde{E}}(u)\rangle.
\end{eqnarray*}

\subsection{MBQC on punctured cylinder graphs} \label{puncturedcylindergraphs}

We now define the family of punctured cylinder graphs and apply the above analysis to them. To construct a cellularly embedded punctured cylinder graph, consider an $N\times M$ square lattice with periodic boundary conditions in the vertical direction, embedded on the surface of a solid disk. Then, imagine drilling $g$ thin holes (or ``slots'') through the disk, each one in between two rows of vertices on the graph. Finally, vertical edges are extended through each slot, as in Figure \ref{fig:puncturedcylinder} below. 
\begin{figure}[!h]
\centering
\includegraphics[width=3in]{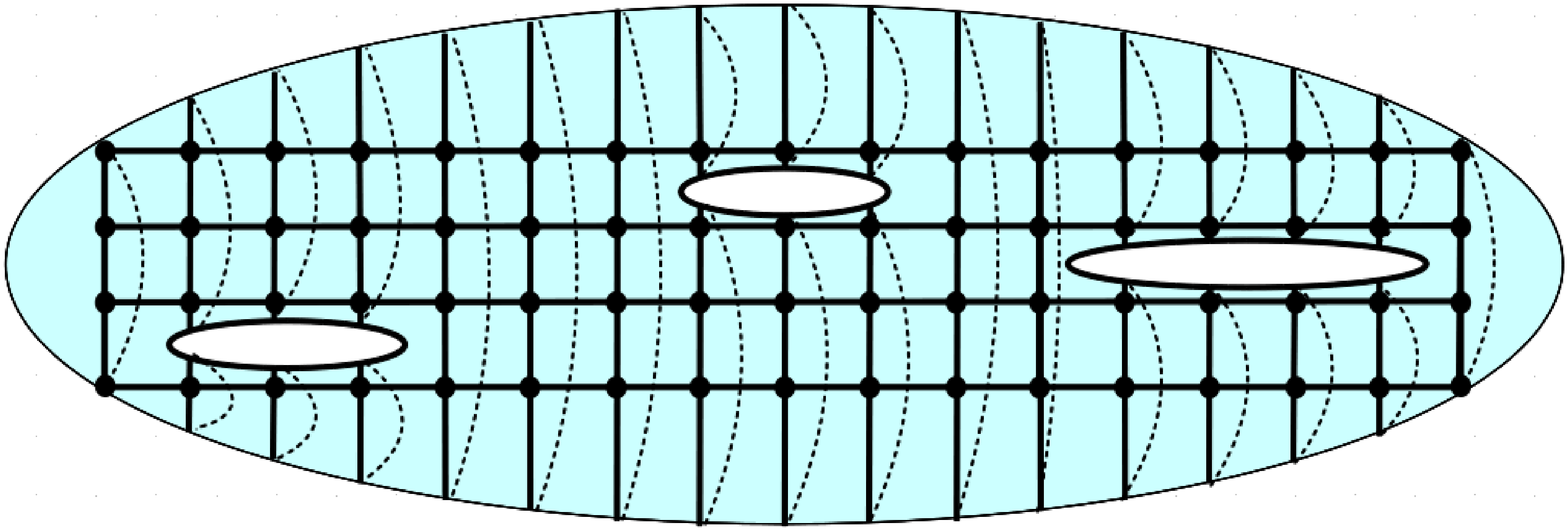}
\caption{(Color online) A three-slot punctured cylinder graph cellularly embedded on a surface of genus three.}
\label{fig:puncturedcylinder}
\end{figure}

A family of such punctured cylinder graphs is parameterized by the dimensions of the lattice along with the position and width of each slot: $\{N,M,\{x_1,y_1,K_1\}, ... \{x_g,y_g,K_g\}\}$.  We will take the slots to be ordered from left to right ($x_{j+1}>x_{j}$), and assume that no two slots are above one another ($x_{j+1} \ge x_j+K_j$). A flattened representation of a punctured cylinder graph is shown in Figure \ref{circuitgraphfig}.
\begin{figure}[!h]
\centering
\includegraphics[width = 3.5in]{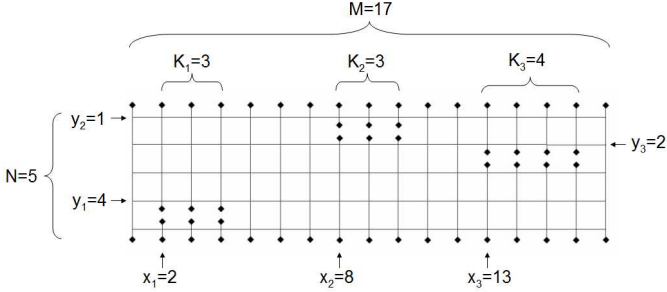}
\caption{A three-slot punctured torus graph. The $n=3$ handles have positions $(x_j,y_j)$ and widths $K_j$ for $j=1...n$. Pairs of points marked by diamonds are identified within each column, according to Figure \ref{fig:puncturedcylinder}.}
\label{circuitgraphfig}
\end{figure} 

It will be necessary to give a concrete set of encoding cocycles $C'_k$ for the punctured cylinder surface code. A suitable choice is shown in Figure \ref{homcircuitgraphfig}. These cocycles in fact constitute a canonical encoding scheme (as defined in Appendix \ref{cycles}). This is because one can continuously deform the loops drawn in Figure \ref{homcircuitgraphfig} for the cocycles $C'_k$ such that they form a canonical polygonal schema (this does not change the edge sets $E_k$ defined in Appendix \ref{cycles}). This deformation is shown in Figure \ref{fig:schema} for the simple case of a double torus. 

\begin{figure}[!h]
\centering
\includegraphics[width = 3in]{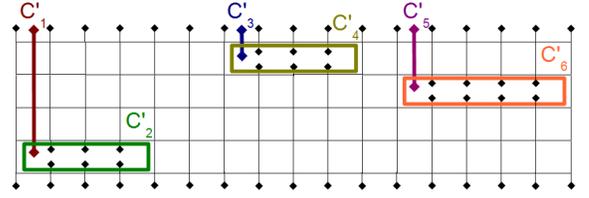}
\caption{(Color online) A set of 6 non-trivial cocycles $C'_k$ on a three-slot punctured cylinder graph. The edges included in a cocycle are those that are crossed by the line depicted.}
\label{homcircuitgraphfig}
\end{figure}

We will also assume a particular order in which to make the single qubit measurements of MBQC on the punctured cylinder lattice, in order to simplify the analysis.  Since the punctured cylinder graph has a left and right boundary, we may unambiguously start at the leftmost column, and measure the qubits column by column proceeding to the right.  That is: first we measure all qubits on the vertical edges in column 1, then all of the qubits on horizontal edges between columns 1 and 2, then the vertical edge qubits in column 2, and so on.  We further take the measurements to occur row by row as one moves down a column of horizontal or vertical edges. For brevity, we will call this ordering of measurements \textbf{LtoR}. \textbf{LtoR} seems to be a natural choice because it mimics the simple temporal order in a quantum circuit, and it satisfies the assumption of the previous section that both $\widetilde{E}$ and $\hat{E}$ are connected at all stages.

Our main result of this section is the following theorem:
\begin{theorem} \label{mbqctheorem}
Consider a state $|\bar{\psi}\rangle$ in the surface-code space of a punctured cylinder graph $G$ of genus $g$. For MBQC on $|\bar{\psi}\rangle$ with the measurement ordering \textbf{LtoR}, at any step $\widetilde{E}$ of computation and for product state of outcomes $|\phi_{\widetilde{E}}\rangle$:
$$p\left(|\phi_{\widetilde{E}}\rangle\right) = \alpha\langle \phi(G'(\widetilde{E}))|\bar{\psi}(G'(\widetilde{E}))\rangle,$$
where $G'(\widetilde{E})$ is an embedded punctured cylinder graph of genus less than or equal to $2g$, $|\bar{\psi}(G'(\widetilde{E}))\rangle$ is a state in the codespace of 
$G'(\widetilde{E})$, $|\phi(G'(\widetilde{E}))\rangle$ is a product state, and $\alpha$ is a known proportionality.
\end{theorem}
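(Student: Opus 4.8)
The plan is to build directly on Equation~\ref{prob}, which already expresses $p(|\phi_{\widetilde{E}}\rangle)$ as a single matrix element in the doubled Hilbert space of $2|\widetilde{E}|$ qubits, and to recognize that matrix element as an inner product $\langle\phi(G'(\widetilde{E}))|\bar{\psi}(G'(\widetilde{E}))\rangle$ on the glued graph $G'(\widetilde{E}):=G(\widetilde{E}_1)\cup G(\widetilde{E}_2)$. Three things must be checked: (i) the ket appearing in Equation~\ref{prob} is a codespace state of the surface code on $G'(\widetilde{E})$; (ii) the bra $\langle\phi_{\widetilde{E}}\otimes\phi^*_{\widetilde{E}}|$ is a product state on $G'(\widetilde{E})$; and (iii) for a punctured cylinder graph measured in the order \textbf{LtoR}, $G'(\widetilde{E})$ is again a punctured cylinder graph whose genus is at most $2g$. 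Item (ii) is immediate, since $|\phi_{\widetilde{E}}\otimes\phi^*_{\widetilde{E}}\rangle$ is by construction a tensor product over the $2|\widetilde{E}|$ qubits; this supplies $|\phi(G'(\widetilde{E}))\rangle$ and contributes to the proportionality $\alpha$.

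For item (i), I would start from the observation already recorded below Equation~\ref{matrixprod}: when the index set $B$ is empty, $\sum_{u\in S}|K_{\widetilde{E}}(u)\otimes K_{\widetilde{E}}(u)\rangle$ is exactly the logical $|\bar{+}\rangle$ state $|K(G(\widetilde{E}_1)\cup G(\widetilde{E}_2))\rangle$ of the surface code on the glued graph. The operators $\bar{Z}^{\gamma}_{\widetilde{E}_1}\bar{Z}^{\delta}_{\widetilde{E}_2}$ then act as encoded Pauli Z operators of $G'(\widetilde{E})$, so summing over $\gamma,\delta$ with the amplitudes $c_{\gamma}c^*_{\delta}$ produces a codespace state $|\bar{\psi}(G'(\widetilde{E}))\rangle$, in direct analogy with the expansion used in Equation~\ref{sumoverisings1}. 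The work here is to absorb the two remaining structures in Equation~\ref{prob}: the constraints $\prod_{j\in A}\delta_{\gamma_j,\delta_j}$, which identify the two copies of the encoded operator indexed by $j\in A$ and thereby reduce the number of independent encoded degrees of freedom of $G'(\widetilde{E})$, and the $u$-dependent signs $(-1)^{\sum_{j\in B}(\gamma\oplus\delta)_j|\hat{z}(u)\cap C'_j|}$. I would show that the $B$-phases correspond to completing each partial cocycle $C'_j\cap\widetilde{E}$ across the cut $\partial\widetilde{E}$, and can therefore be absorbed either into the encoding cocycles of $G'(\widetilde{E})$ or into a fixed reference shift of the coefficients, so that what remains is a genuine state in the codespace of $G'(\widetilde{E})$ together with a known overall factor folded into $\alpha$.

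For item (iii), I would argue topologically. Under \textbf{LtoR} the boundary $\partial\widetilde{E}$ is a single vertical column of vertices, which is a circle because the vertical direction is periodic; $G(\widetilde{E})$ is then itself a punctured cylinder containing some subset of the $g$ slots, together with at most one slot straddling the cut. Gluing two mirror copies of $G(\widetilde{E})$ along this boundary circle keeps the graph in the punctured-cylinder family, and I would count the genus with Euler's formula, showing that each of the $g$ handles of $G$ contributes at most $2$ to the doubled genus and hence giving a genus of at most $2g$.

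The main obstacle I expect is the combination of item (iii) with the $B$-phase bookkeeping in item (i): one must simultaneously verify that the doubled graph really remains a punctured cylinder of the stated genus and that the $u$-dependent signs reorganize precisely into encoded operators (and a product-state modification) for that doubled code, rather than into some vector outside its codespace. Establishing this compatibility is exactly what forces the restriction to the punctured cylinder family and the \textbf{LtoR} order, and is the reason the more general case is left open.
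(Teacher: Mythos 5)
Your overall route is the same as the paper's: start from Equation~\ref{prob}, identify the ket as a codespace state of the glued graph $G(\widetilde{E}_1)\cup G(\widetilde{E}_2)$, note the bra is a product state, and track the genus. The ``between holes'' stages of your argument essentially reproduce the paper's Appendix~\ref{appholes}, where $B=\emptyset$, the ket is $|K(G(\widetilde{E}_1)\cup G(\widetilde{E}_2))\rangle$, and the constraints $\prod_{j\in A}\delta_{\gamma_j,\delta_j}$ just contract the coefficient tensor. However, your treatment of the ``crossing holes'' stages has a genuine gap, and you correctly sense it: you propose to \emph{absorb} the $u$-dependent signs $(-1)^{\sum_{j\in B}(\gamma\oplus\delta)_j|\hat{z}(u)\cap C'_j|}$ into the encoding cocycles or a reference shift, but these signs depend on $u$ through a representative $\hat{z}(u)$ supported on the \emph{unmeasured} edges $\hat{E}$, so it is not clear they can be rewritten as operators acting on the doubled measured system, and you give no argument that they can. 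The paper avoids this problem rather than solving it: as soon as the edge $a$ at the back of the $k$-th hole is measured, it replaces $C'_{2k-1}$ by a \emph{homologous} cocycle supported entirely on $\widetilde{E}$ (Figure~\ref{fig:holeclasses}), which leaves the code space action unchanged but forces $B=\emptyset$ at every stage. Without this (or an actual proof that the $B$-phases reorganize as you hope), your item~(i) does not go through while a hole is being crossed.

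The second gap is in item~(iii). When the boundary $\partial\widetilde{E}$ straddles a slot, the naive doubled graph $G(\widetilde{E}_1)\cup G(\widetilde{E}_2)$ is not itself a punctured cylinder graph, and the partial cocycle $C'_{2k}\cap\widetilde{E}$ is not a cocycle of the doubled graph. The paper repairs both problems at once by introducing the \emph{edge addition} and \emph{vertex splitting} moves, which add auxiliary edges measured in $|0\rangle$ or $|+\rangle$ (contributing only known factors to $\alpha$), embed the result as a punctured cylinder graph of genus $2k-1$, and complete $C'_{2k}\cap\widetilde{E}$ to a genuine cocycle $\bar{C}'_{2k}$ using the $|0\rangle$-measured edges so that the corresponding $\bar{Z}_{2k}$ can be inserted freely. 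Your Euler-formula genus count is fine in spirit for the between-holes case (genus $2k\le 2g$), but the straddling case needs this explicit graph surgery, not just the observation that the boundary is a circle.
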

\begin{proof} See Appendix \ref{appholes}.
\end{proof}
Together with Theorem \ref{olbtheorem}, we then have the following Corollary:
\begin{corollary} \label{mbqcsimcorr}
For MBQC with the measurement scheme \textbf{LtoR} on a punctured cylinder code state $|\bar{\psi}\rangle$, if an optimal local basis for the effective state $|\Phi\rangle$ corresponding to the inner product in Theorem \ref{mbqctheorem} is known at each step $\widetilde{E}$ of computation, then the probability distribution $P$ over the outcomes of the next measurement can be classically sampled from in $poly(|E|,g)2^{E_{Sch}(|\Phi\rangle)}$ steps.
\end{corollary}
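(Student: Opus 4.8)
The plan is to obtain this Corollary as a direct composition of the sequential sampling framework of Section~\ref{classicalsim} with Theorems~\ref{mbqctheorem} and~\ref{olbtheorem}, the genuine content having already been established in those two results. First I would recall that, as explained in Section~\ref{classicalsim}, sampling the outcome of the next measurement reduces to computing partial measurement probabilities: if $e\in\hat E$ is the qubit to be measured next and $|\phi_{\widetilde E}\rangle$ records the outcomes obtained so far, then Bayes' rule gives the conditional probability of outcome $|\phi_e\rangle$ as the ratio $p\!\left(|\phi_{\widetilde E}\rangle\otimes|\phi_e\rangle\right)/p\!\left(|\phi_{\widetilde E}\rangle\right)$, and one samples from the resulting two-outcome distribution. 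Thus it suffices to evaluate each of the two partial measurement probabilities within the claimed cost.

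Next I would invoke Theorem~\ref{mbqctheorem}. Under the \textbf{LtoR} ordering both $\widetilde E$ and its complement remain connected at every step (and likewise after adjoining $e$), so the hypotheses of that theorem hold and $p\!\left(|\phi_{\widetilde E}\rangle\right)=\alpha\,\langle\phi(G'(\widetilde E))|\bar\psi(G'(\widetilde E))\rangle$ with $\alpha$ a known proportionality. Here $G'(\widetilde E)$ is a punctured cylinder graph of genus $g'\le 2g$ assembled by gluing two copies of a subgraph of $G$, so its edge count satisfies $|E'|\le 2|E|$ and consequently $poly(|E'|,g')=poly(|E|,g)$. This converts the partial measurement probability into a single surface-code inner product on a graph whose size and genus are controlled by $|E|$ and $g$.

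I would then apply Theorem~\ref{olbtheorem} to that inner product. By hypothesis an optimal local basis for the effective state $|\Phi\rangle$ associated with $\langle\phi(G'(\widetilde E))|\bar\psi(G'(\widetilde E))\rangle$ is known, so the inner product---and hence $p\!\left(|\phi_{\widetilde E}\rangle\right)$ up to the known factor $\alpha$---is computable in $poly(|E'|,g')\,2^{E_{Sch}(|\Phi\rangle)}=poly(|E|,g)\,2^{E_{Sch}(|\Phi\rangle)}$ steps. Computing the two inner products needed for the Bayes ratio merely doubles this, and drawing a sample from the resulting binary distribution is $O(1)$; both are absorbed into the $poly$ prefactor, which yields the stated per-step sampling cost.

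The substantive obstacles reside in the two theorems being combined rather than in their composition, so the remaining care is bookkeeping. The point I would be most careful about is the identification of the effective state: the $|\Phi\rangle$ whose optimal local basis is assumed known must be exactly the one attached by Theorem~\ref{mbqctheorem} to the inner product on the doubled graph $G'(\widetilde E)$, not the effective state of the original code, and it is its Schmidt measure that enters the exponent. I would also verify that the genus of $G'(\widetilde E)$ never exceeds $2g$ and that its size stays $O(|E|)$ at every stage of \textbf{LtoR}---both guaranteed by the construction in Theorem~\ref{mbqctheorem}---so that the cost estimate is uniform over all measurement steps.
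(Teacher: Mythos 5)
Your proposal is correct and follows exactly the route the paper intends: the corollary is stated as the immediate composition of Theorem \ref{mbqctheorem} (reducing each partial measurement probability under \textbf{LtoR} to a surface-code/product-state inner product on the doubled graph $G'(\widetilde{E})$) with Theorem \ref{olbtheorem} (evaluating that inner product in $poly\,2^{E_{Sch}(|\Phi\rangle)}$ steps given an optimal local basis), fed into the Bayes-rule sequential sampling scheme of Section \ref{classicalsim}. Your added care about identifying $|\Phi\rangle$ with the effective state of the doubled graph, and about the uniform bounds $|E'|\le 2|E|$ and $g'\le 2g$, is exactly the right bookkeeping and matches the paper's construction.
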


As a special case of Theorem \ref{mbqcsimcorr}, MBQC on the states $|\bar{C}^{\alpha,\beta}\rangle$ in the codespace of the punctured cylinder code can be simulated completely efficiently in the strong sense of sampling:

\begin{theorem} \label{mbqcefftheorem}
The probability distribution $P$ of computational output values of MBQC on one of the states $|\bar{C}^{\alpha,\beta}\rangle$ in the code space of the punctured cylinder code (with the measurement scheme \textbf{LtoR}) can be sampled from efficiently in both $|E|$ and $g$.
\end{theorem}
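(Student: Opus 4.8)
The plan is to deduce Theorem~\ref{mbqcefftheorem} as the $E_{Sch}=0$ special case of Corollary~\ref{mbqcsimcorr}, so that the exponential factor $2^{E_{Sch}(|\Phi\rangle)}$ collapses to $1$ at every stage of the \textbf{LtoR} measurement sequence. Concretely, I would first recall the Bayesian sampling scheme of Section~\ref{classicalsim}: to draw a sample from $P$ one processes the $|E|$ qubits one at a time in the \textbf{LtoR} order, and at each step computes the conditional two-outcome distribution for the next qubit $e$ from the ratio $p(|\phi_{\widetilde E}\rangle\otimes|\phi_e\rangle)/p(|\phi_{\widetilde E}\rangle)$. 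Since there are only $|E|$ such steps, it suffices to show that each partial measurement probability $p(|\phi_{\widetilde E}\rangle)$ can be evaluated in $\mathrm{poly}(|E|,g)$ time when the resource is one of the states $|\bar C^{\alpha,\beta}\rangle$.

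By Theorem~\ref{mbqctheorem}, for \textbf{LtoR} on a punctured cylinder graph each such $p(|\phi_{\widetilde E}\rangle)$ is proportional to an overlap $\langle\phi(G'(\widetilde E))|\bar\psi(G'(\widetilde E))\rangle$ on a doubled punctured cylinder graph $G'(\widetilde E)$ of genus $g'\le 2g$ and size $|E'|=\mathrm{poly}(|E|)$. The crux of the argument is therefore to identify the induced code state $|\bar\psi(G'(\widetilde E))\rangle$ when the original resource is $|\bar\psi\rangle=|\bar C^{\alpha,\beta}\rangle$. I would show that it is again a single basis state $|\bar C^{\delta,\epsilon}\rangle$ of the $G'(\widetilde E)$ code, i.e. that the number $D$ of nonzero coefficients $\psi_{\delta,\epsilon}$ is exactly one. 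Granting this, the effective state $|\Phi\rangle$ of Eq.~(\ref{defeff}) has only one term, so $E_{Sch}(|\Phi\rangle)=0$ and its trivial one-element optimal local basis is immediate; Corollary~\ref{mbqcsimcorr}, or directly the single-Pfaffian formula Eq.~(\ref{specialstateeq}) applied on $G'(\widetilde E)$, then evaluates the overlap in $\mathrm{poly}(|E'|,g')=\mathrm{poly}(|E|,g)$ steps. Summing over the $|E|$ measurement steps gives a total cost polynomial in both $|E|$ and $g$, which is the claim.

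To establish $D=1$ I would argue structurally. The state $|\bar C^{\alpha,\beta}\rangle$ is a stabilizer state whose logical content is, per handle, a Bell pair between the two encoded qubits $2j-1,2j$, as recorded by its stabilizers $(-1)^{\alpha_j}\bar X_{2j-1}\bar Z_{2j}$ and $(-1)^{\beta_j}\bar Z_{2j-1}\bar X_{2j}$. The reduction of Theorem~\ref{mbqctheorem} proceeds by a Clifford sequence---gluing two copies of $G(\widetilde E)$ along the boundary $\partial\widetilde E$, projecting onto fixed boundary syndromes, and pushing the encoded-$Z$ strings onto the product state---so the image $|\bar\psi(G'(\widetilde E))\rangle$ is again a stabilizer state in the $G'(\widetilde E)$ code. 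I would then track how the encoded operators $\bar X_k,\bar Z_k$ of $G$ map onto the encoded operators of a canonical encoding scheme on $G'(\widetilde E)$ and verify that the per-handle $\bar X\bar Z/\bar Z\bar X$ anticommutation pattern---hence the product-of-Bell-pairs structure, up to local $\bar Z$ phases absorbed into the new labels $\delta,\epsilon$---is preserved on each of the $g'\le 2g$ handles of the doubled graph. This is exactly the stabilizer presentation of a $|\bar C^{\delta,\epsilon}\rangle$ state, giving $D=1$.

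The main obstacle is this last step: the doubling can raise the genus (up to $2g$), so the handles of $G'(\widetilde E)$ are not in obvious one-to-one correspondence with those of $G$, and one must check that every new handle---including those created by the boundary gluing---inherits a single Bell-pair factor rather than a more entangled logical state that would make $D>1$. Carefully tracking the canonical encoding scheme on $G'(\widetilde E)$ built in the proof of Theorem~\ref{mbqctheorem}, together with the known local-Clifford graph-state form of $|\bar C^{\alpha,\beta}\rangle$ displayed in Fig.~\ref{GS}, is what makes the per-handle Bell structure survive the reduction and forces $E_{Sch}(|\Phi\rangle)=0$ at every stage.
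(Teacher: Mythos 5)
Your overall architecture is the same as the paper's: reduce each partial measurement probability via Theorem~\ref{mbqctheorem} to an overlap on the effective graph $G'(\widetilde{E})$, show that the induced code state there is again a single basis state $|\bar{C}^{\alpha',\beta'}\rangle$, and then invoke the single-Pfaffian formula of Eq.~(\ref{specialstateeq}) so that each of the $|E|$ Bayesian sampling steps costs $\mathrm{poly}(|E|,g)$. Where you diverge is in how the crucial fact ($D=1$, i.e.\ the induced state is one $|\bar{C}^{\alpha',\beta'}\rangle$) is established. The paper does this by brute force: it substitutes the explicit coefficients $c_{\gamma,\rho}=2^{-g}\prod_j(-1)^{\alpha_j\beta_j+(\alpha\oplus\gamma)_j(\beta\oplus\rho)_j}$ into the definitions of $\widetilde{c}$ (Eqs.~(\ref{cbardef1}) and (\ref{cbardef2})), carries out the sums, and reads off that the result is the coefficient tensor of $|\bar{C}^{\alpha',\beta'}\rangle$ with $\alpha'=\alpha\&\alpha$, $\beta'=\beta\&\beta$ between holes, and with a forced $0$ entry in $\alpha'$ when crossing holes.

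You propose instead a structural stabilizer-tracking argument, and this is where there is a genuine gap: you name the hard step but do not carry it out. Two specific difficulties undercut the sketch as written. First, the passage from $c_\gamma c^*_\delta$ to $\widetilde{c}$ involves the factor $\prod_{j\in A}\delta_{\gamma_j,\delta_j}$ summed over the traced logical indices; this is a decohering (partial-trace-like) operation on a subset of the logical qubits, not a Clifford unitary, so the claim that ``the image is again a stabilizer state'' does not follow from the reduction being Clifford --- for a generic code state the induced object on $G'(\widetilde{E})$ would be mixed, and it is precisely the per-handle Bell structure of $|\bar{C}^{\alpha,\beta}\rangle$ that saves the day, which is what must be verified. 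Second, in the crossing-holes case the effective genus is $2k-1$, the operator string depends only on $(\gamma\oplus\delta)_{2k-1}$, and the new cocycle $\bar{C}'_{2k}$ lives entirely on added edges measured in $|0\rangle$; the paper's computation shows this produces an extra sum over $\delta_k$ and a specific residual phase $(-1)^{\gamma_k(\beta\oplus\rho)_k}$, forcing the $k$-th entry of $\alpha'$ to be $0$. Your argument would need to reproduce exactly this bookkeeping, and nothing in the sketch does so. The plan is salvageable --- either by completing the stabilizer tracking handle by handle, or simply by doing the paper's explicit coefficient computation --- but as it stands the central claim $D=1$ is asserted rather than proven.
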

\begin{proof} See Appendix \ref{specialstatesapp}.
\end{proof}

\section{Conclusion}

We have considered the classical simulation of MBQC with surface-code states as resource states.  We first showed that for surface-code states the probability of obtaining any single MBQC outcome can be computed in a number of steps that scales polynomially in the size of the surface-code embedded graph, and at worst exponentially in its genus.  We found a family of states in the code space of any surface code for which this probability can be computed efficiently in both the size and the genus of the graph. For intermediate cases, we found a connection between the complexity of computing such probabilities and entanglement. In particular, the cost scales exponentially in the Schmidt measure of a state which combines the specification of MBQC outcomes and the quantum state being encoded into the surface code.  We also considered the task of sampling from the probability distribution over MBQC outcomes, and saw that for MBQC on a certain family of embedded graphs with a simple ordering of measurements, this task is equivalent to computing a single MBQC outcome probability for a modified graph. From this we were able to define a class of higher genus surface-code states for which MBQC can be efficiently classically simulated.

\subsection*{Acknowledgments}
We are thankful to Pradeep Sarvepalli and Shuhang Yang for useful discussions. This work is supported by NSERC, CIFAR, Mprime and IARPA.

\appendix

\section{Evaluating the Generating Function of Cycles} \label{cycles}

In this Appendix we show that the generating function of cycles on an embedded graph $G$ can be written in the form of Equation \ref{IsingPFgenform}:
\begin{equation*}
\mathrm{Cy}(G,w) = {1\over 2^g}\sum_{\alpha,\beta \in \{0,1\}^{\otimes g}} (-1)^{\alpha\cdot\beta}\textrm{Pf}\left(\mathcal{A}'(w^{\alpha,\beta})\right).
\end{equation*}

To arrive at Equation \ref{IsingPFgenform}, we map the problem of evaluating the generating function of cycles on $G$ to the problem of evaluating the generating function of perfect matchings on a modified graph $G'$. Then we apply a result from \cite{loeblpfaff} to evaluate this generating function.

A \textit{perfect matching} $M$ of a graph $G$ is a subset of the edges of $G$ such that every vertex contains exactly one edge incident upon it in $M$. Let $\mathcal{PM}(G)$ denote the set of all perfect matchings on $G$.  If to each edge $e$ we associate a weight $w_e$, then the generating function of perfect matchings on $G$ is defined as
$$ P(G,w) := \sum_{M\in \mathcal{PM(G)}} \prod_{e\in M} w_e.$$ We now define a modified embedded graph $G'$ by the following four rules, adapted from \cite{fisherdimer} and \cite{barahona}:
\begin{itemize}
\item If any vertex $v$ has exactly one edge incident upon it, remove it and the incident edge from $G$
\item For any vertex $v$ with exactly two edges $a$ and $b$ incident upon it, split $v$ into two vertices connected by a new edge with weight 1, as shown in Figure \ref{fig:rules}a.
\item For any vertex $v$ with exactly three edges incident upon it, replace $v$ with six vertices and nine edges as shown in Figure \ref{fig:rules}b.
\item For any vertex with $n>3$ edges incident upon it, first replace $v$ with $n-1$ vertices of degree three as shown in Figure \ref{fig:rules}c, and then follow the rule for a degree three vertex for each of the resulting vertices.
\end{itemize}

\begin{figure}
\includegraphics[width=3.5in]{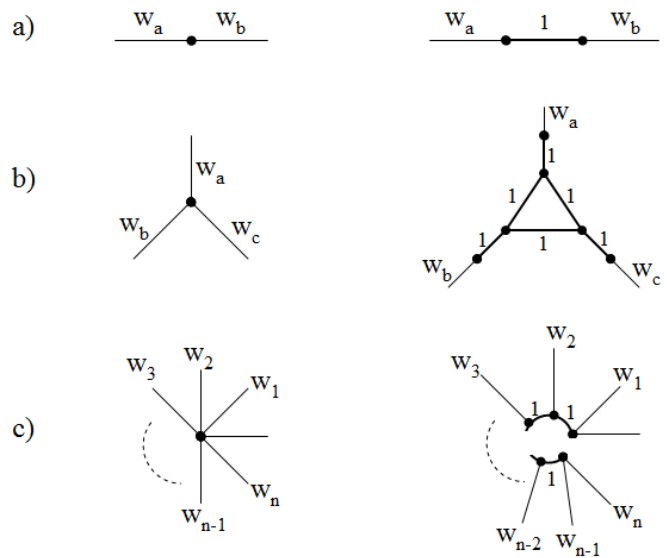}%
\caption{Transformations at each vertex from $G$ to a modified graph $G'$.}\label{fig:rules}
\end{figure}

The above rules define a graph $G'$ which differs from $G$ only locally around each vertex (and deletion of vertices of degree one).  Thus, it also has a natural embedding on $S$ where the modification around each vertex can be made arbitrary small. Furthermore, it can be verified that there exists a one-to-one mapping between cycles $x\in E_0(G)$ on $G$ and perfect matchings $M\in \mathcal{PM}(G')$ on $G'$, and that the product of edge weights for a given cycle on $G$ is equal to that of the associated perfect matching on $G'$. So: $$\mathrm{Cy}(G,w)=P(G',w'),$$ where $w'$ denotes the edge weights $w_e$ for all $e\in E$ along with $w_e=1$ for all of the new edges introduced in the transformation $G\rightarrow G'$.

In \cite{loeblpfaff}, Galluccio and Loebl study the problem of evaluating the generating function of perfect matchings on a graph $G'$ that is embedded on an orientable surface of genus $g$.  Their main result (Theorem 3.9 of \cite{loeblpfaff}) is a formula for $P(G',w)$ that can be written in the form of Equation \ref{IsingPFgenform}.  Therein the function $\textrm{Pf}\left(\mathcal{A}'(w^{\alpha,\beta})\right)$ is the Pfaffian of a $|V'|\times|V'|$ weighted adjacency matrix $\mathcal{A}'(w^{\alpha,\beta})$, where $|V'|$ is the number of vertices in the graph $G'$ (a few more edges may need to be added to $G'$, as we shall see at the end of this section). The Pfaffian $\textrm{Pf}(M)$ of a $2N \times 2N$ matrix $M$ is a polynomial in the matrix entires that is related to the determinant, and can be computed in $poly(N)$ time. In their work, Galluccio and Loebl take the embedded graph as being specified by a so-called \textit{canonical polygonal schema}. A \textit{curve} in $S$ is a continuous map $h: [0,1] \rightarrow S$, and a \textit{loop} is a curve with $h(1)=h(0)$. A canonical polygonal schema of a graph $G$ is obtained from its embedding on $S$ by cutting $S$ along $2g$ loops $\mathcal{C}_1...\mathcal{C}_{2g}$, chosen such that after the cutting $S$ can be unfolded into a convex polygon $B_0$ with $4g$ sides. Each cut $\mathcal{C}_k$ produces two paired sides of $B_0$, which we denote as $\mathcal{C}^1_k$ and $\mathcal{C}^2_k$, and the sides of $B_0$ are arranged in clockwise order as $\mathcal{C}^1_1,\mathcal{C}^1_2,\mathcal{C}^2_1,\mathcal{C}^2_2,\mathcal{C}^1_3...\mathcal{C}^2_{2g}$. The closed surface $S$ can be reconstructed by glueing $\mathcal{C}^1_k$ and $\mathcal{C}^2_k$ back together with the proper orientation. 

\begin{figure}[t]
\includegraphics[width=2.5in]{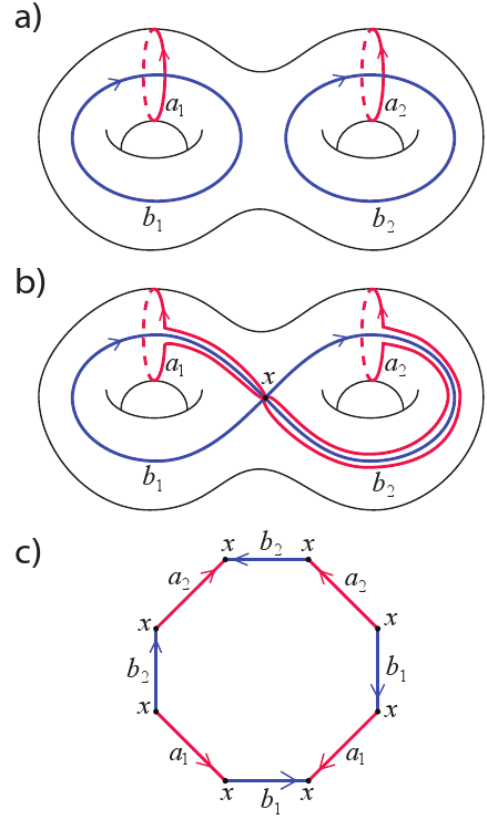}
\caption{\label{fig:schema}(Color online)  a) A set of loops on a double torus defining a set of punctured cylinder graph cocycles $C'_k$ as shown in Figure \ref{homcircuitgraphfig}, b) a deformation of these loops to define a canonical polygonal schema, and c) the octagon obtained after cutting along the loops shown in b) and unfolding the surface. To visualize the move between Figures b) and c), one may separate the two tori and imagine unfolding each individually into a single torus with a boundary, as shown in Figure 84 of \cite{springer} (p.71).}
\end{figure}

To use the results of reference \cite{loeblpfaff} then, we require a suitable set of loops $\mathcal{C}_1...\mathcal{C}_{2g}$ on $S$. These can be chosen as follows: draw $2g$ non-self-intersecting curves on $S$ that all begin and end at a common \textit{base point} $x$, but are otherwise non-overlapping and non-crossing, and such that for each $j$: $\mathcal{C}_{2j-1}$ goes around the $j^{th}$ handle, and $\mathcal{C}_{2j}$ goes though the $j^{th}$ handle. See Figure \ref{fig:schema}b for an example. Consider now the original graph $G$ embedded on $S$. Choose the basepoint to be at the center of some face $f$ of $G$.  Without loss of generality, we may choose the $\mathcal{C}_k$ to avoid the vertices of $G$ and cross the embeddings of the edges of $e$ only at isolated points. After cutting $S$ along these loops, we are left with a plane graph plus some cut edges. We define $G_0$ as the plane graph on $B_0$ consisting of all of the vertices of $G$ and all of the edges that do not cross any of the cuts $\mathcal{C}_k$. Let $E_k$ denote the set of edges of $G$ that cross the loop $\mathcal{C}_k$ an odd number of times.  We now prove a few properties of the sets $E_k$.

\begin{lemma} \label{lemmacocycle}
For each $k\in \{1...2g\}$, the edge set $E_k$ is a cocycle of G.
\end{lemma}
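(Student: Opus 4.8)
The goal is to verify that $E_k$ meets the cocycle condition stated above: that $|E_k\cap\partial f|$ is even for every face $f\in F$, or equivalently that $E_k$, read as a set of dual edges, has even degree at every vertex of $\widetilde{G}$ and so is a cycle of $\widetilde{G}$. The plan is to use that $\mathcal{C}_k$ is a \emph{closed} curve and reduce the claim to an elementary parity statement about how a closed curve meets the faces of $G$.

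First I would organize the crossings combinatorially. By construction $\mathcal{C}_k$ avoids the vertices of $G$ and meets the edges transversally at finitely many isolated points; let $t_1<\dots<t_m$ be the crossing times and $e_{(i)}$ the edge crossed at $t_i$. Setting $t_0:=0$ and $t_{m+1}:=1$, on each open subinterval $(t_i,t_{i+1})$ the curve meets neither an edge nor a vertex, so it remains in a single connected component of $S$ minus the embedded graph, i.e. in one open face $f_i$, since the faces are exactly those components. Because $\mathcal{C}_k$ starts and ends at the base point $x$ in the interior of a single face, $f_0=f_m$, and the faces $f_{i-1},f_i$ flanking the crossing at $t_i$ are the two faces incident to $e_{(i)}$. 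Hence $f_0,e_{(1)},f_1,\dots,e_{(m)},f_m=f_0$ is a closed walk $W$ in $\widetilde{G}$ in which the dual edge of $e$ is traversed exactly as often as $\mathcal{C}_k$ crosses $e$.

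Next I would invoke the standard fact that the edges traversed an odd number of times by a closed walk form an even-degree subgraph: each passage of $W$ through a dual vertex contributes two to its degree, and this survives reduction modulo two. Applied to $W$, the set $E_k$ of edges crossed an odd number of times has even degree at every vertex of $\widetilde{G}$, hence is a cycle of $\widetilde{G}$, which is exactly a cocycle of $G$. Equivalently, one could argue face by face: $\mathcal{C}_k^{-1}(\mathrm{int}\,f)$ is an open subset of the circle $S^1$, thus a finite disjoint union of open arcs, whose $2\times(\text{number of arcs})$ endpoints are precisely the crossings of $\partial f$, so $\partial f$ is met an even number of times.

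The step needing the most care — and the main obstacle — is an edge $e$ bordering the same face on both sides (a self-loop of $\widetilde{G}$): crossing it leaves the current face unchanged and adds two to the degree of that dual vertex, so it is harmless for the even-degree formulation, yet it adds only one to a naive set count $|E_k\cap\partial f|$. I would therefore phrase the cocycle condition through even degree in $\widetilde{G}$, counting each side of every edge, which the closed-walk argument delivers directly and which agrees with the set-intersection condition whenever no such edge occurs, as in the graphs of interest. The remaining points — finiteness of $m$ from transversality and compactness, well-definedness of the crossing parities, and the identification of faces with the components of $S$ minus the $1$-skeleton — are routine.
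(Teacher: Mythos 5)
Your proof is correct, and it takes a somewhat different route from the paper's. The paper argues face by face and topologically: for each face $f$, the boundary $\partial f$ traces out a loop (or loops) bounding a region of $S$, so by a Jordan-curve argument the closed loop $\mathcal{C}_k$ crosses it an even number of times, whence an even number of edges of $\partial f$ are crossed an odd number of times. You instead argue globally and combinatorially: the crossings cut $\mathcal{C}_k$ into arcs, each lying in a single face, which yields a closed walk in $\widetilde{G}$ traversing each dual edge as often as $\mathcal{C}_k$ crosses the corresponding edge; the odd-multiplicity edge set of a closed walk is an even-degree subgraph, hence a cycle of $\widetilde{G}$, i.e.\ a cocycle of $G$. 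Your per-face variant (the preimage $\mathcal{C}_k^{-1}(\mathrm{int}\,f)$ is a disjoint union of arcs with paired endpoints) is in effect a self-contained proof of the Jordan-curve fact the paper invokes as a black box. What your version buys is that it is elementary and makes explicit the degenerate case of an edge bordered by the same face on both sides (a self-loop of $\widetilde{G}$), where the even-degree formulation in the dual is the right one and the naive set count $|E_k\cap\partial f|$ can mislead --- a point the paper's one-line argument passes over; the paper's version is shorter. Both establish the lemma.
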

\begin{proof}
For each $f\in F$, the cycle $\partial f$ defines a loop or set of disjoint loops $\mathcal{C}_f$ on $S$. Since $\mathcal{C}_f$ forms the boundary of a region of $S$, the loops $\mathcal{C}_f$ and $\mathcal{C}_k$ cross an even number of times (this follows from the Jordan Curve Theorem).  So, there cannot be an odd number of edges $e\in \partial f$ that cross $\mathcal{C}_k$ an odd number of times. Thus $|\partial f\cap E_k|$ is even for every face $f\in F$. $\Box$
\end{proof}

\begin{lemma} \label{lemmahomol}
The cocycles $E_k$ are homologically independent on $\widetilde{G}$.
\end{lemma}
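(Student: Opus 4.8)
The plan is to translate this combinatorial statement into topology and exploit the fact that, by construction, the loops $\mathcal{C}_1,\dots,\mathcal{C}_{2g}$ form a canonical homology basis of $S$. As noted earlier in the text, the homology classes of cocycles on $G$ (computed with respect to the dual embedding) are in bijection with $H_1(S;\mathbb{Z}_2)$, and a cocycle is homologically trivial on $\widetilde{G}$ precisely when its class vanishes. Hence it suffices to show that the classes $[E_1],\dots,[E_{2g}]\in H_1(S;\mathbb{Z}_2)$ are linearly independent over $\mathbb{Z}_2$.

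The key step is to identify $[E_k]=[\mathcal{C}_k]$ for each $k$. Recall that $\mathcal{C}_k$ avoids all vertices of $G$ and meets the edge embeddings only transversally at isolated points; therefore, reading off the faces it successively enters, $\mathcal{C}_k$ determines a closed walk $W_k$ in the dual graph $\widetilde{G}$, whose dual vertices are the visited faces and whose dual edges $\widetilde{e}$ correspond to the crossed edges $e$. Pushing $\mathcal{C}_k$ through the centers of these faces and along the $\widetilde{e}$ is a homotopy (performed face by face), so $[W_k]=[\mathcal{C}_k]$. Reducing the walk $W_k$ modulo $2$, a dual edge $\widetilde{e}$ survives exactly when $\mathcal{C}_k$ crosses $e$ an odd number of times, which by definition is precisely the dual edge set $E_k$. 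Consequently the $\mathbb{Z}_2$-cycle $E_k$ represents the class $[\mathcal{C}_k]$, i.e.\ $[E_k]=[\mathcal{C}_k]$.

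Finally I would invoke that $\{\mathcal{C}_k\}$ defines a canonical polygonal schema: cutting $S$ along these $2g$ loops unfolds it into a disk, which forces their classes $\{[\mathcal{C}_k]\}$ to be a (symplectic) basis of $H_1(S;\mathbb{Z}_2)$. In particular they are linearly independent, and by the identification above so are the $\{[E_k]\}$, which is exactly the homological independence of the cocycles $E_k$ on $\widetilde{G}$. The main obstacle is the middle step: one must argue carefully that deforming $\mathcal{C}_k$ into the dual walk $W_k$ is a genuine homotopy preserving the $\mathbb{Z}_2$-homology class, and that the mod-$2$ reduction of $W_k$ yields exactly $E_k$ rather than some merely homologous variant. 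An alternative that sidesteps the explicit identification is to compute the intersection pairing $[\mathcal{C}_j]\cdot[E_k]$ directly and show it equals the canonical symplectic form $[\mathcal{C}_j]\cdot[\mathcal{C}_k]$, whose nondegeneracy then forces independence; but this route requires the same transversality bookkeeping and so offers no real simplification.
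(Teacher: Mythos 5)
Your proof is correct, but it takes a genuinely different route from the paper's. You argue directly: you identify the class of the dual-graph cycle $E_k$ with the class $[\mathcal{C}_k]\in H_1(S;\mathbb{Z}_2)$ by homotoping $\mathcal{C}_k$ face-by-face onto a closed walk in $\widetilde{G}$ and reducing mod~2, and then invoke the standard fact that the loops of a canonical polygonal schema form a basis of $H_1(S;\mathbb{Z}_2)$. The paper instead argues by contradiction at the combinatorial level: a nontrivial dependence $\bigoplus_{k\in Y}E_k=\bigoplus_{v\in\widetilde{V}}\delta v$ would let one deform the concatenated loop $\mathcal{C}_Y$ around the vertices of $\widetilde{V}$ into a loop crossing every edge of $G$ evenly; since cutting along all the $\mathcal{C}_k$ leaves a single connected polygon, $\mathcal{C}_Y$ is non-separating, and Lemma~3 of \cite{moharcabello} then forces a non-separating loop to cross some edge of some homologically nontrivial cycle of $G$ an odd number of times --- a contradiction. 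In effect the paper runs the intersection-pairing argument you mention as your alternative (non-separating $\Leftrightarrow$ homologically nontrivial $\Rightarrow$ odd crossing with some cycle), staying within graph-theoretic homology and a cited combinatorial lemma, whereas your primary route imports Poincar\'e-duality-style reasoning and the schema-basis fact from surface topology. Your approach is shorter and more conceptual; the paper's is more self-contained for a reader who only has the graph-level definitions. The one step you flag yourself --- that the face-by-face push of $\mathcal{C}_k$ onto the dual walk preserves the $\mathbb{Z}_2$ class and that the mod-2 reduction of that walk is exactly $E_k$ --- is indeed where the transversality bookkeeping lives, but it is the standard argument (each face is a disk, so segments between consecutive crossings can be homotoped rel endpoints, and an edge survives the mod-2 reduction iff it is crossed an odd number of times, which is the definition of $E_k$), so there is no real gap.
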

\begin{proof}
If this were not true, then for some collection $Y\subseteq \{1...2g\}$ of the $E_k$ and some set $\widetilde{V}\subset V$ of vertices, we would have:
$$\bigoplus_{k\in Y} E_k = \bigoplus_{v\in\widetilde{V}} \delta v.$$ The edge set $\bigoplus_{k\in Y} E_k$ is precisely the set of edges that are crossed an odd number of times by the loop $\mathcal{C}_Y$, which we define as the concatenation of the loops $\mathcal{C}_k$ for all $k\in Y$, in some arbitrary order. By continuously deforming $\mathcal{C}_Y$ around the vertices $v\in\widetilde{V}$, one obtains a modified loop $\hat{\mathcal{C}}_Y$ that crosses all edges $e\in E$ either an even number of times or not at all. Removal of the loop $\mathcal{C}_Y$ from $S$ does not separate the surface, because cutting along \textit{all} of the $\mathcal{C}_k$ results in a single polygon $B_0$, which is still connected. Since $\hat{\mathcal{C}}_Y$ is related to $\mathcal{C}_Y$ by a continuous deformation, its removal does not separate $S$ either. But now we can prove a contradiction, because a non-surface-separating loop must intersect at least one edge of $G$ an odd number of times. 

To demonstrate this, we use a result from \cite{moharcabello} (cf. Lemma 3). First, we define an embedded graph $\hat{G}_Y$ which combines the original graph $G$, and the loop $\hat{\mathcal{C}}_Y$ as follows. Add a vertex to $G$ at each point where $\hat{\mathcal{C}}_Y$ crosses an edge of $G$, and a vertex at the base point $x$ of the canonical polygonal schema. For each section of $\hat{\mathcal{C}}_Y$ between two intersection points with $G$, add an edge that traces the section.  Finally, add edges that trace $\hat{\mathcal{C}}_Y$ between the basepoint $x$ and the points where $\hat{\mathcal{C}}_Y$ first crosses an edge from $x$. The new edges that trace out the loop $\hat{\mathcal{C}}_Y$ define a cycle of $\hat{G}_Y$, which we denote as $\hat{c}_Y$. For any edge $e$ of $G$ that was split into several edges $e_1...e_k$ by the transformation $G\rightarrow \hat{G}_Y$, let $\hat{e}$ denote the set $\{e_1..e_k\}$. The number of times that $\hat{\mathcal{C}}_Y$ crosses the edge $e$ of $G$ is then $|\hat{e}|-1$. Let $\{Q_{1}...Q_{2g}\}$ denote any set of $2g$ homologically independent cycles on $G$, and for each $Q_j$ let $\hat{Q}_j$ denote the corresponding cycle on $\hat{G}_Y$ (simply let $\{e\}\rightarrow \hat{e}$ for any edge $e$ that is crossed by $\hat{\mathcal{C}}_Y$). By Lemma 3 of \cite{moharcabello}, there exists some $j$ such that $\hat{Q}_j$ is crossed by $\hat{c}_Y$ an odd number of times, iff $\hat{c}_Y$ is a homologically non-trivial cycle on $\hat{G}_Y$. The cycle $\hat{c}_Y$ must be homologically non-trivial on $\hat{G}_Y$, because if it were not then it would form the boundary of a set of faces of $\hat{G}_Y$, and cutting along $\hat{c}_Y$ (or equivalently $\hat{\mathcal{C}}_Y$) would separate the surface $S$ (a similar argument shows that the $\hat{Q}_j$ are homologically independent on $\hat{G}_Y$, which is necessary for our use of the result in \cite{moharcabello}). So $\hat{c}_Y$ crosses $\hat{Q}_j$ an odd number of times, for some $j$. But, if there were no edge $e$ of $G$ that was crossed an odd number of times by $\hat{\mathcal{C}}_Y$, then $\hat{Q}_j$ and $\hat{c}_Y$ could only cross an even number of times (or zero). So there does exists such an edge $e$. $\Box$
\end{proof}

\begin{theorem} \label{encthm}
The cocycles $E_k$ constitute a possible choice of encoding cocycles $C'_k$ for the surface code on $G$.
\end{theorem}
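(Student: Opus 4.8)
The plan is to assemble Theorem \ref{encthm} directly from the two preceding lemmas together with the abstract duality between cycles and cocycles. Recall from Section \ref{surfcodesec} that a valid set of encoding cocycles $C'_k$ must satisfy two conditions: the $C'_k$ must be $2g$ homologically independent nontrivial cocycles, and there must exist companion cycles $C_j$ with $|C_j\cap C'_k|=\delta_{jk}$ (mod 2), so that the encoded $\bar{X}$ and $\bar{Z}$ operators obey the correct (anti)commutation relations. I would verify that the sets $E_k$ meet both conditions and then set $C'_k:=E_k$.

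First I would note that Lemma \ref{lemmacocycle} already guarantees that each $E_k$ is a cocycle of $G$, and Lemma \ref{lemmahomol} guarantees that the $E_k$ are homologically independent on the dual graph $\widetilde{G}$. Nontriviality is then immediate: a trivial cocycle is the zero element of the cocycle homology group, and the zero element cannot appear in a homologically independent family, so none of the $E_k$ is trivial. Since the cocycle homology group is isomorphic to $\mathbb{Z}_2^{2g}$ and the $2g$ sets $E_k$ are independent, they in fact form a basis of this group.

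It remains to produce the companion cycles $C_j$. Here I would invoke the non-degeneracy of the mod-$2$ intersection pairing $(C,C')\mapsto |C\cap C'| \pmod 2$ between cycle homology and cocycle homology on the embedded graph, which is the combinatorial form of Poincar\'e duality on $S$. Because the $E_k$ form a basis of cocycle homology and the pairing is non-degenerate, there exists a dual basis of cycles $\{C_j\}$ on $G$ satisfying $|C_j\cap E_k|=\delta_{jk}$ (mod 2); taking these as the encoding cycles supplies exactly the defining data of a valid encoding. The bulk of the work is carried by Lemmas \ref{lemmacocycle} and \ref{lemmahomol}, so the only genuinely new step is the existence of the dual cycles $C_j$, and I expect this to be the main (if standard) obstacle: one must ensure that non-degeneracy of the intersection form holds at the level of $G$ and its dual, not merely for smooth curves on $S$. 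This is precisely where cellularity of the embedding enters, since it guarantees that the cycle and cocycle spaces of $G$ realize the full homology of $S$, so that the topological pairing descends faithfully to the combinatorial one.
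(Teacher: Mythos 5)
Your proposal is correct and shares the paper's overall skeleton---reduce the theorem to (i) the $E_k$ being homologically independent cocycles, which Lemmas \ref{lemmacocycle} and \ref{lemmahomol} supply, and (ii) the existence of companion cycles $C_j$ with $|C_j\cap E_k|=\delta_{jk}$ (mod 2)---but you handle step (ii) by a genuinely different route. You appeal to the non-degeneracy of the mod-2 intersection pairing between cycle and cocycle homology (combinatorial Poincar\'e duality) and conclude abstractly that a dual basis of cycles exists. The paper instead constructs the $C_j$ explicitly: it takes the tree-cotree decomposition from Section \ref{surfcodesec}, which furnishes dual bases $T(e_j)$ and $C(e_k)$ with $|T(e_j)\cap C(e_k)|=\delta_{jk}$, expands each $E_k$ in the cocycle basis $\{C(e_m)\}\cup\{\delta_v\}$, observes that homological independence forces the $2g\times 2g$ change-of-basis matrix $A$ to be invertible over $\mathbb{Z}_2$, and sets $C_j:=\bigoplus_{l\in Y_j^{-1}}T(e_l)$ using $A^{-1}$. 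The two arguments are close in substance---the tree-cotree dual pair is precisely a witness to the non-degeneracy you invoke---but your version leaves that non-degeneracy as an unproven citation, and you correctly flag it as the one step needing care at the combinatorial (rather than smooth) level; note also that you implicitly need the pairing to descend to homology classes, i.e.\ that $|\partial f\cap C'|$ and $|C\cap\delta_v|$ are always even. The paper's construction buys self-containedness and an explicit algorithm for the $C_j$ (it also yields their homological independence for free from invertibility of $A^{-1}$); your version is shorter and makes the underlying duality more transparent, and could be made fully rigorous by observing that the tree-cotree bases already in Section \ref{surfcodesec} establish exactly the non-degeneracy you need.
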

\begin{proof}
By Lemma \ref{lemmahomol}, the cocycles $E_k$ are homologically independent on $\widetilde{G}$. All that's left is to show that with encoded Z cocycles defined as $C'_k := E_k$, there exists at least one set of encoding cycles for the X operators $C_k$ on $G$ such that $|C_j\cap C'_k| = \delta_{jk}$ (mod 2). As discussed in Section \ref{surfcodesec}, a tree-cotree decomposition of $G$ guarantees the existence of homologically independent cycles $T(e_1)...T(e_{2g})$ and homologically independent cocycles $C(e_k)$...$C(e_{2g})$ on $G$ such that $|T(e_j)\cap C(e_k)| = \delta_{jk}$. The cocycles $C(e_k)$ along with the edge sets $\delta_v$ for all $v\in V$ form a basis for all cocycles on $G$ with respect to the symmetric difference of sets. So, $C'_k = \bigoplus_{m\in Y_k} C(e_m) \oplus \bigoplus_{v\in V_k}\delta_v$ for some $Y_k \subseteq \{1...2g\}$ and $V_k \subseteq V$. Since the $C'_k$ are homologically independent, the $2g\times 2g$ matrix $A$ defined by $A_{mk} \in \{0,1\}: A_{mk}=1 \textrm{ iff } m\in Y_k$ is invertible over the binary field $\mathbb{Z}_2$. Let $A^{-1}$ denote its $\mathbb{Z}_2$ inverse and define the set $Y^{-1}_j$ as the set of all $l$ for which $A^{-1}_{jl}=1$. Then define a set of encoding cycles as $C_j := \bigoplus_{l\in Y^{-1}_j} T(e_l)$. Using the definition of a cycle and $|T(e_l)\cap C(e_m)| = \delta_{lm}$ $$|C_j\cap C'_k| = \bigoplus_{l\in Y^{-1}_j} \bigoplus_{m\in Y_k} \delta_{lm} = \bigoplus_{m=1}^{2g} A^{-1}_{jm}A_{mk} = \delta_{jk},$$ where in this expression $\bigoplus$ denotes mod 2 addition of numbers. Finally, the cycles $C_j$ so defined are homologically independent on $G$ because the matrix $A^{-1}$ is invertible over $\mathbb{Z}_2$. $\Box$
\end{proof}

\begin{definition}
Given a canonical polygonal schema $\{\mathcal{C}_k\}$, a \textit{canonical encoding scheme} is the choice of encoding cocycles $C'_k:=E_k$. This is a valid one by Theorem \ref{encthm}.\newline
\end{definition} 

So far, we've defined a canonical polygonal schema $\{\mathcal{C}_k\}$ for $S$, and the associated canonical encoding scheme $\{E_k\}$ for the surface code of $G$. We now apply these concepts to the modified graph $G'$. Since all of the vertices of $G$ belong to the interior of $B_0$, we can perform the graph modification $G\rightarrow G'$ in an arbitrarily small neighborhood of each vertex \textit{after} unfolding the embedded graph $G$. We take the $\mathcal{C}_k$ to be chosen such that they avoid crossing any edge $e$ that is incident on a vertex of degree one (one may merely drag $\mathcal{C}_k$ across that vertex to avoid $e$). This yields a canonical polygonal schema for $G'$, where the edge set $E_k$ is still the set of edges of $G'$ that cross the cut $C_k$ an odd number of times.

Another modification of the graph $G'$ is necessary for us to use Equation \ref{IsingPFgenform} (see Corollary 3.9 of \cite{loeblpfaff}). Consider any edge $e$ that crosses $n$ possibly non-distinct cuts $\mathcal{C}_{k_1}...\mathcal{C}_{k_n}$, in that order as you follow $e$ in one direction. If $n>1$, then one modifies $G'$ by adding $2n$ vertices and replacing $e$ by a string of edges $e_1...e_{2n+1}$ connected in a chain such that $e_{2j-1}$ crosses one cut $\mathcal{C}_{k_j}$ for each $j=1...n$. Edge $e_1$ is given weight $w_e$ while the rest of the edges receive a weight of $w_{e_j}=1$. Call this transformation \textit{bridge splitting}. Bridge splitting guarantees that no edge of $G'$ crosses more than one cut, or any single cut more than once. Let $E'_k$ denote the set of edges of $G'$ that cross the cut $\mathcal{C}_{k}$. Let $w'$ continue to denote the set of weights of the edges of $G'$. One may verify that the generating function $P(G',w')$ of perfect matchings is unchanged by bridge splitting. After bridge splitting, a few more minor transformations of the graph may be necessary (see \cite{loeblpfaff}), but these do not affect our analysis.

Now we consider the construction of the weighted adjacency matrices $\mathcal{A}'(w^{\alpha,\beta})$ in Equation \ref{IsingPFgenform}. Let $G'_0$ be the subgraph of $G'$ that belongs entirely to $B_0$.  $G'_0$ contains all of the vertices of $G'$, and all of the edges that do not cross any cut. An \textit{orientation} of a graph is an assignment of a direction to each edge.  As a plane graph, it can be shown that $G'_0$ has an orientation $D_0$ of its edges such that the boundary of each face has an odd number of edges oriented clockwise \cite{kaststatistics}.  Such an orientation is called a \textit{basic} orientation, and we fix a particular one $D_0$. For each $k \in \{1...2g\}$, Gallucio and Loebl show that $G'_0 \cup E'_k$ has a natural plane embedding, and a unique orientation $D_k$ of the edges $E_k$ such that $(D_0,D_k)$ is a basic orientation in this plane embedding. For any $\alpha, \beta \in \{0,1\}^{\otimes 2g}$, a so-called \textit{relevant orientation} of $G'$ is defined as follows: start with the orientation $(D_0,D_1,D_2...D_{2g})$, and reverse the orientation of all edges in $E'_{2k-1}$ if $\alpha_k=1$, and reverse the orientation of all edges $E'_{2k}$ if $\beta_k=1$, for each $k=1...g$. For any two vertices $u$,$v$ of $G'$, we define the matrix element $[A'(w')^{\alpha,\beta}]_{u,v}$ to be $0$ if $u$ and $v$ are not connected by an edge, $w'_e$ if $u$ and $v$ are connected by an edge $e$ oriented from $u$ to $v$, and $-w'_e$ if $u$ and $v$ are connected by an edge $e$ oriented from $v$ to $u$, where the edge orientations are defined by the relevant orientation $\alpha,\beta$.\newline
\indent The matrix $A'(w')^{\alpha,\beta}$ depends both on $\alpha$ and $\beta$ and the edge weights $w'_e$. Reversing the orientation of an edge has the same effect as multiplying the corresponding edge weight by $-1$. So, we may write $A'(w')^{\alpha,\beta}=A'(w'^{\alpha,\beta})$ where $A'(w')$ denotes the adjacency matrix $A'(w')^{0,0}$ of $G'$ corresponding to the concatenation of the basic orientations $(D_0,D_1,D_2...D_{2g})$, and $w'^{\alpha,\beta}$ is the set of edge weights $w'$ after we multiply by $-1$ all edge weights along the cocycle $E'_{2k-1}$ if $\alpha_k=1$ and along the cocycle $E'_{2k}$ if $\beta_k=1$. Recall that the edge weights $w'$ of $G'$ are determined by the edge weights $w$ of $G$, so we could denote $A'(w')$ as $\mathcal{A}'(w)$, where the matrix $\mathcal{A}'(\cdot)$ incorporates the effect of the graph modifications $G\rightarrow G'$. We will now show that $\textrm{Pf}\left(A'\left(w'^{\alpha,\beta}\right)\right)=\textrm{Pf}\left(\mathcal{A}'(w^{\alpha,\beta})\right)$, where $w^{\alpha,\beta}$ is the set of edge weights $w$ of $G$ after we multiply by $-1$ the edge weight $w_e$ once for each time it belongs to a cocycle $E_{2k-1}$ for which $\alpha_k=1$, and once for each time it belongs to a cocycle $E_{2k}$ for which $\beta_k=1$.  Each nonzero term of the Pfaffian $\textrm{Pf}(A'(w')^{\alpha,\beta})$ depends on $w'$ only via the product of edge weights $w'^{\alpha,\beta}_e$ for the edges $e$ in a particular perfect matching of $G'$ (see Definition 1.3 in \cite{loeblpfaff}). For any edge $e\in E$ that was replaced by a set of edges $e_1...e_{2n+1}$ during the bridge splitting process, a perfect matching of $G'$ contains either none or all of $\{e_1,e_3...e_{2n+1}\}$. If $e\in E_k$, then there are an odd number of $e_{2j-1}$ that cross the cut $\mathcal{C}_k$. Multiplying the weights of all of these edges by $-1$ yields an overall minus sign for a term containing $\{e_1,e_3...e_{2n+1}\}$, which has the exact same effect as letting $w_e \rightarrow -w_e$ before bridge splitting.  If on the other hand $e$ crosses $\mathcal{C}_k$ but an even number of times, then there are an even number of $e_{2j-1}$ that cross the cut $\mathcal{C}_k$, and there is no effect on $\textrm{Pf}(A'(w')^{\alpha,\beta})$ from multiplying the weights of these edges by $-1$. Finally, with $\textrm{Pf}\left(A'\left(w'^{\alpha,\beta}\right)\right)=\textrm{Pf}\left(\mathcal{A}'(w^{\alpha,\beta})\right)$, Equation \ref{IsingPFgenform} holds up to a possible overall minus sign by Theorem 3.9 of \cite{loeblpfaff}. The possible minus sign depends upon $D_0$ and the structure of the graph $G'$, but not on the edge weights $w^{\alpha,\beta}$. So we may neglect it as it would only add an overall phase to $\langle \bar{\psi}|\phi\rangle$ in Equation \ref{overlapisinggen}.

\section{Proof of Theorem \ref{ge2suffthm}} \label{appproof}

We will show that under the assumptions of the theorem, if $$|\Phi\rangle = \sum_{j=1}^{s} |\chi_1^{j}\rangle|\chi_2^{j}\rangle...|\chi_{|E|}^{j}\rangle$$ for any set of single qubit states $|\chi_k^{j}\rangle$, then $s\ge D$. Our first step will be to be to isolate a single term of Equation \ref{defeff} by taking a partial inner product between $|\Phi\rangle$ and a particular state on the qubits in \textbf{A}.

In the following, the distinction between the even and odd numbered cocycles will not be important, so we simplify notation by writing the coefficients $\Psi_{\alpha,\beta}$ as $\Psi_{\alpha}$ where $\alpha$ is now a $2g$ component bitstring.  Then we can rewrite Equation \ref{defeff} as:
\begin{eqnarray*}
|\Phi\rangle
&=&\sum_{\alpha \in \{0,1\}^{\otimes 2g}} \Psi_{\alpha} \left(\prod_{k=1}^{2g} \prod_{e\in C'_{k}} Z_e^{\alpha_k}\right) \bigotimes_{e\in E}|\phi_e\rangle\nonumber\\
&=&\sum_{\alpha \in \{0,1\}^{\otimes 2g}} \Psi_{\alpha} \bigotimes_{e\in E}\left(Z_e\right)^{{[M\alpha]}_e}|\phi_e\rangle,
\end{eqnarray*}

where $M$ is the $|E|\times 2g$ matrix such that $M_{e,k}=1$ if $e\in C'_k$ and $M_{e,k}=0$ if $e\notin C'_k$, for all $e\in E$.  $[M\alpha]_e:=\sum_{k=1}^{2g} M_{e,k}*\alpha_k$.

Write $|\phi_{e}\rangle = a_e|0\rangle + b_e|1\rangle$ for any edge $e$.  Now we define $|\phi^{0,\perp}_{e}\rangle := b^*_e|0\rangle + a^*_e|1\rangle$, and $|\phi^{1,\perp}_{e}\rangle := b^*_e|0\rangle - a^*_e|1\rangle$.  It is easy to verify that for any edge $e$ and binary variable $\gamma_k \in \{0,1\}$ $$\langle\phi^{\gamma_k,\perp}_{e}|\left(Z_e\right)^{\alpha_k}|\phi_e\rangle = \delta_{\alpha_k,\gamma_k}2a_eb_e.$$

In particular, $|\phi^{1,\perp}_{e}\rangle$ is perpendicular to $|\phi_e\rangle$ for any edge $e$, while $|\phi^{0,\perp}_{e}\rangle$ is perpendicular to $Z_e|\phi_e\rangle$ for any edge $e$. First we write Equation \ref{defeff} in the form
\begin{eqnarray}
|\Phi\rangle
&=&\sum_{\alpha \in \{0,1\}^{\otimes 2g}} \Psi_{\alpha}|\phi_{rest}^{\alpha}\rangle\otimes|\phi_{\mathbf{A}}^{\alpha}\rangle\otimes|\phi_{\mathbf{B}}^{\alpha}\rangle\label{defeff2},
\end{eqnarray}
where $|\phi_{rest}^{\alpha}\rangle$ is a $\alpha$-dependent product state on all of the qubits in the complement of $\mathbf{A}\cup\mathbf{B}$ in $E$, and
$$|\phi_{\mathbf{A}}^{\alpha}\rangle:=\bigotimes_{k=1}^{2g}\left(Z_{e_k}\right)^{{[M\alpha]}_{e_k}}|\phi_{e_k}\rangle=\bigotimes_{k=1}^{2g}\left(Z_{e_k}\right)^{{[M_{\mathbf{A}}\alpha]}_{k}}|\phi_{e_k}\rangle.$$

The states $|\phi^{\gamma_k,\perp}_{e_k}\rangle$ for any $2g$ component bitstring $\gamma$ can now be used to pick out a single term in Equation \ref{defeff2}, because
$$\left(\bigotimes_{k=1}^{2g} \langle \phi^{\gamma_k,\perp}_{e_{k}}|\right)|\phi_{\mathbf{A}}^{\alpha}\rangle=\left(\prod_{k=1}^{2g} 2a_{e_k}b_{e_k}\right)\delta_{\gamma,[M_{\mathbf{A}}\alpha]}$$
and thus
\begin{eqnarray}
\left(\bigotimes_{k=1}^{2g} \langle \phi^{[M_{\mathbf{A}}\gamma]_k,\perp}_{e_{k}}|\right)|\Phi\rangle &=& \Psi_{\gamma}\left(\prod_{k=1}^{2g} 2a_{e_k}b_{e_k}\right)|\phi_{rest}^{\gamma}\rangle\otimes|\phi_{\mathbf{B}}^{\gamma}\rangle \nonumber.\\\label{s1}
\end{eqnarray}

The only value of $\alpha$ for which $[M_{\mathbf{A}}\alpha]=[M_{\mathbf{A}}\gamma]$ is $\alpha=\gamma$, because by assumption the square matrix $M_{\mathbf{A}}$ has full rank and hence is invertible. Since $|\phi_{e_k}\rangle$ is not a Z-eigenstate, $2a_{e_k}b_{e_k}$ is nonzero for each k. We can show that the states $\{|\phi_{rest}^{\gamma}\rangle\otimes|\phi_{\mathbf{B}}^{\gamma}\rangle\}$ for various bitstrings $\gamma$ are a linearly independent family of states.  This follows from the assumption of the second set $\mathbf{B}$ of non-Z eigenstate edges $\{e'_{k}\}$ for which $M_{\mathbf{B}}$ has full rank.  For we can repeat the above trick to show that each $|\phi_{rest}^{\gamma}\rangle\otimes|\phi_{\mathbf{B}}^{\gamma}\rangle$ has a component that is perpendicular to subspace spanned by the rest of the $|\phi_{rest}^{\gamma}\rangle\otimes|\phi_{\mathbf{B}}^{\gamma}\rangle$:
 \begin{displaymath}
   \left(\langle \phi_{rest}^{\gamma}|\otimes\bigotimes_{k=1}^{2g} \langle \phi^{[M_{\mathbf{B}}\gamma]_k,\perp}_{e'_{k}}|\right)|\phi_{rest}^{\alpha}\rangle\otimes|\phi_{\mathbf{B}}^{\alpha}\rangle \left\{
     \begin{array}{lr}
       =0 & : \alpha \ne \gamma\\
       \ne 0 & : \alpha = \gamma
     \end{array}
   \right..
\end{displaymath} 

The RHS is zero if $\alpha \ne \gamma$, but is a nonzero vector if $\alpha=\gamma$. So the state $|\phi_{rest}^{\gamma}\rangle\otimes|\phi_{\mathbf{B}}^{\gamma}\rangle$ has a component that lies along the vector $|\phi_{rest}^{\gamma}\rangle\otimes\left(\bigotimes_{k=1}^{2g} |\phi^{[M_{\mathbf{B}}\gamma]_k,\perp}_{e'_{k}}\rangle \right)$, but all of the other $|\phi_{rest}^{\alpha}\rangle\otimes|\phi_{\mathbf{B}}^{\alpha}\rangle$ are orthogonal to it. Thus $|\phi_{rest}^{\gamma}\rangle\otimes|\phi_{\mathbf{B}}^{\gamma}\rangle$ cannot be written as a linear combination of the others, for each $\gamma$.

Now let $|\Phi\rangle = \sum_{j=1}^{s} |\chi_1^{j}\rangle|\chi_2^{j}\rangle...|\chi_{|E|}^{j}\rangle$ be any other expansion of $|\Phi\rangle$ into some number $s$ of product states. Write it as
$$|\Phi\rangle = \sum_{j=1}^{s} |\chi_{E\backslash \mathbf{A}}^{j}\rangle\otimes|\chi_{\mathbf{A}}^{j}\rangle.$$
Then
\begin{eqnarray*} 
\left(\bigotimes_{k=1}^{2g} \langle \phi^{[M_{\mathbf{A}}\gamma]_k,\perp}_{e_{k}}|\right)|\Phi\rangle = \sum_{j=1}^s\left(\left(\bigotimes_{k=1}^{2g} \langle \phi^{[M_{\mathbf{A}}\gamma]_k,\perp}_{e_{k}}|\right)|\chi_{\mathbf{A}}^{j}\rangle\right)\\
|\chi_{E\backslash \mathbf{A}}^{j}\rangle.
\end{eqnarray*}

Comparing this with Equation \ref{s1}, we see that for each $\gamma$ for which $\Psi_{\gamma}$ is nonzero, $|\phi_{rest}^{\gamma}\rangle\otimes|\phi_{\mathbf{B}}^{\gamma}\rangle$ can be written as a linear combination of the $s$ states $|\chi_{E\backslash \mathbf{A}}^{j}\rangle$. Let $D$ be the number of such nonzero $\Psi_{\gamma}$. Since each $|\phi_{rest}^{\gamma}\rangle\otimes|\phi_{\mathbf{B}}^{\gamma}\rangle$ is linearly independent, there must be enough states $|\chi_{E\backslash \mathbf{A}}^{j}\rangle$ to span a $D$ dimensional space.  So, $s\ge D$. Since this applies to any decomposition of the form $|\Phi\rangle = \sum_{j=1}^{s} |\chi^1_{j}\rangle|\chi^2_{j}\rangle...|\chi^{|E|}_{j}\rangle$, we conclude that $E_{Sch}(|\Phi\rangle)=log_2{D}$. $\Box$

\section{Proof of Theorem \ref{mbqctheorem}} \label{appholes}

Specializing to punctured cylinder codes and the measurement ordering \textbf{LtoR} allows us to greatly simplify Equation \ref{prob}. We consider two separate cases in turn.

\subsection{Measurements between holes} \label{mmtbetweenholes}

We say that MBQC is ``between'' two holes when for some $k$, all of the edges in column $x_k+K_k$ are in the set $\widetilde{E}$, while all edges in column $x_{k+1}$ are still in the set $\hat{E}$. In this subsection we will show that

\begin{lemma} Theorem \ref{mbqctheorem} holds when computation is between holes.\end{lemma}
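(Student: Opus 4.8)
The plan is to specialize Equation \ref{prob} to the between-holes configuration and then recognize the resulting expression as an inner product on the doubled graph $G(\widetilde{E}_1)\cup G(\widetilde{E}_2)$. The first step is to pin down the sets $A$ and $B$. Because the canonical cocycles of Figure \ref{homcircuitgraphfig} are each localized around a single slot, when computation is between holes $k$ and $k+1$ the cocycles $C'_1,\dots,C'_{2k}$ lie entirely in the measured region $\widetilde{E}$ while $C'_{2k+1},\dots,C'_{2g}$ lie entirely in the unmeasured region $\hat{E}$. For each unmeasured hole the encoding cycle $C_j\subseteq\hat{E}$ satisfies $|C_j\cap C'_j|=1$, so $\{2k+1,\dots,2g\}\subseteq A$; and since $C'_j\cap\hat{E}=\emptyset$ for $j\le 2k$, no such $j$ lies in $A$ or in $B$. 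Hence $B=\emptyset$ and $A=\{2k+1,\dots,2g\}$, which collapses the phase factor in Equation \ref{prob} to $1$ and leaves the ket equal to $\sum_{u\in S}|K_{\widetilde{E}}(u)\otimes K_{\widetilde{E}}(u)\rangle$.

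Next I would identify the doubled graph $G':=G(\widetilde{E}_1)\cup G(\widetilde{E}_2)$. The measured subgraph $G(\widetilde{E})$ is itself a punctured cylinder carrying holes $1,\dots,k$ together with a left boundary (the left edge of $G$) and a right boundary (the clean column $x_k+K_k$ at which we cut). Gluing two copies along the shared vertical boundary $\partial\widetilde{E}$ yields a punctured cylinder with $2k$ holes whose two ends are the two copies of the original left boundary; thus $G'$ has genus $2k\le 2g$, as required by Theorem \ref{mbqctheorem}. By the $B=\emptyset$ discussion following Equation \ref{prob}, $\sum_{u\in S}|K_{\widetilde{E}}(u)\otimes K_{\widetilde{E}}(u)\rangle=\sqrt{2^{|\partial\widetilde{E}|-1}}\,|K(G')\rangle$, the encoded $|\bar{+}\rangle$ state of $G'$.

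The heart of the argument is then to read off the operators $\bar{Z}^{\gamma}_{\widetilde{E}_1}\bar{Z}^{\delta}_{\widetilde{E}_2}$ as encoded operators of the $G'$ code. For each $j\le 2k$ the set $C'_j\cap\widetilde{E}=C'_j$ is a closed cocycle that survives as a cocycle in each copy, and the $4k$ edge sets obtained from copies $1$ and $2$ form a homologically independent system of encoding cocycles for $G'$. Consequently $\bar{Z}^{\gamma}_{\widetilde{E}_1}\bar{Z}^{\delta}_{\widetilde{E}_2}|K(G')\rangle=|\bar{X}_{\mu(\gamma,\delta)}(G')\rangle$, where $\mu(\gamma,\delta)$ is the $4k$-bit label obtained from $(\gamma_1\dots\gamma_{2k},\delta_1\dots\delta_{2k})$. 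Writing $|\phi(G'(\widetilde{E}))\rangle:=|\phi_{\widetilde{E}}\rangle\otimes|\phi^*_{\widetilde{E}}\rangle$, which is manifestly a product state, Equation \ref{prob} becomes $p(|\phi_{\widetilde{E}}\rangle)=\alpha\sum_{\gamma,\delta}c_{\gamma}c^*_{\delta}\big(\prod_{j>2k}\delta_{\gamma_j,\delta_j}\big)\langle\phi(G'(\widetilde{E}))|\bar{X}_{\mu(\gamma,\delta)}(G'(\widetilde{E}))\rangle$, with $\alpha=2^{-(|\partial\widetilde{E}|-1)/2}$ a known proportionality.

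Finally I would collect the summation into a single code state. The constraint $\gamma_j=\delta_j$ for $j>2k$ means that the coefficient multiplying $|\bar{X}_{\mu}(G')\rangle$ is $C_{\mu}=\sum_{\nu}c_{(\gamma_{\le 2k},\nu)}c^*_{(\delta_{\le 2k},\nu)}$, i.e. the entries of the reduced density matrix of the encoded $2g$-qubit state obtained by tracing out the logical qubits of the unmeasured holes. Defining $|\bar{\psi}(G'(\widetilde{E}))\rangle:=\sum_{\mu}C_{\mu}|\bar{X}_{\mu}(G'(\widetilde{E}))\rangle$, which lies in the code space of $G'$ since the $C_{\mu}$ are valid expansion coefficients, gives $p(|\phi_{\widetilde{E}}\rangle)=\alpha\langle\phi(G'(\widetilde{E}))|\bar{\psi}(G'(\widetilde{E}))\rangle$, as claimed. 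The step I expect to be the main obstacle is the rigorous verification in the previous paragraph that the restricted cocycles on the two glued copies constitute exactly the encoding cocycles of the punctured-cylinder code on $G'$, together with the topological claim that the glued graph is again a punctured cylinder of the stated genus; once these are secured, the phase and normalization bookkeeping is routine.
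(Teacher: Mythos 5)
Your proposal is correct and follows essentially the same route as the paper's proof: identify $A=\{2k+1,\dots,2g\}$ and $B=\emptyset$ from the localization of the canonical cocycles, recognize the ket in Equation \ref{prob} as $\sqrt{2^{|\partial\widetilde{E}|-1}}\,|K(G(\widetilde{E}_1)\cup G(\widetilde{E}_2))\rangle$ for the genus-$2k$ doubled punctured cylinder, read the $\bar{Z}^{\gamma}_{\widetilde{E}_1}\bar{Z}^{\delta}_{\widetilde{E}_2}$ as mapping $|K(G')\rangle$ to encoded $X$ eigenstates under the doubled cocycle system, and absorb the constrained double sum into effective coefficients (your $C_{\mu}$ is exactly the paper's $\widetilde{c}$ of Equation \ref{cbardef1}, which you additionally and correctly interpret as reduced-density-matrix entries of the encoded state). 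The only presentational difference is that you supply an explicit justification for $\{2k+1,\dots,2g\}\subseteq A$ via the encoding cycles $C_j\subseteq\hat{E}$ with $|C_j\cap C'_j|=1$, which the paper merely asserts.
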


\begin{proof} With the encoding cocycles $C'_k$ chosen as depicted in Figure \ref{homcircuitgraphfig}, then the set $A$ from Equation \ref{prob} contains all of the values from $2k+1..2g$, and the set $B$ is empty. Furthermore, $C'_k$ lies entirely within the edge set $\widetilde{E}$ for $k\le 2k$. Then Equation \ref{prob} becomes
\begin{eqnarray}
p\left(|\phi_{\widetilde{E}}\rangle\right) &=& {1\over{2^{|\partial\widetilde{E}|-1}}}  \sum_{\gamma, \delta} c_{\gamma}c^*_{\delta}\prod_{j=2k+1}^{2g}\delta_{\gamma_j,\delta_j}\nonumber\\
&&\langle\phi_{\widetilde{E}}\otimes \phi^*_{\widetilde{E}}|\bar{Z}^{\gamma}_{\widetilde{E}_1}\bar{Z}^{\delta}_{\widetilde{E}_2}\sum_{u\in S}|K_{\widetilde{E}}(u)\otimes K_{\widetilde{E}}(u)\rangle\nonumber.\\ \label{probbtwholes}
\end{eqnarray}

In section \ref{genconsid}, we saw that the state $\sum_{u\in S}|K_{\widetilde{E}}(u)\otimes K_{\widetilde{E}}(u)\rangle$ is the logical +1 X eigenstate $|\bar{+}\rangle$ associated with a surface code on the effective graph $G(\widetilde{E}_1)\cup G(\widetilde{E}_2)$. In this setting, graph $G(\widetilde{E}_1)\cup G(\widetilde{E}_2)$ has a natural embedding on a surface of genus $2k$, where the first $k$ holes come from the subgraph $G(\widetilde{E}_1)$ and the second $k$ holes come from the subgraph $G(\widetilde{E}_1)$. The set of $4k$ encoding cocycles for a surface code on $G(\widetilde{E}_1)\cup G(\widetilde{E}_2)$ can be chosen to be $C'_1...C'_{2k}$ on the edges $\widetilde{E}_1$, along with $C'_1...C'_{2k}$ on the edges $\widetilde{E}_2$. Then, the state $$\bar{Z}^{\gamma}_{\widetilde{E}_1}\bar{Z}^{\delta}_{\widetilde{E}_2}|K(G(\widetilde{E}_1)\cup G(\widetilde{E}_2))\rangle$$
is precisely the encoded X eigenstate $|X_{\gamma_1...\gamma_{2k},\delta_1...\delta_{2k}}\rangle$ in the surface-code space for $G(\widetilde{E}_1)\cup G(\widetilde{E}_2)$. If we furthermore define
\begin{eqnarray}
\widetilde{c}_{\gamma_1...\gamma_{2k},\delta_1...\delta_{2k}} := \sum_{\substack{\gamma_{2k+1}...\gamma_{2g}\\ \delta_{2k+1}...\delta_{2g}\\ \in \{0,1\}}} c_{\gamma_1...\gamma_{2g}}c^*_{\delta_1...\delta_{2g}}\prod_{j=2k+1}^{2g}\delta_{\gamma_j,\delta_j} \nonumber, \\ \label{cbardef1}
\end{eqnarray} 
then the probability of a outcome on the edges in $\widetilde{E}$ from the original graph is exactly proportional to an inner product with a state in the code space of the surface code on $G(\widetilde{E}_1)\cup G(\widetilde{E}_2)$:
\begin{eqnarray}
p\left(|\phi_{\widetilde{E}}\rangle\right) &=& {1\over{\sqrt{2^{|\partial\widetilde{E}|-1}}}}\langle\phi_{\widetilde{E}}\otimes \phi^*_{\widetilde{E}}|\nonumber\\
&& \sum_{\substack{\gamma_{1}...\gamma_{2k}\\ \delta_{1}...\delta_{2k}}}\widetilde{c}_{\gamma_1...\gamma_{2k},\delta_1...\delta_{2k}}|X^{G'(\widetilde{E})}_{\gamma_1...\gamma_{2k},\delta_1...\delta_{2k}}\rangle\nonumber.\\ \label{probbtwholesfinal}
\end{eqnarray}

Here $\widetilde{c}$ is an effective tensor of coefficients in the encoded X-basis for a state in the surface-code space of $G'(\widetilde{E}):=G(\widetilde{E}_1)\cup G(\widetilde{E}_2)$. The inner product between this state and the product state $|\phi_{\widetilde{E}}\otimes \phi^*_{\widetilde{E}}\rangle$ yields the partial measurement probability. This confirms Theorem \ref{mbqctheorem} for the cases when computation is between holes. Now, we turn to the other stages of MBQC on a punctured cylinder code state.\end{proof}

\subsection{Measurements crossing holes} \label{crossingholessect}

If the boundary $\partial \widetilde{E}$ contains vertices in a column between $x_k$ and $x_k+K_k$ for any $k$, then some acrobatics are required to keep Equation \ref{prob} expressible in the simple form of Equation \ref{probbtwholes}. This scenario occurs as the computation ``crosses holes'' from left to right on the lattice $G$. Figure \ref{fig:holeclasses} below shows the part of a punctured cylinder graph $G$ around the $k^{th}$ hole. In particular, we will focus on the measurement steps after edge $a$ in Figure \ref{fig:holeclasses} has been measured, but before edge $b$ is measured. Before edge $a$ is measured, or after edge $b$ is measured, the situation is no more complicated than when computation is ``between holes''. In this subsection we will show that nevertheless,

\begin{lemma} Theorem \ref{mbqctheorem} holds when computation is crossing holes.\end{lemma}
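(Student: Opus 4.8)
The plan is to rerun the reduction that produced the between-holes result in Equation \ref{probbtwholes}, and to isolate precisely the one place where the crossing configuration breaks that argument: the set $B$ in Equation \ref{prob} is no longer empty. First I would pin down the sets $A$ and $B$ for the configuration of Figure \ref{fig:holeclasses}. With the cocycles of Figure \ref{homcircuitgraphfig}, the two cocycles of every fully unmeasured hole ($k+1,\dots,g$) are detected by cycles living entirely in $\hat{E}$ and hence lie in $A$, forcing $\gamma_j=\delta_j$ there; the cocycles of every fully measured hole ($1,\dots,k-1$) and the ``around'' cocycle of hole $k$ sit entirely in $\widetilde{E}$ and contribute to neither $A$ nor $B$. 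The only remaining index is the ``through'' cocycle $j_0$ of hole $k$, whose cut is severed by $\partial\widetilde{E}$: it meets both $\widetilde{E}$ and $\hat{E}$ but is not closed up by any cycle in $\hat{E}$, so $j_0\in B$ and $B=\{j_0\}$.

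The obstruction is then the single syndrome-dependent sign $(-1)^{(\gamma\oplus\delta)_{j_0}\,m(u)}$ with $m(u):=|\hat{z}(u)\cap C'_{j_0}|\bmod 2$ appearing in Equation \ref{prob}. My first technical step would be to note that $m$ is well defined and $\mathbb{Z}_2$-linear on the syndrome group $S$: independence of the representative $\hat{z}(u)$ is exactly the statement that $|\hat{y}\cap C'_{j_0}|$ is even for all $\hat{y}\in E_0(\hat{E})$, which is the condition $j_0\notin A$. Thus $(-1)^{m(u)}$ is a genuine character of $S$, and when $(\gamma\oplus\delta)_{j_0}=0$ the sum over $u$ collapses to the doubled-graph state $\sum_{u\in S}|K_{\widetilde{E}}(u)\otimes K_{\widetilde{E}}(u)\rangle$ exactly as in the between-holes case.

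The heart of the proof is to absorb the sign into the topology of the doubled graph. I would define $G'(\widetilde{E})$ by gluing two copies of $G(\widetilde{E})$ along $\partial\widetilde{E}$ as before, but now observe that because hole $k$ has been cut open into a slit, the cross-sectional boundary circles of the two copies are identified by the gluing and reconnect the two half-slots into a single additional handle; counting gives genus $2(k-1)+1=2k-1\le 2g$, and $G'(\widetilde{E})$ is again a punctured cylinder graph. I would then exhibit the logical-$Z$ cocycle $\bar{Z}_{\mathrm{new}}$ of this extra handle and check that its action on $|K_{\widetilde{E}}(u)\otimes K_{\widetilde{E}}(u)\rangle$ equals $(-1)^{m(u)}$, so that $\sum_{u}(-1)^{(\gamma\oplus\delta)_{j_0}m(u)}|K_{\widetilde{E}}(u)\otimes K_{\widetilde{E}}(u)\rangle \propto \bar{Z}_{\mathrm{new}}^{(\gamma\oplus\delta)_{j_0}}|\bar{+}\rangle_{G'(\widetilde{E})}$. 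Combined with the operators $\bar{Z}^{\gamma}_{\widetilde{E}_1}\bar{Z}^{\delta}_{\widetilde{E}_2}$ acting as logical $Z$'s on the crossed-hole cocycles and the around-cocycle of hole $k$, the entire ket becomes an encoded-$X$ eigenstate of the surface code on $G'(\widetilde{E})$. Summing out the $A$-indices against the Kronecker deltas then packages everything into an effective coefficient tensor $\widetilde{c}$ exactly as in Equation \ref{cbardef1}, yielding $p\left(|\phi_{\widetilde{E}}\rangle\right)=\alpha\langle\phi_{\widetilde{E}}\otimes\phi^*_{\widetilde{E}}|\bar{\psi}(G'(\widetilde{E}))\rangle$ and establishing Theorem \ref{mbqctheorem} in this case.

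The main obstacle I anticipate is the bookkeeping in this last step: verifying carefully that the reconnected slit contributes exactly one handle (so that the genus is $2k-1$ and $G'(\widetilde{E})$ stays within the punctured-cylinder family), and matching the homological bookkeeping so that $\bar{Z}_{\mathrm{new}}$ reproduces the character $(-1)^{m(u)}$ on the nose rather than only up to a $u$-independent phase. Establishing that $m$ is precisely the syndrome component conjugate to the new handle---i.e. that $|\hat{z}(u)\cap C'_{j_0}|$ equals the crossing number of the new logical cocycle with the representative cycle on the doubled graph---is the crux, and is where the special geometry of the punctured cylinder and the choice of cocycles in Figure \ref{homcircuitgraphfig} must be used.
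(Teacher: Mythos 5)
Your proposal diverges from the paper at the central obstacle. The paper's move is to \emph{eliminate} the nonempty $B$: as soon as edge $a$ is measured it replaces $C'_{2k-1}$ by a homologous cocycle lying entirely in $\widetilde{E}$ (right panel of Figure \ref{fig:holeclasses}), so that $B=\emptyset$ again and Equation \ref{prob} collapses to the between-holes form \ref{probbtwholes}. You instead keep $B=\{j_0\}$ and absorb the syndrome-dependent character $(-1)^{m(u)}$ into a logical $Z$ on the extra handle of the doubled graph. Your justification that $m$ is well defined and $\mathbb{Z}_2$-linear on $S$ (precisely because $j_0\notin A$) is correct, and the identity you want is indeed provable: since $|\hat{y}\cap C'_{j_0}|$ is even for all $\hat{y}\in E_0(\hat{E})$, the set $C'_{j_0}\cap\hat{E}$ is a sum of vertex stars $\bigoplus_{s\in W'}\delta s\cap\hat{E}$, whence $m(u)=\sum_{s\in W'\cap\partial\widetilde{E}}u_s$, which is realized exactly by the $Z$-type operator $\prod_{e\in\bigoplus_{s\in W'\cap\partial\widetilde{E}}\delta s\cap\widetilde{E}}Z_e$ on one copy of $\widetilde{E}$. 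So the core idea is sound and arguably more intrinsic than the paper's cocycle swap, which buys a shorter reduction at the price of a figure-dependent homology argument.

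There are, however, genuine gaps. First, you treat $G(\widetilde{E}_1)\cup G(\widetilde{E}_2)$ as if it were already a cellularly embedded punctured cylinder graph of genus $2k-1$; it is not, because the measurement front is partway down a column and the glued object has a ragged cut. The paper needs the \emph{edge addition} and \emph{vertex splitting} moves (edges measured in $|0\rangle$ and $|+\rangle$ respectively, each costing only a known factor of $\sqrt{2}$) to produce a genuine punctured cylinder graph $G'(\widetilde{E})$, and without some such surgery the ``new handle'' and its logical $\bar{Z}_{\mathrm{new}}$ are not well defined as code objects. Second, your accounting of the cocycles of hole $k$ is off: the through-cocycle $C'_{2k}$ does \emph{not} lie entirely in $\widetilde{E}$ during the crossing; it is split across $\widetilde{E}$ and $\hat{E}$ while remaining in $A$ (a cycle in $\hat{E}$ still detects it until edge $b$ is measured). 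Its restriction $C'_{2k}\cap\widetilde{E}$ is therefore not a cocycle of the doubled graph, and the operator $\bar{Z}^{\gamma}_{\widetilde{E}_1}$ does not act as a logical $Z$ there; the paper must ``complete'' this partial cocycle to $\bar{C}'_{2k}$ using added $|0\rangle$-measured edges before the ket can be identified with an encoded $X$ eigenstate and the coefficients repackaged as in Equation \ref{cbardef2}. Your proof would need both of these ingredients in addition to the (flagged but unproven) identification of $\bar{Z}_{\mathrm{new}}$ with the character $(-1)^{m(u)}$ before it reaches the form \ref{probcrossingholesfinal}.
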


\begin{figure}[!h]
\centering
\includegraphics[width=3in]{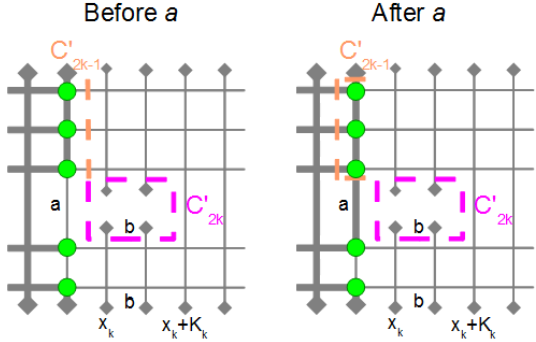}
\caption{(Color online) The part of a punctured cylinder graph around the $k^{th}$ hole.  Two stages are depicted, just before and just after the edge $a$ is measured.  The set $\widetilde{E}$ is shown in bold, and the vertices in $\partial \widetilde{E}$ are marked by circles(green). A choice of the non-trivial cocycles $C'_{2k-1}$ and $C'_{2k}$ that are convenient for each step are shown as dotted lines (orange and purple, respectively).}
\label{fig:holeclasses}
\end{figure}

\begin{proof} On the left side of Figure \ref{fig:holeclasses}, we show the relevant encoding cocycles $C'_{2k-1}$ and $C'_{2k}$, chosen in accordance with 
Figure \ref{homcircuitgraphfig}. From this and the \textbf{LtoR} ordering, it is clear that as soon as the edge $a$ is measured, the set $B$ is no longer empty. Rather, $B=\{2k-1\}$ i.e., there exists no $\hat{x}\in E_0(\hat{E})$ such that $|\hat{x} \cap C'_{2k-1}|=1$, yet $C'_{2k-1} \cap \hat{E} \ne \emptyset$. This is because there is no cycle that can ``wrap around'' the $k^{th}$ hole without using the edge $a$ or one to its left. With $B\ne \emptyset$, Equation \ref{prob} becomes more complicated. However, we can avoid this by considering the alternative encoding cocycle $C'_{2k-1}$ depicted on the right side of Figure \ref{fig:holeclasses} as soon as the edge $a$ is measured. This cocycle is homologous to the first (they differ only by the bitwise addition of $\delta_v$ for a set of vertices $v$) and hence their effect on the surface-code space is identical.

With $C'_{2k-1}$ chosen in this way, we have $A=\{2k...2g\}$ and $B=\emptyset$. Furthermore, $C'_j \in \widetilde{E}$ for all $j=1...2k-1$. Equation \ref{prob} takes the form, like Equation \ref{probbtwholes}
\begin{eqnarray}
p\left(|\phi_{\widetilde{E}}\rangle\right) &=& {1\over{\sqrt{2^{|\partial\widetilde{E}|-1}}}}  \sum_{\gamma, \delta} c_{\gamma}c^*_{\delta}\prod_{j=2k}^{2g}\delta_{\gamma_j,\delta_j}\nonumber\\
&&\langle\phi_{\widetilde{E}}\otimes \phi^*_{\widetilde{E}}|\bar{Z}^{\gamma}_{\widetilde{E}_1}\bar{Z}^{\delta}_{\widetilde{E}_2}|K(G(\widetilde{E}_1)\cup G(\widetilde{E}_2))\rangle\nonumber.\\ \label{probcrossingholes}
\end{eqnarray}

What remains now is to define a natural embedding of the graph $G(\widetilde{E}_1)\cap G(\widetilde{E}_2)$, which requires a more complicated topology than in the case of measurements between holes. To aid in this, we will employ two graph manipulations that only affect the overlap between $|K(G)\rangle$ and a product state up to a constant of proportionality. For any connected graph $G$, we may perform the following operations: 
\begin{itemize}
 \item \textit{Edge addition}: We may add an edge $e$ to G, then measure the qubit associated with the added edge to be in the $|0\rangle$ state. The edge can be added between existing vertices on G, or by adding a new vertex and connecting it to G with the new edge. Call the new graph obtained after edge addition $G'$. Then: $\langle 0_e|K(G')\rangle={1\over\sqrt{2}}|K(G)\rangle$.
 \item \textit{Vertex splitting}: We can split any vertex into two, and add an edge $e$ in between the two resultant vertices. The edges incident on the vertex that is split can be divided arbitrarily between the two resultant vertices. Then measure the new qubit to be in the $|+\rangle$ state. Call the new graph obtained by vertex splitting $G'$. Then: $\langle +_e|K(G')\rangle={1\over{\sqrt{2}}}|K(G)\rangle$
\end{itemize}

Using edge addition and vertex splitting \footnote{We note that the edge addition and vertex splitting are exactly the opposite of the \textit{graph minor} operations: edge contraction, edge deletion, and deletion of isolated vertices.  This implies that the class of graphs under consideration is \textit{minor closed}. In principle, this means that our considerations on the punctured cylinder graphs are applicable to any graph, because the family of all punctured cylinder graphs contains every graph as a minor.  This follows from a result by Robertson and Seymour \cite{robertsonseymour} to the effect that for any two graphs $G$ and $H$ that can be embedded on a surface $S$ of genus $g\ge 1$, $H$ is a minor of $G$ if the face-width of $G$ is at least $k(H)$, where $k(H)$ is an integer that depends on the graph $H$.  Face-width is the minimum number of edges of a graph that any non-contractible loop on $S$ must cross, which is a controllable parameter within the family of punctured cylinder graphs. However, this result is not of practical use here without knowledge of how $k(H)$ scales with the size and genus of $H$.}, we transform the graph $G(\widetilde{E}_1)\cap G(\widetilde{E}_2)$ into an effective graph $G'(\widetilde{E})$ that has a natural embedding on a surface of genus $2k-1$. An example of this is shown in Figure \ref{fig:astep}.
\begin{figure}[h]
\includegraphics[width=2.5in]{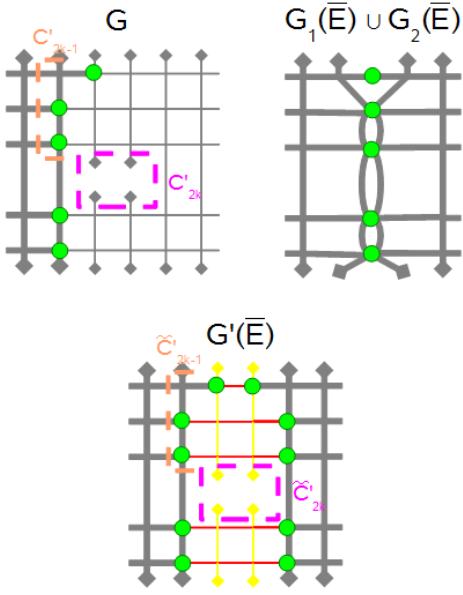}
\caption{(Color online) \label{fig:astep}The graphs $G$, $G(\widetilde{E}_1)\cap G(\widetilde{E}_2)$, and $G'(\widetilde{E})$ for a ``crossing hole" step of MBQC. The vertical nonbold edges of $G'(\widetilde{E})$ (yellow) are measured in the $|0\rangle$ state, while the horizontal nonbold edges(red) are measured in the $|+\rangle$ state. Encoding cocycles are shown for $G$ and $G'(\widetilde{E})$.}
\end{figure}

A surface code on $G'(\widetilde{E})$ encodes $4k-2$ qubits.  The encoding cocycles $\bar{C}'_1$...$\bar{C}'_{4k-2}$ on the embedded graph $G'(\widetilde{E})$ can be chosen as follows: let the first $2k-2$ cocycles be $\bar{C}'_j:=C'_j$ applied to the edges $\widetilde{E}_1$, and the last $2k-2$ cocycles be $\bar{C}'_{j+2k}:=C'_j$ applied to the edges $\widetilde{E}_2$.  The cocycle for qubit numbered $2k-1$ can be chosen as the cocycle $\bar{C}'_{2k-1}:=C'_{2k-1}$ applied to the edge set $\widetilde{E}_1$. Finally the cocycle $\bar{C}'_{2k}$ for qubit $2k$ belongs to the newly added edges, as depicted in Figure \ref{fig:astep}.

Let $\bar{E}$ denote the edges which are added to $G_1(\widetilde{E}) \cup G_2(\widetilde{E})$ to construct $G'(\widetilde{E})$, and let $|\bar{\phi}_{\bar{E}}\rangle$ denote a tensor product of the $|+\rangle$ state for each of the horizontal edges (added by vertex splitting), and $|0\rangle$ for each of the vertical edges (added by edge addition). We can recast Equation \ref{probcrossingholes} as 
\begin{eqnarray}
p\left(|\phi_{\widetilde{E}}\rangle\right) &=& {1\over{\sqrt{2^{|\partial\widetilde{E}|-|\bar{E}|-1}}}}  \sum_{\gamma, \delta} c_{\gamma}c^*_{\delta}\prod_{j=2k}^{2g}\delta_{\gamma_j,\delta_j}\nonumber\\
&&\langle\phi_{\widetilde{E}}\otimes\bar{\phi}\otimes \phi^*_{\widetilde{E}}|\bar{Z}^{\gamma}_{\widetilde{E}_1}\bar{Z}^{\delta}_{\widetilde{E}_2}|K(G'(\widetilde{E})\rangle\nonumber,\\ \label{probcrossingholes2}
\end{eqnarray}

where we can take $\bar{Z}^{\gamma}_{\widetilde{E}_1}\bar{Z}^{\delta}_{\widetilde{E}_2}$ to be
$$\prod_{e\in\bar{C}'_{2k-1}}Z_e^{(\gamma\oplus\delta)_{2k-1}}\prod_{j=1}^{2k-2}\left(\prod_{e\in\bar{C}'_{j}}Z_e^{\gamma_j}\prod_{e\in\bar{C}'_{j+2k}}Z_e^{\delta_j}\right)$$
which depends only on the bitwise sum $(\gamma\oplus\delta)_{2k-1}$ because the cocycle $C'_{2k-1}$ applied to the edge set $\widetilde{E}_2$ is homologous to $\bar{C}'_{2k-1}$ on the graph $G'(\widetilde{E})$. So, if both $\gamma_{2k-1}$ and $\delta_{2k-1}$ are equal to one, there is no overall effect on the state $|K(G'(\widetilde{E})\rangle$.

Now, since all of the edges in the set $\bar{C}'_{2k}$ are measured in the state $|0\rangle$, we may insert the operator $\bar{Z}_{2k}:=\prod_{e\in\bar{C}'_{2k}}Z_e^{\gamma_{2k}}$ with impunity. Then
$$\bar{Z}^{\gamma}_{\widetilde{E}_1}\bar{Z}^{\delta}_{\widetilde{E}_2}\prod_{e\in\bar{C}'_{2k}}Z_e^{\gamma_{2k}}|K(G'(\widetilde{E})\rangle$$
is precisely the encoded X eigenstate $$|X^{G'(\widetilde{E})}_{\gamma_1...\gamma_{2k-2},(\gamma\oplus\delta)_{2k-1},\gamma_{2k},\delta_1...\delta_{2k-2}}\rangle$$ in the surface-code space of $G'(\widetilde{E})$. If we now define
\begin{eqnarray}
\bar{c}_{\gamma_1...\gamma_{2k},\delta_1...\delta_{2k-2}} &:=& \sum_{\substack{\gamma_{2k+1}...\gamma_{2g}\\ \delta_{2k-1}...\delta_{2g}}} c_{\gamma_1...\gamma_{2k-2},(\gamma\oplus\delta)_{2k-1},\gamma_{2k}...\gamma_{2g}}\nonumber\\
&&c^*_{\delta_1...\delta_{2g}}\prod_{j=2k}^{2g}\delta_{\gamma_j,\delta_j} \label{cbardef2},
\end{eqnarray}
then the probability of a outcome on the edges in $\widetilde{E}$ from the original graph is exactly proportional to an inner product with a state in the code space of the surface code on $G'(\widetilde{E})$:
\begin{eqnarray}
p\left(|\phi_{\widetilde{E}}\rangle\right) &=& {1\over{\sqrt{2^{|\partial\widetilde{E}|-1}}}}\langle\phi_{\widetilde{E}}\otimes \phi^*_{\widetilde{E}}|\nonumber\\
&& \sum_{\substack{\gamma_{1}...\gamma_{2k}\\ \delta_{1}...\delta_{2k-2}}}\widetilde{c}_{\gamma_1...\gamma_{2k},\delta_1...\delta_{2k-2}}|X^{G'(\widetilde{E})}_{\gamma_1...\gamma_{2k},\delta_1...\delta_{2k-2}}\rangle\nonumber,\\ \label{probcrossingholesfinal}
\end{eqnarray}
which again takes the form of the inner product between a surface-code state and product state.  One can find a suitable $G'(\widetilde{E})$ to put $p\left(|\phi_{\widetilde{E}}\rangle\right)$ into the form of Equation \ref{probcrossingholesfinal} at all MBQC stages while crossing a hole; we have shown just one example of such a stage. During later stages the encoding cocycle $C'_{2k}$ will be split across the measured and unmeasured edges: $C'_{2k}\cap \widetilde{E}\ne\emptyset$ and $C'_{2k}\cap \hat{E}\ne\emptyset$.  However, we can always still ``complete'' the partial cocycle $C'_{2k}\cap \widetilde{E}$ from $G$ to a cocycle $\bar{C}'_{2k}$ on $G'(\widetilde{E})$ by adding edges from $\bar{E}$ that are measured in the $|0\rangle$ state. An example of this is shown in Figure \ref{fig:anotherstep}. Note that given our ordering of measurements, there still exists an $\hat{x} \in E_0(\hat{E})$ such that $|\hat{x}\cap C'_{2k}|=1$ until the edge $b$ from Figure \ref{fig:holeclasses} is measured.  Yet, once $b$ is measured $C'_{2k}\in \widetilde{E}$, so $2k \notin B$ and Equation \ref{probcrossingholesfinal} holds for all stages. This completes the proof of Theorem \ref{mbqctheorem} for all stages of computation.
\begin{figure}[t]
\includegraphics[width=2.5in]{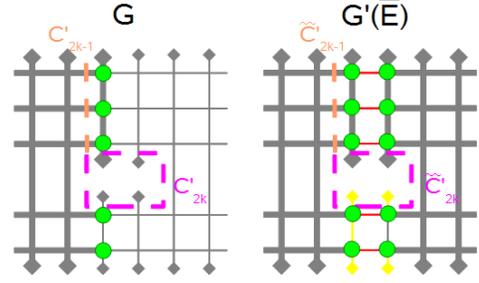}
\caption{(Color online) \label{fig:anotherstep}The graphs $G$ and $G'(\widetilde{E})$ during a stage of MBQC where $C'_{2k}$ is split across $\widetilde{E}$ and $\hat{E}$.}
\end{figure}
\end{proof}

\section{MBQC with the states $|\bar{C}^{\alpha,\beta}\rangle$} \label{specialstatesapp}

With Theorem \ref{mbqctheorem}, we have reduced the problem of simulating MBQC on punctured cylinder code states with \textbf{LtoR} to the evaluation of an inner product
\begin{equation} \label{effstate}
\langle\phi(G'(\widetilde{E}))|\left(\sum_{\gamma \in \{0,1\}^{\otimes 2g'}} \widetilde{c}_{\gamma}|X^{G'(\widetilde{E})}_\gamma\rangle\right),
\end{equation}
where $G'(\widetilde{E})$ is an effective lattice of genus $g'=2k$ or $2k-1$, $k$ is the number of holes in the set of qubits that have already been measured, and $|\phi(G'(\widetilde{E}))\rangle$ is a product state. Recall that in the associated encoded X-eigenbasis(corresponding to a canonical polygonal schema), the state $|\bar{C}^{\alpha,\beta}\rangle$ has coefficients
$$c_{\gamma, \rho} : = {1\over{2^g}}\prod_{j=1}^{g}(-1)^{\alpha_j\beta_j+(\alpha\oplus\gamma)_j(\beta\oplus\rho)_j},$$
where the notation $c_{\gamma, \rho}$ separates the odd and even numbered encoded qubits into two g-component bitstrings $\gamma$ and $\rho$. Here we will show that for MBQC with punctured cylinder code states $|\bar{C}^{\alpha,\beta}\rangle$, the tensor $\widetilde{c}_{\gamma, \rho}$ takes this same form, and thus the state 
$$\sum_{\gamma, \rho \in \{0,1\}^{\otimes g'}} \widetilde{c}_{\gamma, \rho} |X^{G'(\widetilde{E})}_{\gamma,\rho}\rangle$$
in Equation \ref{effstate} can be interpreted as a state $|\bar{C}^{\alpha',\beta'}\rangle$ in the code space of the surface code on the effective graph $G'(\widetilde{E})$, for some $\alpha', \beta' \in \{0,1\}^{\otimes g'}$. Then the efficiency of sampling follows by Equation \ref{specialstateeq}. Here the notation associates $\gamma$ with the even numbered qubits and $\rho$ with the odd: e.g. $\widetilde{c}_{\gamma, \rho}:=\widetilde{c}_{\gamma_1,\rho_1,\gamma_2..\rho_{g'}}$ (note the possible confusion with Equations \ref{probbtwholesfinal} and \ref{probcrossingholes}). 

To verify the above claim, we begin with the case where computation is between holes.  Using the definition of the $\widetilde{c}$ coefficients (Equation \ref{cbardef1}), after the summation $\widetilde{c}_{\gamma_1...\gamma_{k}\delta_1...\delta_{k},\rho_1...\rho_{k}\epsilon_1...\epsilon_k}$ works out to be:
\begin{eqnarray*}
{1\over{2^{2k}}}\prod_{j=1}^{k}(-1)^{\alpha_j\beta_j+(\alpha\oplus\gamma)_j(\beta\oplus\rho)_j}(-1)^{\alpha_j\beta_j+(\alpha\oplus\delta)_j(\beta\oplus\epsilon)_j}.
\end{eqnarray*}
This is exactly the tensor of coefficients for the state $|\bar{C}^{\alpha',\beta'}\rangle$ in the code space of a punctured cylinder code with $2k$ slots, labelled by bitstrings that are symmetric between the first and last $k$ entries: $\alpha':=\alpha\&\alpha, \beta':=\beta\&\beta$, where $\&$ denotes concatenation. The encoded Z cocycles are again those of a canonical encoding scheme, so local overlaps with $|\bar{C}^{\alpha',\beta'}\rangle$ can be computed efficiently in $|E|$ and $g$.

When crossing holes, we perform the summation of Equation \ref{cbardef2} for $\widetilde{c}_{\gamma_1...\gamma_{k}\delta_1...\delta_{k-1},\rho_1...\rho_{k}\epsilon_1...\epsilon_{k-1}}$  to obtain:
\begin{eqnarray*}
&&{1\over{2^{2k}}}\prod_{j=1}^{k-1}(-1)^{(\alpha\oplus\gamma)_j(\beta\oplus\rho)_j}(-1)^{(\alpha\oplus\delta)_j(\beta\oplus\epsilon)_j}\\
&&\sum_{\delta_k\in\{0,1\}}(-1)^{(\alpha\oplus\gamma\oplus\delta)_k(\beta\oplus\rho)_k}(-1)^{(\alpha\oplus\delta)_k(\beta\oplus\rho)_k}\\
&=&{1\over{2^{2k-1}}}\prod_{j=1}^{k-1}(-1)^{(\alpha\oplus\gamma)_j(\beta\oplus\rho)_j}(-1)^{(\alpha\oplus\delta)_j(\beta\oplus\epsilon)_j}\\
&&(-1)^{\gamma_k(\beta\oplus\rho)_k},
\end{eqnarray*} 
which is again the tensor describing $|\bar{C}^{\alpha',\beta'}\rangle$ in the code space of the surface code for $G'(\widetilde{E})$, where $\alpha':=\alpha_1,...\alpha_{k-1},0,\alpha_1,...\alpha_{k-1}$ and $\beta':=\beta_1,...\beta_k,\beta_1...\beta_{k-1}$. $\Box$



\bibliography{biblio}

\end{document}